\definecolor{DarkGreen}{rgb}{0.1,0.5,0.1}
\definecolor{DarkRed}{rgb}{0.5,0.1,0.1}
\definecolor{DarkBlue}{rgb}{0.1,0.1,0.5}
\newtheorem{theorem}{Theorem}[section]
\newtheorem*{namedtheorem}{\theoremname}
\newcommand{\theoremname}{testing}
\newtheorem{lemma}[theorem]{Lemma}
\newtheorem{claim}[theorem]{Claim}
\newtheorem{prop}[theorem]{Proposition}
\newtheorem{fact}[theorem]{Fact}
\newtheorem{corollary}[theorem]{Corollary}
\newtheorem{condition}[theorem]{Condition}
\newtheorem*{question*}{Question}
\theoremstyle{plain}
\theoremstyle{definition}
\newtheorem{definition}[theorem]{Definition}
\newtheorem{remark}[theorem]{Remark}
\newcommand{\ignore}[1]{}
\newcommand{\dimt}[2]{\dim_{#1}^{\tau}\left( #2 \right)}
\renewcommand{\Pr}{\mathop{\bf Pr\/}}                    
\newcommand{\E}{\mathop{\bf E\/}}
\newcommand{\poly}{\mathrm{poly}}
\newcommand{\xtil}{\widetilde{x}}
\newcommand{\ytil}{\widetilde{y}}
\newcommand{\util}{\widetilde{x}}
\newcommand{\iprod}[1]{\langle #1 \rangle}
\newcommand{\norm}[1]{\lVert #1 \rVert}
\newcommand{\ha}[1]{\widetilde{#1}}
\DeclareMathOperator{\spn}{span}
\newcommand{\R}{\mathbb R}
\newcommand{\N}{\mathbb N}
\newcommand{\eps}{\varepsilon}
\newcommand{\ot}{\otimes}
\newcommand{\calM}{\mathcal{M}}
\newcommand{\calN}{\mathcal{N}}
\newcommand{\calS}{\mathcal{S}}
\newcommand{\calU}{\mathcal{U}}
\newcommand{\calV}{\mathcal{V}}
\newcommand{\calW}{\mathcal{W}}
\newcommand{\kr}[1]{\text{Krank}_{#1}}
\newcommand{\vect}{\text{vec}}
\newcommand{\abs}[1]{\left\lvert #1 \right\rvert}
\newcommand{\set}[1]{\left\{ #1 \right\}}
\newcommand{\Rstz}[2]{#1^{*#2}}
\newcommand{\extz}[2]{\text{Ext}_{*#1}\left( #2 \right)}
\newcommand{\transpose}[1]{{#1}^{\sc T}}
\newcommand{\Alpha}{\mathbf{\alpha}}
\newcommand{\Beta}{\mathbf{\beta}}
\newcommand{\Util}{\widetilde{U}}
\newcommand{\Vtil}{\widetilde{V}}
\newcommand{\krp}{\odot}
\newcommand{\spc}[2]{{#1}^{(#2)}}
\renewcommand{\floatc@ruled}[2]{\vspace{2pt}{\@fs@cfont \#1.\:} \#2 \par
 \vspace{1pt}}
\renewenvironment{proof}[1][\proofname]{\medskip\noindent{\bfseries #1: }}{$\blacksquare$\vskip \belowdisplayskip}
\title{Smoothed Analysis of Tensor Decompositions}
\author{Aditya Bhaskara\thanks{Google Research NYC. Email: \textsf{bhaskara@cs.princeton.edu}. Work done while the author was at EPFL, Switzerland.} \and Moses Charikar\thanks{Princeton University. Email: \textsf{moses@cs.princeton.edu}. Supported by NSF awards CCF 0832797, AF 1218687 and CCF 1302518} \and Ankur Moitra\thanks{Massachusetts Institute of Technology, Department of Mathematics and CSAIL. Email: \textsf{moitra@mit.edu}. Part of this work was done while the author was a postdoc at the Institute for Advanced Study and was supported in part by NSF grant No.DMS-0835373 and by an NSF Computing and Innovation Fellowship.} \and Aravindan Vijayaraghavan\thanks{Carnegie Mellon University. Email: \textsf{aravindv@cs.cmu.edu}. Supported by the Simons Postdoctoral Fellowship.}}
\begin{document}
\date{}
\maketitle

\setcounter{page}{0} \thispagestyle{empty}

\begin{abstract}

Low rank decomposition of tensors is a powerful tool for learning generative models. The uniqueness of decomposition gives tensors a significant advantage over matrices. However, tensors pose significant algorithmic challenges and tensors analogs of much of the matrix algebra toolkit are unlikely to exist because of hardness results. Efficient decomposition in the overcomplete case (where rank exceeds dimension) is particularly challenging. We introduce a smoothed analysis model for studying these questions and develop an efficient algorithm for tensor decomposition in the highly overcomplete case (rank polynomial in the dimension). In this setting, we show that our algorithm is robust to inverse polynomial error -- a crucial property for applications in learning since we are only allowed a polynomial number of samples. While algorithms are known for exact tensor decomposition in some overcomplete settings, our main contribution is in analyzing their stability in the framework of smoothed analysis. 

Our main technical contribution is to show that tensor products of perturbed vectors are linearly independent in a robust sense (i.e. the associated matrix has singular values that are at least an inverse polynomial). This key result paves the way for applying tensor methods to learning problems in the smoothed setting. In particular, we use it to obtain results for learning multi-view models and mixtures of axis-aligned Gaussians where there are many more ``components'' than dimensions. The assumption here is that the model is not adversarially chosen, formalized by a perturbation of model parameters. We believe this an appealing way to analyze realistic instances of learning problems, since this framework allows us to overcome many of the usual limitations of using tensor methods. 

\end{abstract}

\newpage

\section{Introduction}

\subsection{Background}

Tensor decompositions play a central role in modern statistics (see e.g. \cite{M}). To illustrate their usefulness, suppose we are given a matrix $M = \sum_{i=1}^R a_i \otimes b_i$ When can we {\em uniquely} recover the factors $\{a_i\}_i$ and $\{b_i\}_i$ of this decomposition given access to $M$? In fact, this decomposition is almost never unique (unless we require that the factors $\{a_i\}_i$ and $\{b_i\}_i$ are orthonormal, or that $M$ has rank one). But given a tensor $T = \sum_{i=1}^R a_i \otimes b_i \otimes c_i$ there are general conditions under which $\{a_i\}_i$, $\{b_i\}_i$ and $\{c_i\}_i$ are uniquely determined (up to scaling) given $T$; perhaps the most famous such condition is due to Kruskal \cite{Kru}, which we review in the next section. 

Tensor methods are commonly used to establish that the parameters of a generative model can be identified given third (or higher) order moments. In contrast, given just second-order moments (e.g. $M$) we can only hope to recover the factors up to a rotation. This is called the {\em rotation problem} and has been an important issue in statistics since the pioneering work of psychologist Charles Spearman (1904) \cite{S}. Tensors offer a path around this obstacle precisely because their decompositions are often unique, and consequently have found applications in phylogenetic reconstruction \cite{C}, \cite{MR}, hidden markov models \cite{MR}, mixture models \cite{HK}, topic modeling \cite{AFHKL}, community detection \cite{AGHK}, etc. 

However most tensor problems are hard: computing the rank \cite{H}, the best rank one approximation \cite{HL} and the spectral norm \cite{HL} are all $NP$-hard. Also many of the familiar properties of matrices do not generalize to tensors. For example, subtracting the best rank one approximation to a tensor can actually increase its rank \cite{SC} and there are rank three tensors that can be approximated arbitrarily well by a sequence of rank two tensors. One of the few algorithmic results for tensors is an algorithm for computing tensor decompositions in a restricted case. Let $A, B$ and $C$ be matrices whose columns are $\{a_i\}_i$, $\{b_i\}_i$ and $\{c_i\}_i$ respectively. 

\begin{theorem} \cite{LRA}, \cite{C}
If $\mbox{rank}(A) = \mbox{rank}(B) = R$ and no pair of columns in $C$ are multiples of each other, then there is a polynomial time algorithm to compute the minimum rank tensor decomposition of $T$. Moreover the rank one terms in this decomposition are unique (among all decompositions with the same rank). 
\end{theorem}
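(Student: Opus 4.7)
The plan is to describe what is now usually called Jennrich's algorithm (the Leurgans--Ross--Abelson / Chang construction). The idea is to reduce the tensor decomposition problem to an eigendecomposition problem by taking two random linear combinations of the third-mode slices of $T$.

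First I would pick two random vectors $x,y$ in the ambient space of the $c_i$'s and consider the two matrices
\[
T_x = \sum_{i=1}^R \langle c_i, x\rangle \, a_i b_i^{\top} = A D_x B^{\top}, \qquad T_y = \sum_{i=1}^R \langle c_i, y\rangle \, a_i b_i^{\top} = A D_y B^{\top},
\]
where $D_x, D_y$ are diagonal with entries $\langle c_i,x\rangle$ and $\langle c_i,y\rangle$. Because $A$ and $B$ have full column rank $R$, they admit left-inverses $A^+$ and $B^+$; and because $x,y$ are random, each $\langle c_i,y\rangle$ is nonzero almost surely, so $D_y$ is invertible. Then one computes
\[
M := T_x \, T_y^{+} = A\, D_x D_y^{-1}\, A^{+},
\]
which is a matrix whose (right-)eigenvectors are exactly the columns $a_i$ of $A$, with eigenvalues $\lambda_i = \langle c_i,x\rangle/\langle c_i,y\rangle$.

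The key step is arguing that the eigenvalues $\lambda_i$ are all distinct, so that the eigendecomposition of $M$ uniquely identifies the columns of $A$ up to permutation and scaling. The hypothesis that no two columns of $C$ are parallel is exactly what is needed here: for each pair $i\neq j$, the set $\{(x,y) : \langle c_i,x\rangle\langle c_j,y\rangle = \langle c_j,x\rangle\langle c_i,y\rangle\}$ is a proper algebraic subvariety (since $c_i, c_j$ are linearly independent), so a random $(x,y)$ avoids all such bad events with probability $1$. Once $A$ is recovered, the factor $B$ (say) is obtained by reading off left-eigenvectors of $M$, or equivalently by solving the linear system $T_x = A D_x B^{\top}$ with $D_x$ read off from the eigenvalues; the third factor $C$ then follows by solving another linear system against the flattening of $T$ along the first two modes. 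Each of these steps runs in polynomial time.

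For the uniqueness claim, I would appeal to Kruskal's theorem, which is stated just before this theorem: since $\mathrm{rank}(A)=\mathrm{rank}(B)=R$ we have $k$-rank $k_A = k_B = R$, and the no-parallel-columns assumption gives $k_C \geq 2$, so $k_A + k_B + k_C \geq 2R+2$, which is exactly Kruskal's uniqueness condition for rank $R$. The main technical point to verify carefully is the genericity argument for the eigenvalue separation, but this is a measure-zero exclusion and hence holds with probability $1$ for a random choice of $(x,y)$; the rest of the proof is essentially linear algebra. In the companion smoothed-analysis setting of the paper, the harder task will be to turn this ``with probability $1$'' statement into a quantitative bound on the eigenvalue gap (and on the singular values of $A,B$), but the noiseless version stated here requires only the qualitative version.
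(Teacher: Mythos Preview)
Your proposal is correct and follows essentially the same approach as the paper: form two random third-mode contractions $T_a = A D_a B^\top$, $T_b = A D_b B^\top$, diagonalize $T_a T_b^{-1}$ to recover $A$ (and analogously $B$), argue eigenvalue distinctness from the no-parallel-columns hypothesis, solve a linear system for $C$, and invoke Kruskal for uniqueness. The only cosmetic difference is that the paper first preprocesses to reduce to the case $m=n=R$ (by projecting onto the column span of the appropriate unfoldings), so that it can work with honest inverses $T_b^{-1}$ and $U^{-1}$ rather than the pseudoinverses $T_y^{+}$ and $A^{+}$ you use; your version is slightly more direct but equivalent.
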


\noindent If $T$ is an $n \times n \times n$ tensor, then $R$ can be at most $n$ in order for the conditions of the theorem to be met. This basic algorithm has been used to design efficient algorithms for phylogenetic reconstruction \cite{C}, \cite{MR}, topic modeling \cite{AFHKL}, community detection \cite{AGHK} and learning hidden markov models and mixtures of spherical Gaussians \cite{HK}. However algorithms that make use of tensor decompositions have traditionally been limited to the full-rank case, and our goal is to develop {\em stable} algorithms that work for $R = \mbox{poly}(n)$. Recently Goyal et al \cite{GVX} gave a robustness analysis for this decomposition, and we give an alternative proof in Appendix~\ref{asec:stable}.  

In fact, this basic tensor decomposition can be bootstrapped to work even when $R$ is larger than $n$ (if we also increase the order of the tensor). The key parameter that dictates when one can efficiently find a tensor decomposition (or more generally, when it is unique) is the {\em Kruskal rank}:

\begin{definition}
The Kruskal rank (or $\kr{}$) of a matrix $A$ is the largest $k$ for which every set of $k$ columns are linearly independent. Also the $\tau$-{\em robust k-rank} is denoted by $\kr{\tau}(A)$, and is the largest $k$ for which every $n \times k$ sub-matrix $A_{|S}$ of $A$ has $\sigma_k (A_{|S}) \ge 1/\tau$.
\end{definition}

\noindent How can we push the above theorem beyond $R = n$? We can instead work with an order $\ell$ tensor. To be concrete set $\ell = 5$ and suppose $T$ is an $n \times n \times ... \times n$ tensor. We can ``flatten" $T$ to get an order three tensor $$ T = \sum_{i = 1}^R \underbrace{A^{(1)}_i \otimes A^{(2)}_i}_{\mbox{factor}} \otimes \underbrace{A^{(3)}_i \otimes A^{(4)}_i}_{\mbox{factor}} \otimes \underbrace{A^{(5)}_i}_{\mbox{factor}}$$ Hence we get an order three tensor $\widehat{T}$ of size $n^2 \times n^2 \times n$. Alternatively we can define this ``flattening" using the following operation:

\begin{definition}
The Khatri-Rao product of $U$ and $V$ which are size $m \times r$ and $n \times r$ respectively is an $mn \times r$ matrix $U \odot V$ whose $i^{th}$ column is $u_i \otimes v_i$. 
\end{definition}

Our new order three tensor $\widehat{T}$ can be written as: $$\widehat{T} = \sum_{i = 1}^R \left (A^{(1)} \odot A^{(2)} \right )_i \otimes \left (A^{(3)} \odot A^{(4)} \right )_i \otimes A^{(5)}$$ The factors are the columns of $A^{(1)} \odot A^{(2)}$, the columns of $A^{(3)} \odot A^{(4)}$ and the columns of $A^{(5)}$. The crucial point is that the Kruskal rank of the columns of $A^{(1)} \odot A^{(2)}$ is in fact at least the sum of the Kruskal rank of the columns of $A^{(1)}$ and $A^{(2)}$ (and similarly for $A^{(3)} \odot A^{(4)}$) \cite{AMR}, \cite{BCV}, but this is tight in the worst-case. Consequently this ``flattening" operation allows us use the above algorithm unto $R = 2n$; since the rank ($R$) is larger than the largest dimension ($n$), this is called the {\em overcomplete} case. 

Our main technical result is that in a natural {\em smoothed analysis} model, the Kruskal rank {\em robustly multiplies} and this allows us to give algorithms for computing a tensor decomposition even in the highly overcomplete case, for any $R = \mbox{poly}(n)$ (provided that the order of the tensor is large \-- but still a constant). Moreover our algorithms have immediate applications in learning mixtures of Gaussians and multiview mixture models. 

\subsection{Our Results}

We introduce the following framework for studying tensor decomposition problems:

\begin{itemize}

\item An adversary chooses a tensor $T = \sum_{i=1}^R A^{(1)}_i \otimes A^{(2)}_i \otimes ... \otimes A^{(\ell)}_i$.

\item Each vector $a^j_i$ is $\rho$-perturbed to yield $\tilde{a}^j_i$.\footnote{An (independent) random gaussian with zero mean and variance $\rho^2/n$ in each coordinate is added to $a^j_i$ to obtain $\tilde{a}^j_i$. We note that we make the Gaussian assumption for convenience, 
but our analysis seems to apply to more general perturbations.
}

\item We are given $\widetilde{T} = \sum_{i=1}^R \widetilde{A}^{(1)}_i \otimes \widetilde{A}^{(2)}_i \otimes ... \otimes \widetilde{A}^{(\ell)}_i$ (possibly with noise.)

\end{itemize}

\noindent Our goal is to recover the factors $\{\widetilde{A}^{(1)}_i\}_i$, $\{\widetilde{A}^{(2)}_i\}_i$, ..., $\{\widetilde{A}^{(\ell)}_i\}_i$ (up to rescaling). This model is directly inspired by {\em smoothed analysis} which was introduced by Spielman and Teng \cite{ST1}, \cite{ST2} as a framework in which to understand why certain algorithms perform well on realistic inputs. 

In applications in learning, tensors are used to encode low-order moments of the distribution. In particular, each factor in the decomposition represents a ``component". The intuition is that if these ``components" are not chosen in a worst-case configuration, then we can obtain vastly improved learning algorithms in various settings. For example, as a direct consequence of our main result, we will give new algorithms for learning mixtures of spherical Gaussians again in the framework of smoothed analysis (without any additional separation conditions). There are no known polynomial time algorithms to learn such mixures if the number of components ($k$) is larger than the dimension ($n$). But if their means are perturbed, we give a polynomial time algorithm for any $k = \mbox{poly}(n)$ by virtue of our tensor decomposition algorithm. 

Our main technical result is the following:

\begin{theorem}\label{thm:krl}
Let $R\leq  n^{\ell}/2$ for some constant  $\ell \in \N$. Let  $A^{(1)}, A^{(2)}, \dots A^{(\ell)}$ be $n \times R$ matrices with columns of unit norm, and let $\ha{A}^{(1)}, \ha{A}^{(2)}, \dots \ha{A}^{(\ell)} \in \R^{n \times m}$ be their respective $\rho$-perturbations. Then for $\tau = (n/\rho)^{3^{\ell}}$, the Khatri-Rao product satisfies 
\begin{equation}
\kr{\tau}\left(\ha{A}^{(1)} \krp \ha{A}^{(2)}  \krp \dots \krp \ha{A}^{(\ell)} \right) = R  ~~\text{w.p. at least } ~ 1 - \exp\left(- C n^{1/3^{\ell}}\right)
\end{equation}
\end{theorem}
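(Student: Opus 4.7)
The plan is induction on $\ell$. The base case $\ell = 1$ reduces to the statement that a single $\rho$-perturbed matrix $\ha{A}^{(1)} \in \R^{n \times R}$ with $R \le n/2$ has $\sigma_R(\ha{A}^{(1)}) \ge 1/\tau_1$ for $\tau_1 = (n/\rho)^3$, with failure probability at most $\exp(-\Omega(n^{1/3}))$. This follows from standard smoothed-analysis bounds on the smallest singular value of a perturbed matrix (e.g.\ Rudelson--Vershynin, or a one-shot leave-one-out calculation), with considerable slack in both $\tau_1$ and the failure probability.

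For the inductive step, set $\ha{B} := \ha{A}^{(1)} \krp \cdots \krp \ha{A}^{(\ell-1)}$. By the inductive hypothesis applied to every size-$n^{\ell-1}/2$ subset of columns, $\ha{B}$ has the property that any such submatrix is $\tau_{\ell-1}$-well-conditioned with $\tau_{\ell-1} = (n/\rho)^{3^{\ell-1}}$. The target is to lower bound $\sigma_R(\ha{B} \krp \ha{A}^{(\ell)})$ for $R \le n^{\ell}/2$. The natural reduction is through leave-one-out distances: writing $v_j := \ha{b}_j \opr \ha{a}_j^{(\ell)}$ and $V_{-i} := \spn\{v_j : j \ne i\}$, a uniform lower bound $\mathrm{dist}(v_i, V_{-i}) \ge 1/\tau'$ yields $\sigma_R(\ha{B} \krp \ha{A}^{(\ell)}) \ge 1/(\sqrt{R}\,\tau')$. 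Fix $i$ and condition on every random quantity except the perturbation of $a_i^{(\ell)}$; then $v_i = L_i(\ha{a}_i^{(\ell)})$ for the deterministic linear map $L_i(x) := \ha{b}_i \opr x$, and $\mathrm{dist}(v_i, V_{-i}) = \lV T_i(\ha{a}_i^{(\ell)}) \rV$ where $T_i := P_{V_{-i}^{\perp}} \circ L_i \colon \R^n \to V_{-i}^{\perp}$. Because $\ha{a}_i^{(\ell)}$ is a Gaussian perturbation, anti-concentration then reduces everything to a high-probability lower bound on a suitable ``effective'' smallest singular value of $T_i$.

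The crux, and the main obstacle I expect, is proving this quantitative lower bound on $\sigma_{\min}(T_i)$. Geometrically, $\lV T_i(x) \rV$ is the Frobenius distance from the rank-one $n^{\ell-1} \times n$ matrix $\ha{b}_i x^T$ to the subspace $V_{-i}$ (after reshaping $\R^{n^\ell}$ as $\R^{n^{\ell-1} \times n}$), and we must rule out the possibility that \emph{every} direction $x$ makes $\ha{b}_i x^T$ lie close to $V_{-i}$. Observing merely that $\ha{b}_i \in \spn\{\ha{b}_j : j \ne i\}$ is useless here since $R \gg n^{\ell-1}$; instead one uses the rigidity of the tensor structure, namely that an equality $\ha{b}_i x^T = \sum_{j \ne i} \beta_j \ha{b}_j (\ha{a}_j^{(\ell)})^T$ forces $n$ simultaneous column-wise constraints $\ha{B}_{-i}(D_\beta r_k) = x_k \ha{b}_i$, where $r_k$ denotes the $k$-th row of $\ha{A}_{-i}^{(\ell)}$. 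The inductive hypothesis enters by preventing any $\beta$ with small, concentrated support from producing such an approximate equality: because every $n^{\ell-1}/2$-subset of columns of $\ha{B}$ is $\tau_{\ell-1}$-well-conditioned, any near-relation forces $\beta$ to be spread across too many coordinates to meet the constraints jointly across all $n$ columns. Chaining (i) this structural step (which loses one factor of $\tau_{\ell-1}$), (ii) Gaussian anti-concentration for $\ha{a}_i^{(\ell)}$ (which loses another), and (iii) a union bound over $i$ and an outer subset/net argument for the robust-Kruskal-rank guarantee, one arrives at the cubic blow-up $\tau_\ell \approx \tau_{\ell-1}^3 \cdot \poly(n/\rho)$ and the failure probability $\exp(-C n^{1/3^\ell})$ claimed in the theorem.
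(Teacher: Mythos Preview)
Your proposal has a genuine gap at the inductive step. You write ``by the inductive hypothesis applied to every size-$n^{\ell-1}/2$ subset of columns, $\ha{B}$ has the property that any such submatrix is $\tau_{\ell-1}$-well-conditioned.'' But the level-$(\ell-1)$ statement gives failure probability only $\exp(-Cn^{1/3^{\ell-1}})$ per subset, and $\ha{B}$ has $R \le n^\ell/2$ columns, so the number of size-$n^{\ell-1}/2$ subsets is at least $(n/2)^{n^{\ell-1}/2}$; the union bound is hopeless. More seriously, even if one grants this robust Kruskal-rank property of $\ha{B}$, the passage from it to ``$T_i$ has many non-negligible singular values'' is precisely the heart of the matter and is left as a sketch. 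The sentence ``any near-relation forces $\beta$ to be spread across too many coordinates to meet the constraints jointly across all $n$ columns'' is not an argument: a spread-out $\beta$ can perfectly well satisfy $n$ column-wise constraints of the form $\ha{B}_{-i}(D_\beta r_k) = x_k \ha{b}_i$, since $\ha{B}_{-i}$ has far more columns than rows and the $r_k$ are themselves unstructured after conditioning. Nothing in your outline rules out, for a worst-case (conditioned) $V_{-i}$ and $\ha{b}_i$, that the entire $n$-dimensional image of $L_i$ lies $\eta$-close to $V_{-i}$.

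The paper's route is different and sidesteps both problems. It does \emph{not} induct on the Kruskal rank of the lower-order product $\ha{B}$. Instead it proves a statement (Theorem~\ref{thm:kr}) about the projection of a perturbed product vector $\xtil^{(1)}\otimes\cdots\otimes\xtil^{(\ell)}$ onto an \emph{arbitrary} subspace $\calV\subset\R^{n^\ell}$ of dimension $\Omega(n^\ell)$, using all $\ell$ perturbations rather than just the last one. The mechanism is an explicit construction: from $\calV$ one extracts a family of matrices forming an ``ordered $(\theta,\delta)$-orthogonal system'' (Definition~\ref{def:toos}, Lemma~\ref{lem:toosystem:create}), and this system is what guarantees, after combining along $\xtil^{(1)}$, that the resulting lower-order object again has many large singular values (Lemma~\ref{lem:toosystem:implies}). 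The induction (Theorem~\ref{thm:induction}) is on this singular-value condition for a carefully chosen block $I_1\times\cdots\times I_\ell$ with $n_t \ge (n_{t+1}\cdots n_\ell)^2$, not on any Kruskal-rank property of a Khatri--Rao submatrix. That block restriction, together with the orthogonal-system construction, is what produces both the $3^\ell$ exponent in $\tau$ and the $n^{1/3^\ell}$ in the failure probability.
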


\noindent In general the Kruskal rank {\em adds} \cite{AMR,BCV} but in the framework of smoothed analysis it {\em robustly multiplies}. What is crucial here is that we have a lower bound $\tau$ on how close these vectors are to linearly dependent. In almost all of the applications of tensor methods, we are not given $T$ exactly but rather with some noise. This error could arise, for example, because we are using a finite number of samples to estimate the moments of a distribution. It is the condition number of $\ha{A}^{(1)} \krp \ha{A}^{(1)}  \krp \dots \krp \ha{A}^{(\ell)}$ that will control whether various tensor decomposition algorithms work in the presence of noise. 

Another crucial property our method achieves is exponentially small failure probability for any constant $\ell$, for our polynomial bound on $\tau$.
In particular for $\ell=2$, we show (in Theorem~\ref{thm:kr2}) for $\rho$-perturbations of two $n \times n^2/2$ matrices $U$ and $V$,  the $\kr{\tau} (\Util \krp \Vtil) =n^2/2$ for $\tau=\rho^2/n^{O(1)}$, with probability $1- \exp(-\sqrt{n})$.  We remark that it is fairly straightforward  to obtain the above statement (for $\ell=2$) for failure probability $\delta$,  with $\tau=(n/\delta)^{O(1)}$ (see Remark~\ref{rmk:weakbound} for more on the latter); however, this is not desirable since the running time has a polynomial dependence on the minimum singular value $1/\tau$ (and hence $\delta$). 

We obtain the following main theorem from the above result and from analyzing the stability of the algorithm of Leurgans et al \cite{LRA} (see Theorem~\ref{thm:stability}):

\begin{theorem}\label{thm:main}
Let $R\leq  n^{\lfloor \frac{\ell - 1}{2} \rfloor}/2$ for some constant  $\ell \in \N$. Suppose we are given $\ha{T} + E$ where $\ha{T} $ and $E$ are order $\ell$-tensors and $\ha{T}$ has rank $R$ and is obtained from the above smoothed analysis model. Moreover suppose the entries of $E$ are at most $\eps (\rho/n)^{3^{\ell}}$ where $\eps < 1$. Then there is an algorithm to recover the rank one terms $\otimes_{i=1}^\ell \widetilde{a}^j_i$ up to an additive $\eps$ error. The algorithm runs in time $n^{C 3^{\ell}}$ and succeeds with probability at least  $1 - \exp(- C n^{1/3^{\ell}})$. 
\end{theorem}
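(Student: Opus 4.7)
The plan is to reduce to the order-3 case and invoke the stability of Leurgans' algorithm (Theorem~\ref{thm:stability}). Partition the $\ell$ modes into three disjoint groups of sizes $p, q, r$ with $p + q + r = \ell$ chosen so that $p, q \le \lfloor (\ell-1)/2 \rfloor$ and $r \geq 1$: for odd $\ell$ take $p = q = (\ell-1)/2$ and $r = 1$; for even $\ell$ take $p = \ell/2$, $q = \ell/2 - 1$, and $r = 1$. Flattening $\ha{T}$ along this grouping yields the order-3 tensor
\[
\ha{T}' \;=\; \sum_{i=1}^R \Vtil_i \otimes \Wtil_i \otimes \Xtil_i,
\]
where $\Vtil_i$ and $\Wtil_i$ are Khatri--Rao products of the perturbed factor matrices indexed by the first two groups, $\Xtil_i = \ha{a}^{(\ell)}_i$, and $E$ reshapes to some $E'$ with the same entrywise bound.

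To verify the hypotheses of Theorem~\ref{thm:stability}, I apply Theorem~\ref{thm:krl} to the first two groups: since $R \le n^{\lfloor (\ell-1)/2 \rfloor}/2 \le n^{\min(p,q)}/2$, both $\Vtil$ and $\Wtil$ satisfy $\kr{\tau}(\cdot) = R$ with $\tau = (n/\rho)^{3^{\ell}}$, except with probability $\exp(-Cn^{1/3^{\ell}})$. For $\Xtil = \ha{A}^{(\ell)}$, a one-shot anticoncentration argument on each pair of $\rho$-perturbed columns shows that no two are within angle $1/\poly(n)$ of being parallel with high probability, which is the quantitative version of Leurgans' ``no parallel columns'' condition on the third factor. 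Now invoke Theorem~\ref{thm:stability} on $\ha{T}' + E'$: it recovers (up to rescaling) each rank-one term $\Vtil_i \otimes \Wtil_i \otimes \Xtil_i$ with error bounded by $\poly(n) \cdot \tau^{O(1)} \cdot \lV E' \rV$. Plugging in $\tau = (n/\rho)^{3^{\ell}}$ together with the hypothesis $\lV E' \rV_\infty \le \eps (\rho/n)^{3^{\ell}}$ yields error at most $\eps \cdot \poly(n)$, which becomes $\eps$ once the constant in the exponent of the noise assumption is chosen large enough. Finally, reshaping each order-3 rank-one term trivially gives back the corresponding order-$\ell$ rank-one term $\bigotimes_{j=1}^\ell \ha{a}^{(j)}_i$ as required.

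The main obstacle is quantifying the error propagation. The inverse robust-singular-value $\tau$ from Theorem~\ref{thm:krl} amplifies noise when running Leurgans' algorithm, and it is precisely the exponent $3^{\ell}$ in the noise assumption that absorbs this amplification. A secondary concern is establishing Theorem~\ref{thm:stability} itself as a robust analysis of Leurgans': this proceeds by following Leurgans' eigendecomposition of a suitable product of two slices of $\ha{T}'$ and controlling its conditioning via the robust Kruskal ranks of $\Vtil$ and $\Wtil$ together with the angular separation of columns of $\Xtil$, using Wedin-type eigenvector perturbation bounds. The overall running time is $n^{O(\ell)}$ for the SVDs and eigendecompositions involved, which matches the claimed $n^{C 3^{\ell}}$, and the failure probability is dominated by that of Theorem~\ref{thm:krl}.
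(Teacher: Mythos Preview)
Your approach is essentially the same as the paper's: partition the $\ell$ modes into three groups, flatten to an order-3 tensor, apply Theorem~\ref{thm:krl} to the two large Khatri--Rao blocks to bound their condition number, check the third factor's columns are far from parallel via Gaussian anticoncentration, and invoke the stability of Leurgans' algorithm (Theorem~\ref{thm:stability}/Corollary~\ref{corr:stability}). The paper's proof is exactly this, stated in two lines by appealing to Corollary~\ref{corr:stability}.

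One small slip: you write that the partition is ``chosen so that $p, q \le \lfloor (\ell-1)/2 \rfloor$,'' but the inequality should go the other way---you need $\min(p,q) \ge \lfloor (\ell-1)/2 \rfloor$ in order to deduce $R \le n^{\min(p,q)}/2$ from the hypothesis $R \le n^{\lfloor (\ell-1)/2\rfloor}/2$. Your actual choices of $p,q$ and the chain of inequalities you then write are correct, so this is just a typo in the constraint statement rather than a gap in the argument.
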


As we discussed, tensor methods have had numerous applications in learning. However algorithms that make use of tensor decompositions have traditionally been limited to the full-rank case, and hence can only handle cases when the number of ``components" is at most the dimension. However by using our main theorem above, we can get new algorithms for some of these problems that work even if there are many more ``components" than dimensions. 

\subsubsection*{Multi-view Models (Section~\ref{sec:multi view})} In this setting, each sample is composed of $\ell$ views $\spc{x}{1}, \spc{x}{2}, \dots,\spc{x}{\ell}$ which are conditionally independent given which component $i \in [R]$ the sample is generated from. Hence such a model is specified by $R$ mixing weights $w_i$ and $R$ discrete distributions $\spc{\mu_i}{1},\dots,\spc{\mu_i}{j},\dots,\spc{\mu_i}{\ell}$, one for each view. Such models are very expressive and are used as a common abstraction for a number of inference problems. Anandkumar et al \cite{AHK} gave algorithms in the full rank setting. However, in many practical settings like speech recognition and image classification, the dimension of the feature space is typically much smaller than the  number of components. If we suppose that the distributions that make up the multi-view model are $\rho$-perturbed (analogously to the tensor setting) then we can give the first known algorithms for the overcomplete setting. Suppose that the means $(\spc{\mu_i}{j})$ are $\rho$-perturbed to obtain $\{ \spc{\ha{\mu}_i}{j} \}$. Then:

\begin{theorem}\label{thm:main:multiview}
This is an algorithm to learn the parameters $w_i$ and $\{ \spc{\ha{\mu}_i}{j} \}$ of an $\ell$-view multi-view model with $R \leq n^{\lfloor \frac{\ell - 1}{2} \rfloor} / 2$ components up to an accuracy $\eps$. The running time and sample complexity are at most $\mbox{poly}_\ell(n, 1/\eps, 1/\rho)$ and succeeds with probability at least  $1 - \exp(- C n^{1/3^{\ell}})$ for some constant $C>0$. 
\end{theorem}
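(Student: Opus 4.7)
The plan is to form the $\ell$-th order moment tensor of the observations, express it in the form required by Theorem~\ref{thm:main}, estimate it accurately from samples, and invoke that theorem to recover the mixing weights and the individual conditional distributions.

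By the conditional independence of the $\ell$ views given the latent component,
$$
\ha T \;:=\; \Ex\bigl[ x^{(1)} \otimes x^{(2)} \otimes \cdots \otimes x^{(\ell)} \bigr] \;=\; \sum_{i=1}^R w_i \, \ha{\mu}_i^{(1)} \otimes \cdots \otimes \ha{\mu}_i^{(\ell)}.
$$
Absorbing $w_i$ into the first factor exhibits $\ha T$ as a rank-$R$ order-$\ell$ tensor whose factors in each mode are $\rho$-perturbations $\ha{\mu}_i^{(j)}$ of adversarially chosen probability vectors; after a mild rescaling of each column to unit norm (which only improves the effective perturbation scale, since a probability vector has $\ell_2$ norm in $[1/\sqrt{n},1]$), this is exactly the setup of Theorem~\ref{thm:main}. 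To estimate $\ha T$, draw $N$ i.i.d.\ samples and form $\ha T_N = \frac{1}{N}\sum_{s=1}^N x_s^{(1)} \otimes \cdots \otimes x_s^{(\ell)}$. Every entry of the summand lies in $\{0,1\}$ (each $x^{(j)}$ is a one-hot indicator), so Hoeffding together with a union bound over the $n^\ell$ entries yields $\|\ha T_N - \ha T\|_\infty \leq \eps (\rho/n)^{3^\ell}$ with probability $1-o(1)$ once $N \geq \poly_\ell(n, 1/\eps, 1/\rho)$.

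Applying Theorem~\ref{thm:main} to $\ha T_N = \ha T + E$ with this bound on $E$ returns, in time $n^{O(3^\ell)}$ and with probability at least $1 - \exp(-C n^{1/3^\ell})$, each rank-one term $w_i \, \ha{\mu}_i^{(1)} \otimes \cdots \otimes \ha{\mu}_i^{(\ell)}$ to additive accuracy $\eps$. From such a rank-one tensor I would extract the individual factors using that every perturbed mean satisfies $\sum_k \ha{\mu}_i^{(j)}[k] = 1 + O(\rho)$ with high probability (a Gaussian of variance $\rho^2/n$ per coordinate summed over $n$ coordinates has variance $\rho^2$): normalize each mode so its entries sum to approximately $1$ to peel off $\ha{\mu}_i^{(j)}$, and the residual scalar multiplier is $w_i$.

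The main obstacle is the tightness of the noise tolerance demanded by Theorem~\ref{thm:main}: the permissible entrywise perturbation is $\eps(\rho/n)^{3^\ell}$, which is what drives the sample complexity through the Hoeffding bound above. For constant $\ell$ this remains polynomial in $n$, $1/\rho$, and $1/\eps$, but one must also verify that the final de-mixing step (separating $w_i$ from the $\ha{\mu}_i^{(j)}$ by $\ell_1$-normalization) does not inflate the additive $\eps$ error, which is routine provided the weights $w_i$ are bounded below---components with $w_i \ll \eps$ are indistinguishable from the estimation noise at the target accuracy anyway. Taking a union bound over the $o(n^\ell)$ rank-one terms absorbs the polynomial factors into the $\poly_\ell(\cdot)$ sample and time bounds claimed in the theorem.
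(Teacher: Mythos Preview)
Your proposal is correct and follows essentially the same route as the paper: form the $\ell$-th order moment tensor via conditional independence, estimate it empirically to the accuracy demanded by Theorem~\ref{thm:main}, apply that theorem to recover the rank-one terms, and then $\ell_1$-normalize each mode to separate the $\ha\mu_i^{(j)}$ from the weight $w_i$. The paper's own proof (Section~\ref{sec:multi view}) is in fact more terse than yours on the sampling bound and on the normalization step, so your additional remarks about Hoeffding, the $1+O(\rho)$ coordinate sum, and the caveat on tiny weights are all compatible refinements rather than departures.
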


\subsubsection*{Mixtures of Axis-Aligned Gaussians (Section~\ref{sec:gaussians})} Here we are given samples from a distribution $F = \sum_{i = 1}^k w_i F_i(\mu_i, \Sigma_i)$ where $F_i(\mu_i, \Sigma_i)$ is a Gaussian with mean $\mu_i$ and covariance $\Sigma_i$ and each $\Sigma_i$ is diagonal. These mixtures are ubiquitous throughout machine learning. Feldman et al \cite{FSO} gave an algorithm for PAC-learning mixtures of axis aligned Gaussians, however the running time is exponential in $k$, the number of components. Hsu and Kakade \cite{HK} gave a polynomial time algorithm for learning mixtures of spherical Gaussians provided that their means are full rank (hence $k \leq n$). Again, we turn to the framework of smoothed analysis and suppose that the means are $\rho$-perturbed. In this framework, we can give a polynomial time algorithm for learning mixtures of axis-aligned Gaussians for any $k = \mbox{poly}(n)$. Suppose that the means of a mixture of  axis-aligned Gaussians and suppose the means have been $\rho$-perturbed to obtain $\ha{\mu}_i$. Then

\begin{theorem}\label{thm:main:gaussians}
There is an algorithm to learn the parameters $w_i$, $\ha{\mu}_i$ and $\Sigma_i$ of a mixture of $k \leq n^{\lfloor \frac{\ell - 1}{2} \rfloor}/(2 \ell)$ axis-aligned Gaussians up to an accuracy $\eps$. The running time and sample complexity are at most $\mbox{poly}_\ell(n, 1/\eps, 1/\rho)$ and succeeds with probability at least  $1 - \exp(- C n^{1/3^{\ell}})$ for some constant $C>0$. 
\end{theorem}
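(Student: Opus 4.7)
The plan is to reduce learning to the smoothed tensor decomposition result of Theorem~\ref{thm:main} via a standard moment identity for axis-aligned Gaussians. First, partition the coordinates $[n]$ into $\ell$ disjoint blocks $S_1,\dots,S_\ell$ each of size $\Theta(n/\ell)$. From samples, form the empirical version $\widetilde{M}$ of the $\ell$-way moment array $M\in\R^{|S_1|\times\cdots\times|S_\ell|}$ with entries $M_{i_1,\dots,i_\ell}=\E[x_{i_1}\cdots x_{i_\ell}]$ for $i_t\in S_t$. Because the indices $i_1,\dots,i_\ell$ lie in disjoint blocks they are pairwise distinct, so applying Isserlis'/Wick's theorem componentwise and using that each $\Sigma_r$ is diagonal, every non-trivial Gaussian contraction vanishes (it would require two indices to coincide). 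Hence
\[
M \;=\; \sum_{r=1}^{k} w_r\,\ha{\mu}_r|_{S_1}\otimes\cdots\otimes\ha{\mu}_r|_{S_\ell},
\]
and each factor $\ha{\mu}_r|_{S_t}$ is itself a $\rho$-perturbation of $\mu_r|_{S_t}$ in the smaller dimension $n'=|S_t|$, since coordinatewise-independent Gaussian noise restricts to independent Gaussian noise.

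Next, standard sub-Gaussian concentration (with a union bound over the $\poly(n)$ entries) drives $\|\widetilde{M}-M\|_\infty$ below the inverse-polynomial tolerance $\eps(\rho/n)^{3^\ell}$ required by Theorem~\ref{thm:main}, using $\poly(n,1/\eps,1/\rho)$ samples, provided $\|\mu_r\|$ and $\|\Sigma_r\|$ are polynomially bounded (a standard reduction after rescaling and truncation). Applying Theorem~\ref{thm:main} to $\widetilde{M}$ with $n'=\Theta(n/\ell)$ in place of $n$ and rank $k\le (n')^{\lfloor(\ell-1)/2\rfloor}/2$, which is implied up to absorbing constants by the hypothesis $k\le n^{\lfloor(\ell-1)/2\rfloor}/(2\ell)$, produces additive-$\eps$ approximations to the rank-one summands $w_r\,\ha{\mu}_r|_{S_1}\otimes\cdots\otimes\ha{\mu}_r|_{S_\ell}$ with the advertised failure probability.

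From these rank-one summands I would read off the parameters in three substeps. (a) Normalize each mode-$t$ factor to obtain $\ha{\mu}_r|_{S_t}$ up to a per-block sign, with $w_r$ coming from the global scaling; factors across different $t$ are automatically matched because each summand corresponds to a single component $r$. (b) Resolve the per-block signs using the empirical first moment $\E[x]=\sum_r w_r\,\ha{\mu}_r$. (c) Recover the diagonal covariances by forming, for each fixed coordinate $i_0$, the ``doubled-coordinate'' moment $T^{(i_0)}_{i_1,\dots,i_{\ell-2}}=\E[x_{i_0}^2 x_{i_1}\cdots x_{i_{\ell-2}}]$ with $i_1,\dots,i_{\ell-2}$ in blocks distinct from that of $i_0$. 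Isserlis' theorem gives $T^{(i_0)}=\sum_r w_r\bigl(\ha{\mu}_r(i_0)^2+\sigma_{r,i_0}^2\bigr)\prod_t \ha{\mu}_r(i_t)$; subtracting the now-known $\ha{\mu}_r(i_0)^2$ contribution yields a linear system in $\{w_r\sigma_{r,i_0}^2\}_{r\in[k]}$ whose coefficient matrix is a Khatri-Rao product of the recovered $\ha{\mu}_r$ factors restricted to $\ell-2$ blocks.

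The principal obstacle, and the place where Theorem~\ref{thm:krl} is essential, is showing that this last linear system is stably solvable: its coefficient matrix is a Khatri-Rao product of $\rho$-perturbed factors, and Theorem~\ref{thm:krl} supplies the inverse-polynomial lower bound on its smallest singular value needed to invert it and read off $\sigma_{r,i_0}^2$ to accuracy $\eps$. A secondary technical point is the sampling step: producing estimates of the $\ell$-th and $\ell$-th mixed moments to inverse-polynomial accuracy requires a careful concentration argument for products of sub-Gaussians, but only a polynomial number of samples. With these ingredients the polynomial running time, polynomial sample complexity, and success probability $1-\exp(-Cn^{1/3^\ell})$ all inherit directly from Theorem~\ref{thm:main} and Theorem~\ref{thm:krl}.
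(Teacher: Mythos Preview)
Your high-level reduction matches the paper: partition coordinates into $\ell$ disjoint blocks, observe that for distinct indices the axis-aligned covariance kills all Gaussian cross terms so the block moment equals $\sum_r w_r\,\ha\mu_r|_{S_1}\otimes\cdots\otimes\ha\mu_r|_{S_\ell}$, apply Theorem~\ref{thm:main}, and then set up a linear system for the variances whose coefficient matrix is a Khatri-Rao product invertible by Theorem~\ref{thm:krl}. That skeleton is exactly what the paper does.

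The genuine gap is your step~(a). From a rank-one summand you only learn vectors $\nu_r^{(t)}=c_{rt}\,\ha\mu_r|_{S_t}$ with $\prod_t c_{rt}=1/w_r$; normalizing each mode gives the \emph{direction} $\ha\mu_r|_{S_t}/\|\ha\mu_r|_{S_t}\|$ (up to sign), not $\ha\mu_r|_{S_t}$ itself, and the ``global scaling'' you recover is $w_r\prod_t\|\ha\mu_r|_{S_t}\|$, not $w_r$. Since the $\ell$ block-restricted norms $\|\ha\mu_r|_{S_t}\|$ are unknown and unrelated, you have not reconstructed $\ha\mu_r$ or $w_r$. Your proposed fix via the first moment $\E[x]=\sum_r w_r\ha\mu_r$ does not help in the overcomplete regime: restricted to any block $S_t$ this is a linear combination of $k$ vectors in $\R^{n/\ell}$, and once $k>n/\ell$ (which is exactly the regime of interest, e.g.\ $k\sim n^{\lfloor(\ell-1)/2\rfloor}$) you cannot solve for the $k$ unknown coefficients, nor even pin down the $2^k$ sign patterns.

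The paper resolves this with two additional tricks you are missing. First, it runs the decomposition a second time with an \emph{overlapping} partition $S_1',\dots,S_\ell'$ (swapping halves of $S_1$ and $S_2$); matching on the common coordinates yields the ratios $c_{r2}/c_{r1},\,c_{r3}/c_{r1},\dots$, hence $\ha\mu_r$ up to a single global scalar. Second, to separate $w_r$ from $\|\ha\mu_r\|$, it repeats the whole procedure at order $\ell+1$ to obtain both $C_\ell=w_r\|\ha\mu_r\|^\ell$ and $C_{\ell+1}=w_r\|\ha\mu_r\|^{\ell+1}$, whose ratio gives $\|\ha\mu_r\|$ and then $w_r$. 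Your variance step~(c) is essentially the paper's Step~3 (the paper uses $\ell$ fresh blocks rather than $\ell-2$, but your count also suffices since $\ell-2\ge\lfloor(\ell-1)/2\rfloor$ for $\ell\ge3$); however it presupposes the $\ha\mu_r$ and $w_r$ are already in hand, so the argument does not close until the scaling ambiguity in~(a) is fixed.
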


We believe that our new algorithms for overcomplete tensor decomposition will have further applications in learning. Additionally this framework of studying distribution learning when the parameters of the distribution we would like to learn are not chosen adversarially, seems quite appealing. 

\begin{remark}\label{remark:generic}
Recall, our main technical result is that the Kruskal rank {\em robustly} multiplies. In fact, is is easy to see that for a generic set of vectors it multiplies \cite{AMR}. This observation, in conjunction with the algorithm of Leurgans et al \cite{LRA} yields an algorithm for tensor decomposition in the overcomplete case. Another approach to overcomplete tensor decomposition was given by \cite{DCC} which works up to $r \leq n^{\lfloor \frac{\ell}{2} \rfloor}$. However these algorithms assume that we know $T$ {\em exactly}, and are not known to be stable when we are given $T$ with noise. The main issue is that these algorithms are based on solving a linear system which is full rank if the factors of $T$ are generic, but what controls whether or not these linear systems can handle noise is their condition number. 
\end{remark}

\noindent Alternatively, algorithms for overcomplete tensor decomposition that assume we know $T$ exactly would not have any applications in learning because we would need to take too many samples to have a good enough estimate of $T$ (i.e. the low-order moments of the distribution). 



In recent work, Goyal et al \cite{GVX} also made use of robust algorithms for overcomplete tensor decomposition, and their main application is underdetermined independent component analysis (ICA). The condition that they need to impose on the tensor holds generically (like ours, see e.g. Corollary~\ref{corr:stability}) and can show in a smoothed analysis model that this condition holds with inverse polynomial failure probability. However here our focus was on showing a lower bound for the condition number of $M^{\krp \ell}$ that does not depend (polynomially) on the failure probability.  We focus on the failure probability being small (in particular, exponentially small), because in smoothed analysis, the perturbation is ``one-shot'' and if it does not result in an easy instance, you cannot ask for a new one!


\subsection{Our Approach}

Here we give some intuition for how we prove our main technical theorem, at least in the $\ell = 2$ case. Recall, we are given two matrices $U^{(1)}$ and $U^{(2)}$ whose $R$ columns are $\rho$-perturbed to obtain $\ha{U}^{(1)}$ and $\ha{U}^{(2)}$ respectively. Our goal is to prove that if $R \leq \frac{n^2}{2}$ then the matrix $\ha{U}^{(1)} \odot \ha{U}^{(2)}$ has smallest singular value that is at least $\mbox{poly}(1/n, \rho)$ with high probability. In fact, it will be easier to work with what we call the {\em leave-one-out distance} (see Definition~\ref{def:leaveoneout}) as a surrogate for the smallest singular value (see Lemma~\ref{lem:loo}). Alternatively, if we let $x$ and $y$ be the first columns of $\ha{U}^{(1)}$ and $\ha{U}^{(2)}$ respectively, and we set $$\calU = \mbox{span}(\{\ha{U}^{(1)}_i \otimes \ha{U}^{(2)}_i, 2 \leq i \leq R\})$$ then we would like to prove that with high probability $x \otimes y$ has a non-negligible projection on the orthogonal complement of $\calU$. This is the core of our approach. Set $\calV$ to be the orthogonal complement of $\calU$. In fact, we prove that for {\em any} dimension at least $\frac{n^2}{2}$ subspace $\calV$, with high probability $x \otimes y$ has a non-negligible projection onto $\calV$. 

How can we reason about the projection of $x \otimes y$ onto an arbitrary (but large) dimensional subspace? If $\calV$ were (say) the set of all low-rank matrices, then this would be straightforward. But what complicates this is that we are looking at the projection of a rank one matrix onto a large dimensional subspace of matrices, and these two spaces can be structured quite differently. A natural approach is to construct matrices $M_1, M_2, ... , M_p \in \calV$ so that with high probability at least one quadratic form $x^T M_i y$ is non-negligible. Suppose the following condition were met (in which case we would be done): Suppose that there is a large set $S$ of indices so that each vector $x^T M_i$ has a large projection onto the orthogonal complement of $\mbox{span}(\{x^T M_i, i \in S\})$. In fact, if such a set $S$ exists with high probability then this would yield our main technical theorem in the $\ell = 2$ case. Our main step is in constructing a family of matrices $M_1, M_2, ... M_p$ that help us show that $S$ is large. We call this an $(\theta, \delta)$-orthogonal system (see Definition~\ref{def:system}). The intuition behind this definition is that if we reveal a column in one of the $M_i$'s that has a significant orthogonal component to all of the columns that we have revealed so far, this is in effect a fresh source of randomness that can help us add another index to the set $S$. See Section~\ref{sec:multiply} for a more complete description of our approach in the $\ell = 2$ case. The approach for $\ell > 2$ relies on the same basic strategy but requires a more delicate induction argument. See Section~\ref{sec:higherproducts}.

\section{Prior Algorithms} \label{sec:review}

Here we review the algorithm of Leurgans et al \cite{LRA}. It has been discovered many times in different settings. It is sometimes referred to as ``simultaneous diagonalization'' or as Chang's lemma \cite{C}. 

Suppose we are given a third-order tensor $T = \sum_{i = 1}^R u_i \otimes v_i \otimes w_i$ which is $n \times m \times p$. Let $U, V$ and $W$ be matrices whose columns are $u_i$, $v_i$ and $w_i$ respectively. Suppose further that (1) $rank(U) = rank(V) = R$ and (2) $\mbox{k-rank}(W) \geq 2$. Then we can efficiently recover the factors of $T$. 

\newcommand{\utilde}{\widetilde{u}}
We present the algorithm {\sc Decompose} and its analysis assuming $n=m=R$. Any instance with $rank(U) = rank(V) = R$ can be reduced to this case as follows: find the span of the vectors $\{ \utilde_{j,k} \}$, where $\utilde_{j,k}$ is the $n$ dimensional vector whose $i$th entry is $T_{ijk}$. This span must be precisely the span of the columns of $U$.\footnote{It is easy to see that the span is contained in the span of the columns of $U$. To see equality, we observe that if the span is $R-1$ dimensional, then projecting each of the $u_i$s on to the span gives a {\em different} decomposition, and this contradicts Kruskal's uniqueness theorem, which holds in this case.} Thus we can pick some orthonormal basis for this span, and write $T$ as an $R \times m \times p$ tensor. We can perform this operation again (along the second mode) to move to an $R \times R \times p$ tensor.

\begin{theorem} \cite{LRA}, \cite{C} \label{thm:decompose}
Given a tensor $T$ there exists an algorithm that runs in polynomial time and recovers the (unique) factors of $T$ provided that (1) $rank(U) = rank(V) = R$ and (2) $\mbox{k-rank}(W) \geq 2$.
\end{theorem}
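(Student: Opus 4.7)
The plan is the standard ``simultaneous diagonalization'' strategy: reduce the tensor decomposition to two matrix eigendecompositions by slicing $T$ along the third mode. Concretely, pick two vectors $a, b \in \R^p$ and form the $R \times R$ matrices $T_a \defeq \sum_{k} a_k T(\cdot,\cdot,k)$ and $T_b \defeq \sum_{k} b_k T(\cdot,\cdot,k)$. Using the decomposition $T = \sum_{i} u_i \otimes v_i \otimes w_i$, these are exactly $T_a = U D_a V^T$ and $T_b = U D_b V^T$ with $D_a = \mathrm{diag}(\iprod{a,w_i})$ and $D_b = \mathrm{diag}(\iprod{b,w_i})$. Since $U$ and $V$ are $R \times R$ and invertible by the rank hypothesis, and since $D_b$ is invertible for generic $b$ (no $w_i$ is zero, as $\mathrm{k\text{-}rank}(W) \ge 2$), I would form
$M \defeq T_a T_b^{-1} = U\,(D_a D_b^{-1})\,U^{-1}$,
whose eigenvectors are the columns of $U$ and whose eigenvalues are $\lambda_i = \iprod{a,w_i}/\iprod{b,w_i}$.

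An analogous computation gives $T_a^T T_b^{-T} = V (D_a D_b^{-1}) V^{-1}$, so a second eigendecomposition recovers the columns of $V$, with exactly the same eigenvalues $\lambda_i$. I would then pair each $u_i$ to the $v_j$ sharing its eigenvalue, and solve for the $w_i$ from the linear system obtained by equating slices $T(\cdot,\cdot,k) = \sum_i (w_i)_k u_i v_i^T$; this is well-posed because the $R$ rank-one matrices $u_i v_i^T$ are linearly independent (which follows from $U$ and $V$ each being full rank).

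The main obstacle — really the only nontrivial step — is ensuring that the eigenvalues $\lambda_1, \dots, \lambda_R$ are all distinct, so each eigendecomposition is unique up to permutation and per-eigenvector scaling. The equation $\lambda_i = \lambda_j$ rearranges to $\iprod{a,w_i}\iprod{b,w_j} = \iprod{a,w_j}\iprod{b,w_i}$, and the left-minus-right side is a nontrivial bilinear form in $(a,b)$ \emph{exactly} when $w_i$ and $w_j$ are not parallel. The hypothesis $\mathrm{k\text{-}rank}(W) \ge 2$ says precisely this for every pair $i \ne j$ (and also forces every $w_i$ to be nonzero, handling the invertibility of $D_b$). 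Hence the ``bad'' set of $(a,b)$ — where some $\iprod{b,w_i}=0$ or some two $\lambda_i$ coincide — is a finite union of proper algebraic subvarieties of $\R^p \times \R^p$, and a single random draw of $(a,b)$ (say i.i.d.\ Gaussian) avoids it with probability $1$.

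Finally, uniqueness of the decomposition is an immediate byproduct: any other rank-$R$ decomposition of $T$ produces the same $T_a, T_b, M$ and hence the same eigendecomposition, so the $u_i \otimes v_i \otimes w_i$ are determined up to permutation and the unavoidable reciprocal rescalings among the three factors. A robust/noise-tolerant version of this argument — needed for the main theorems in the smoothed setting — is exactly what Theorem~\ref{thm:stability} will later provide; the present statement needs only the generic/exact case sketched above.
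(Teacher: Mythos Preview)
Your proposal is correct and follows essentially the same simultaneous-diagonalization route as the paper: form $T_a = U D_a V^T$ and $T_b = U D_b V^T$, diagonalize $T_a T_b^{-1}$ and $T_a^T (T_b^T)^{-1}$ to read off $U$ and $V$, pair columns by the common eigenvalues $\iprod{a,w_i}/\iprod{b,w_i}$ (distinct for generic $a,b$ precisely because $\mathrm{k\text{-}rank}(W)\ge 2$), and solve a linear system for $W$. Two small differences worth noting: the paper first explicitly reduces to the square case $m=n=R$ by a preprocessing step (project onto the span of the mode-$1$ and mode-$2$ slices), which you tacitly assume when you call $U,V$ invertible; and for uniqueness the paper simply invokes Kruskal's theorem, whereas your direct argument via uniqueness of the eigendecomposition of $T_a T_b^{-1}$ is self-contained and equally valid.
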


\begin{proof}
The algorithm is to pre-process as above (i.e., obtain $m=n=R$), and then run {\sc Decompose} stated below. Let us thus analyze {\sc Decompose} with $m,n$ being $R$.

We can write $T_a = U D_a V^T$ where $D_a = \mbox{diag}(a^T w_1, a^T w_2, ..., a^T w_n)$ and similarly $T_b = U D_b V^T$ where
$D_b = \mbox{diag}(b^T w_1, b^T w_2, ..., b^T w_n).$
Moreover we can write $T_a (T_b)^{-1} = U D_a D_b^{-1} U^{-1}$ and $T_a^T (T_b^T)^{-1}= V D_a D_b^{-1} V^{-1}$. So we conclude $U$ and $V$ diagonalize $T_a (T_b)^{-1} $ and $((T_b)^{-1} T_a)^T $ respectively. 
Note that almost surely the diagonals entries of $D_a D_b^{-1}$ are distinct (Claim~\ref{claim:separation-evals}). Hence the eigendecompositions of $T_a (T_b)^{-1} $ and $(T_b)^{-1} (T_a) $ are unique, and we can pair up columns in $U$ and columns in $V$ based on their eigenvalues (we pair up $u$ and $v$ if their eigenvalues are equal). 
We can then solve a linear system to find the remaining factors (columns in $W$) and since this is a valid decomposition, we can conclude that these are also the true factors of $T$ appealing to Kruskal's uniqueness theorem \cite{Kru}. 
\end{proof}

\noindent In fact, this algorithm is also stable, as Goyal et al \cite{GVX} also recently showed. It is intuitive that if $U$ and $V$ are well-conditioned and each pair of columns in $W$ is well-conditioned then this algorithm can tolerate some inverse polynomial amount of noise. For completeness, we give a robustness analysis of {\sc Decompose} in Appendix~\ref{asec:stable}. 
\begin{condition}\label{condition:approx}
\begin{enumerate}
\item The condition numbers $\kappa(U), \kappa(V) \leq \kappa$,
\item The column vectors of $W$ are not close to parallel: for all $i \neq j$, $\|\frac{w_i}{\|w_i\|} - \frac{w_j}{\|w_j\|} \|_2 \geq \delta$ ,
\item The decompositions are bounded : for all $i$, $\|u_i\|_2, \|v_i\|_2, \|w_i\|_2 \leq C$.
\end{enumerate}
\end{condition}

\begin{theorem} \label{thm:stability}
Suppose we are given tensor $T + E \in \R^{m \times n \times p}$ with the entries of $E$ being bounded by $ \epsilon\cdot  \mbox{poly}(1/\kappa, 1/n, 1/\delta)$ and moreover $T$ has a decomposition 
$T = \sum_{i = 1}^R u_i \otimes v_i \otimes w_i$ that
satisfies Condition~\ref{condition:approx}. Then there exists an efficient algorithm that returns {\em each rank one term} in the decomposition of $T$  (up to renaming), within an additive error of $\epsilon$. 
\end{theorem}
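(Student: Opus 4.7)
The plan is to run the algorithm {\sc Decompose} from Theorem~\ref{thm:decompose} on the perturbed tensor $T + E$ and to track how the noise propagates through each step. The algorithm has three main stages: (i) form two matrix slices $T_a, T_b$ along random directions $a, b \in \R^p$, (ii) eigendecompose $M_1 = T_a T_b^{-1}$ and $M_2 = (T_b^{-1} T_a)^T$ to recover (the columns of) $U$ and $V$, (iii) pair columns by eigenvalue and solve a linear system for the columns of $W$. At each step I need a \emph{quantitative} perturbation bound in terms of $\kappa$, $\delta$, $C$, $n$, and $\|E\|$.

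First I would sample unit vectors $a, b \in \R^p$ uniformly from the sphere and form slices $\ha{T}_a = T_a + E_a$ and $\ha{T}_b = T_b + E_b$, where $T_a = U D_a V^T$, $D_a = \mathrm{diag}(a^T w_i)$, and the entries of $E_a, E_b$ are bounded by $\|a\|_\infty \cdot \|E\|_\infty \cdot p = \mathrm{poly}(n)\cdot\epsilon\cdot\mathrm{poly}(1/\kappa,1/n,1/\delta)$. Using anti-concentration for a random unit vector against each fixed $w_i$, I would show that with high probability $\min_i |b^T w_i| \ge \Omega(1/(C\sqrt{p}\cdot n))$, which combined with Condition~\ref{condition:approx}(1) gives $\sigma_R(T_b) \ge \sigma_R(U)\sigma_R(V)\min_i|b^Tw_i| \ge 1/\mathrm{poly}(\kappa, C, n)$. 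Standard matrix perturbation (e.g.\ $\|(\ha{T}_b)^{-1} - T_b^{-1}\| \le \|T_b^{-1}\|^2\|E_b\|/(1-\|T_b^{-1}\|\|E_b\|)$) then yields $\|(\ha{T}_b)^{-1} - T_b^{-1}\| \le \mathrm{poly}(\kappa,C,n)\cdot\|E_b\|$, so $\ha{M}_1 := \ha{T}_a(\ha{T}_b)^{-1} = M_1 + F$ with $\|F\| \le \epsilon'$ for $\epsilon'=\mathrm{poly}(1/\kappa,1/n,1/\delta)\cdot\epsilon$.

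Next I would invoke stability of the (non-symmetric) eigendecomposition. Observe $M_1 = U\Lambda U^{-1}$ with $\Lambda = D_aD_b^{-1}$, and the eigenvalues are the ratios $\lambda_i = (a^Tw_i)/(b^Tw_i)$. Using Condition~\ref{condition:approx}(2) together with the randomness in $a,b$, a two-dimensional anti-concentration argument (as in Claim~\ref{claim:separation-evals}) gives $\min_{i\ne j}|\lambda_i-\lambda_j| \ge \Omega(\delta/\mathrm{poly}(n,C))$ with high probability. By the Bauer--Fike / sin$\Theta$-style bound for diagonalizable matrices, the eigenvectors of $\ha{M}_1$ are within $O\!\bigl(\kappa(U)\cdot\|F\|/\mathrm{gap}\bigr)$ of those of $M_1$ (up to scaling); choosing the hidden polynomial in the noise threshold of the theorem large enough makes this $\le \epsilon/2$. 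Applying the same argument to $M_2$ recovers $\ha{v}_i$ with $\|\ha{v}_i - v_i\| \le \epsilon/2$. The eigenvalue gap is also the \emph{pairing margin}: perturbed eigenvalues shift by at most $\mathrm{poly}(n)\|F\|\ll \mathrm{gap}/2$, so there is a unique matching between recovered columns of $U$ and $V$ via closest eigenvalue.

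Finally, given approximate $\ha{u}_i, \ha{v}_i$ with correct pairing, the $w_i$ are determined by the linear system $\widehat{T} = \sum_i \ha{u}_i \otimes \ha{v}_i \otimes w_i$, i.e.\ by least-squares against the Khatri--Rao matrix $\ha{U}\odot\ha{V}$. Its smallest singular value is within an additive $O(\epsilon)$ of $\sigma_R(U\odot V) \ge \sigma_R(U)\sigma_R(V)/\mathrm{poly}(n) \ge 1/\mathrm{poly}(\kappa,n)$, so solving gives $\|\ha{w}_i - w_i\| \le \epsilon/2$. Combining the three errors and using $\|\ha{u}_i\otimes\ha{v}_i\otimes\ha{w}_i - u_i\otimes v_i\otimes w_i\| \le 3C^2\max_i(\|\ha{u}_i-u_i\|+\|\ha{v}_i-v_i\|+\|\ha{w}_i-w_i\|)$ yields the claimed additive $\epsilon$ bound on each rank-one term. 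The main obstacle I expect is the non-symmetric eigendecomposition step: controlling eigenvector perturbation requires both a clean lower bound on the eigenvalue gap (hence the use of \emph{random} slices plus the non-parallel condition on $W$) and the factor $\kappa(U)$ that arises because $M_1$ is not normal; all other steps are routine perturbation bookkeeping.
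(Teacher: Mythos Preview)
Your outline follows the same architecture as the paper's proof: random slices, perturbation of the inverse, eigendecomposition of $T_a T_b^{-1}$, pairing by eigenvalue, and a final linear solve for $W$ via $U\odot V$. Two points deserve more care, and the paper treats both explicitly.

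First, the eigendecomposition step. You invoke ``Bauer--Fike / $\sin\Theta$-style'' bounds, but neither applies directly: Bauer--Fike only says each eigenvalue of $\ha{M}_1$ is near \emph{some} eigenvalue of $M_1$ and does not by itself give a one-to-one matching (the paper in fact flags this as an error in an earlier draft), while Davis--Kahan / Wedin are for symmetric matrices or SVDs, not for the non-normal matrix $U\Lambda U^{-1}$. The paper instead conjugates $\ha{M}_1$ by $U$ and applies Gershgorin to get distinct eigenvalues, and then proves a direct eigenvector perturbation bound (their Theorem~\ref{thm:whyisthisnotknown}) of exactly the form $O(\kappa(U)\cdot\|F\|/\mbox{sep}(D))$ that you wrote down. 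So your target inequality is right, but the justification has to be built by hand rather than cited.

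Second, you implicitly assume $T_b$ is square and invertible; when $m,n>R$ this fails. The paper handles this with a separate preprocessing stability argument: project onto the span of the top $R$ singular vectors along the first two modes, and use Wedin/Kruskal-uniqueness to show the projected $u_i'$ are close to $u_i$. Your write-up omits this reduction entirely.
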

\noindent As before, the algorithm is to preprocess so as to obtain $m=n=R$, and then run {\sc Decompose}. The preprocessing step is slightly different because of the presence of error -- instead of considering the span of the $\{ \utilde_{j,k} \}$ as above, we need to look at the span of the top $R$ singular vectors of the matrix whose columns are $\utilde_{j,k}$. If $\norm{E}_F$ is small enough (in terms of $\kappa, \delta, n$), the span of these top singular vectors suffices to obtain an approximation to the vectors $u_i$ (see Appendix~\ref{asec:stable}).

Note that the algorithm is limited by the condition that $rank(U) = rank(V) = R$ since this requires that $R \leq \min(m, n)$. But as we have seen before, by ``flattening" a higher order tensor,  we can handle overcomplete tensors. The following is an immediately corollary of Theorem~\ref{thm:stability}:
\begin{corollary} \label{corr:stability}
Suppose we are given an order-$\ell$ tensor $T + E \in \R^{ n^{\times \ell}}$ with the entries of $E$ being bounded by $ \epsilon\cdot  \mbox{poly}_\ell(1/\kappa, 1/n, 1/\delta)$, and matrices $\spc{U}{1},\spc{U}{2} \dots \spc{U}{\ell} \in \R^{ n \times r}$, whose columns give a rank-$r$ decomposition 
$T = \sum_{i = 1}^R u^{(1)}_i \otimes u^{(2)}_i \otimes \dots \otimes u^{(\ell)}_i$. If Condition~\ref{condition:approx} is satisfied by 
$$U=\spc{U}{1} \krp \spc{U}{2} \krp \dots \krp \spc{U}{\lfloor \frac{\ell-1}{2} \rfloor}~, ~ V=\spc{U}{\lfloor \frac{\ell-1}{2} \rfloor+1 } \krp  \dots \krp \spc{U}{2 \lfloor \frac{\ell-1}{2} \rfloor} ~\text{ and }~ W=\begin{cases} \spc{U}{\ell} & \text{if } \ell \text{ is odd}\\ U^{(\ell-1)} \krp U^{(\ell)}& \text{ otherwise}  \end{cases}$$ then there exists an efficient algorithm that computes each rank one term in this decomposition up to an additive error of $\epsilon$. 
\end{corollary}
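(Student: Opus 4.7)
The plan is to reduce directly to Theorem~\ref{thm:stability} by ``flattening'' the order-$\ell$ tensor $T+E$ into an order-$3$ tensor of shape $n_1 \times n_2 \times n_3$, where $n_1 = n_2 = n^{\lfloor (\ell-1)/2 \rfloor}$ and $n_3 = n^{\ell - 2\lfloor(\ell-1)/2\rfloor} \in \{n, n^2\}$. Grouping the $\ell$ modes into three consecutive blocks of these sizes, the decomposition $T = \sum_{i=1}^R \spc{u}{1}_i \ot \cdots \ot \spc{u}{\ell}_i$ becomes exactly
\[
\widehat{T} \;=\; \sum_{i=1}^R U_i \ot V_i \ot W_i,
\]
where $U$, $V$, $W$ are the Khatri-Rao products specified in the corollary. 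Because the flattening is a bijection on the entries, the error tensor $\widehat{E}$ has the same entrywise bound as $E$.

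Next I would verify the hypotheses of Theorem~\ref{thm:stability}. Condition~\ref{condition:approx} for the triple $(U,V,W)$ is assumed. The flattened dimensions $n_1, n_2, n_3$ are $\poly_\ell(n)$ since $\ell$ is a constant, so the noise tolerance required by Theorem~\ref{thm:stability} on the order-$3$ tensor in $\R^{n_1 \times n_2 \times n_3}$ is $\epsilon \cdot \poly(1/\kappa, 1/\max(n_1,n_2,n_3), 1/\delta) = \epsilon \cdot \poly_\ell(1/\kappa, 1/n, 1/\delta)$, which matches the bound we have assumed on the entries of $E$.

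Applying Theorem~\ref{thm:stability} to $\widehat{T}+\widehat{E}$ then returns, for each $i$, an approximation $\widehat{r}_i$ to the rank-one term $U_i \ot V_i \ot W_i$ with additive (entrywise) error $\epsilon$. I would then undo the mode grouping on each $\widehat{r}_i$ to obtain an order-$\ell$ tensor; since this is purely a permutation of indices, the entrywise error is preserved, yielding an approximation to $\spc{u}{1}_i \ot \cdots \ot \spc{u}{\ell}_i$ within the same additive $\epsilon$. Running time is polynomial because $\ell$ is constant and every intermediate dimension is $\poly_\ell(n)$.

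The only real obstacle is bookkeeping, namely tracking that the $n^{O(\ell)}$ factors introduced by the flattening step (in both the dimension parameter and the norm bound $C$ in Condition~\ref{condition:approx} applied to $(U,V,W)$) are absorbed into the $\poly_\ell(\cdot)$ notation without secretly requiring a noise level that depends super-polynomially on $n$. Since the reduction performs no arithmetic on $E$ beyond rearrangement of entries, no additional blow-up arises, so the corollary follows immediately from Theorem~\ref{thm:stability} as advertised.
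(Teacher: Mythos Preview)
Your proposal is correct and matches the paper's treatment: the paper explicitly states that Corollary~\ref{corr:stability} is ``an immediate corollary of Theorem~\ref{thm:stability}'' via flattening the order-$\ell$ tensor to order three, and provides no further argument. Your write-up simply makes this reduction explicit, including the routine bookkeeping that the flattened dimensions are $\poly_\ell(n)$ so the error tolerance is preserved.
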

\noindent Note that Corollary~\ref{corr:stability} does not require the decomposition to be symmetric. Further, any tri-partition of the $\ell$ modes that satisfies Condition~\ref{condition:approx} would have sufficed. To understand how large a rank we can handle, the key question is: {\em When does the Kruskal rank (or rank) of $\ell$-wise Khatri-Rao product become $R$?} 

\begin{fragment*}[t]
\caption{
\label{alg:decomp}{\sc Decompose}, \textbf{Input: } $T \in \R^{R \times R \times R}$ \vspace*{0.01in}
}

\begin{enumerate} \itemsep 0pt
\small 
\item Let $T_a = T(\cdot, \cdot, a), T_b = T(\cdot, \cdot, b)$ where $a, b$ are uniformly random unit vectors in $\Re^p$
\item Set $U$ to be the eigenvectors of $T_a (T_b)^{-1}$ 
\item Set $V$ to be the eigenvectors of $\left( (T_b)^{-1} T_a\right)^T$ 
\item Solve the linear system $T = \sum_{i = 1}^n u_i \otimes v_i \otimes w_i$ for the vectors $w_i$
\item Output $U, V, W$
\end{enumerate} 

\end{fragment*}

The following lemma is well-known (see  \cite{BCV} for a robust analogue) and is known to be tight in the worst case. This allows us to handle a rank of $R \approx \ell n/2$.
\begin{lemma}
$\kr{}(U \odot V) \geq \min \big(\kr{}(U) + \kr{}(V)-1 , R \big)$
\end{lemma}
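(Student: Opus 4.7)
The plan is to prove the lemma by contradiction using Sylvester's rank inequality on an associated matrix factorization. Let $k_U = \kr{}(U)$ and $k_V = \kr{}(V)$, and set $k = \min(k_U + k_V - 1, R)$. To show $\kr{}(U \odot V) \geq k$, I would fix an arbitrary subset $S \subseteq [R]$ with $|S| = k$ and suppose for contradiction that there is a nontrivial combination $\sum_{i \in S} \alpha_i (u_i \otimes v_i) = 0$. Letting $T = \{i \in S : \alpha_i \neq 0\}$ and $t = |T|$, the goal reduces to deriving a contradiction from such a $T$, knowing $1 \leq t \leq k \leq k_U + k_V - 1$.

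The key reformulation is that the tensor identity is equivalent to the matrix identity
\[
M \;:=\; U_T \,\mathrm{diag}(\alpha_T)\, V_T^T \;=\; 0,
\]
where $U_T$ and $V_T$ are the column-submatrices of $U$ and $V$ indexed by $T$. The strategy is to show $\mathrm{rank}(M) \geq 1$, which contradicts $M = 0$. Since $\mathrm{diag}(\alpha_T)$ is invertible, Sylvester's rank inequality applied to the product $U_T \cdot \bigl(\mathrm{diag}(\alpha_T) V_T^T\bigr)$ (with intermediate dimension $t$) yields
\[
\mathrm{rank}(M) \;\geq\; \mathrm{rank}(U_T) + \mathrm{rank}(V_T) - t.
\]
By the very definition of Kruskal rank, every at most $k_U$ columns of $U$ are linearly independent, so $\mathrm{rank}(U_T) \geq \min(t, k_U)$, and similarly $\mathrm{rank}(V_T) \geq \min(t, k_V)$. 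A short case analysis using $t \leq k_U + k_V - 1$ then shows $\min(t, k_U) + \min(t, k_V) - t \geq 1$ in every regime: if $t \leq \min(k_U, k_V)$ we get $t \geq 1$; if one of $k_U, k_V$ is below $t$ we get the other bound; and in the regime $t \geq \max(k_U, k_V)$ we get $k_U + k_V - t \geq 1$ exactly because $t \leq k_U + k_V - 1$.

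The only delicate edge case is $t = 1$, where the Sylvester bound is tight and one needs to know additionally that each $u_i \otimes v_i$ is nonzero; this follows because $\kr{}(U), \kr{}(V) \geq 1$ forces every column of $U$ and $V$ to be nonzero, hence $u_i \otimes v_i \neq 0$. I do not expect a genuine obstacle: the argument is essentially a one-line Sylvester bound plus a case check. The subtlety, should one want to strengthen the statement to a robust version as in \cite{BCV}, would be replacing the rank statement by a quantitative singular-value inequality, since Sylvester's inequality does not come with inverse-polynomial quantitative slack for free; but for the stated (non-robust) version the proof above suffices.
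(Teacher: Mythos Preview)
Your proof is correct. The Sylvester rank inequality argument you give is in fact the standard proof of this classical result. Note, however, that the paper itself does not supply a proof of this lemma at all: it simply states the inequality as ``well-known'' and refers to \cite{BCV} for a robust analogue. So there is no paper-side argument to compare against; your write-up fills in exactly the omitted justification, and the closing remark about what changes in the robust version is apt, since that quantitative strengthening is precisely what \cite{BCV} addresses.
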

But, for generic vectors set of vectors $U$ and $V$, a much stronger statement is true \cite{AMR}: $\kr{}(U \odot V) \geq \min\big( \kr{}(U) \times \kr{}(V), R \big)$.
Hence given a generic order $\ell$ tensor $T$ with $R \leq n^{\lfloor (\ell - 1)/2 \rfloor}$, ``flattening" it to order three and appealing to Theorem~\ref{thm:decompose} finds the factors uniquely. The algorithm of \cite{DCC} follows a similar but more involved approach, and works for $R \leq n^{\lfloor (\ell )/2 \rfloor}$. 

%
However in learning applications we are not given $T$ exactly but rather an approximation to it. Our goal is to show that the Kruskal rank {\em robustly} multiplies typically, so that these types of tensor algorithms will not only work in the exact case, but are also necessarily stable when we are given $T$ with some noise. 
In the next section, we show that in the smoothed analysis model, the robust Kruskal rank multiplies on taking Khatri-Rao products. This then establishes our main result Theorem~\ref{thm:main},  assuming Theorem~\ref{thm:krl} which we prove in the next section.

\begin{proof}[Proof of Theorem~\ref{thm:main}]
As in Corollary~\ref{corr:stability}, let $U=\spc{\Util}{1} \krp  \dots \krp \spc{\Util}{\lfloor \frac{\ell-1}{2} \rfloor}~,~  V=\spc{\Util}{\lfloor \frac{\ell-1}{2} \rfloor+1 } \krp  \dots \krp \spc{\Util}{\ell-1}$  and $W=\spc{\Util}{\ell}$. 
Theorem~\ref{thm:krl} shows that with probability $1-\exp\big(-n^{1/3^{O(\ell)}}\big)$ over the random $\rho$-perturbations, $\kappa_R(U),\kappa_R(V) \le  (n/\rho)^{3^{\ell}}$. Further, the columns $W$ are $\delta=\rho/n$ far from parallel with high probability.  Hence, Corollary~\ref{corr:stability} implies Theorem~\ref{thm:main}.  
\end{proof}

\section{The Khatri-Rao Product Robustly Multiplies}\label{sec:multiply}

In the exact case, it is enough to show that the Kruskal rank almost surely multiplies and this yields algorithms for overcomplete tensor decomposition if we are given $T$ {\em exactly} (see Remark~\ref{remark:generic}). But if we want to prove that these algorithms are stable, we need to establish that even the robust Kruskal rank (possibly with a different threshold $\tau$) also multiplies. This ends up being a very natural question in {\em random matrix theory}, albeit the Khatri-Rao product of two perturbed vectors in $\R^n$ is far from a perturbed vector in $\R^{n^2}$. 

Formally, suppose we have two matrices $U$ and $V$ with columns $u_1, u_2, \dots, u_{R} $ and $v_1, v_2, \dots, v_{R}$ in $\R^n$. Let $\ha{U}, \ha{V}$ be $\rho$-perturbations of $U,V$ i.e. for each $i \in [R]$, we perturb $u_i$ with an (independent) random gaussian perturbation of norm $\rho$ to obtain $\ha{u}_i$ (and similarly for $\ha{v}_i$). Then we show the following:
\begin{theorem}\label{thm:kr2}
Suppose $U,V$ are $n \times R$ matrices and let $\Util, \Vtil$ be $\rho$-perturbations of $U,V$ respectively.  Then for any constant $\delta \in (0,1)$, $R \le \delta n^2$ and $\tau = n^{O(1)}/\rho^2$, the Khatri-Rao product satisfies $\kr{\tau}(\Util \krp \Vtil) = R$ with probability at least $1 - \exp(-\sqrt{n})$.
\end{theorem}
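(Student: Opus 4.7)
The plan is to follow the strategy sketched in Section 1.3. First, I would reduce the Kruskal-rank statement to a leave-one-out distance bound: the smallest singular value of an $N \times R$ matrix with unit-norm columns is comparable, up to factors polynomial in $R$, to the minimum over $i$ of the distance from column $i$ to the span of the remaining columns (this is the equivalence alluded to via $\mathrm{Lemma}~\ref{lem:loo}$). Since $R \le \delta n^2$, a union bound reduces the problem to showing, for a single column (say the first) of $\Util \krp \Vtil$, that
\[
\mathrm{dist}\bigl(\ha{u}_1 \otimes \ha{v}_1,\ \spn\{\ha{u}_j \otimes \ha{v}_j : j \ge 2\}\bigr) \ \ge\ \rho^2/n^{O(1)}
\]
with probability at least $1 - \exp(-\sqrt{n})$. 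Conditioning on $(\ha{u}_j,\ha{v}_j)_{j\ge 2}$, the subspace $\calV := \spn\{\ha{u}_j \otimes \ha{v}_j : j \ge 2\}^\perp \subseteq \R^{n^2}$ is deterministic, has dimension at least $(1-\delta) n^2 = \Omega(n^2)$, and is independent of $(\ha{u}_1,\ha{v}_1)$. So it suffices to prove the abstract claim: for any fixed subspace $\calV \subseteq \R^{n^2}$ of dimension $\Omega(n^2)$, and any independent $\rho$-perturbations $\ha{x},\ha{y} \in \R^n$ of fixed unit vectors, $\|P_\calV(\ha{x} \otimes \ha{y})\| \ge \rho^2/n^{O(1)}$ with probability at least $1 - \exp(-\sqrt{n})$.

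To lower bound $\|P_\calV(\ha{x}\otimes\ha{y})\|^2 = \sum_i (\ha{x}^T M_i \ha{y})^2$ (for any orthonormal basis $\{M_i\}$ of $\calV$ viewed as $n\times n$ matrices), I would construct from $\calV$ a specific family $M_1,\ldots,M_p \in \calV$ with $p = \Omega(n)$, together with a distinguished column of each $M_i$, so that these distinguished columns form a $(\theta,\delta')$-orthogonal system in the sense of Definition~\ref{def:system}. Such a family can be obtained greedily: pick any matrix in $\calV$ with a column of norm $\Omega(1/n)$ (which must exist since $\dim \calV = \Omega(n^2)$), then project $\calV$ orthogonally so as to remove the contribution of that column, and iterate; each step removes at most $O(n)$ dimensions, so $\Omega(n)$ steps can be carried out.

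Given the orthogonal system, I would reveal the perturbation of $\ha{y}$ coordinate by coordinate, in the order dictated by the distinguished columns. At step $i$, the vector $w_i := M_i \ha{y} \in \R^n$ acquires a Gaussian multiple of the $i$-th distinguished column, which by the orthogonal-system property has a component of norm $\Omega(\rho/n^{O(1)})$ outside the span of the previously computed $w_1,\ldots,w_{i-1}$; a one-dimensional Gaussian small-ball estimate then gives that with constant probability per step $w_i$ has an $\Omega(\rho/n^{O(1)})$ component orthogonal to that span. A Chernoff bound over the $p = \Omega(n)$ independent trials shows that, except with probability $\exp(-\Omega(n))$, at least $\Omega(n)$ indices $i$ are ``fresh'', so the $n \times p$ matrix $W$ with columns $w_i$ has $\Omega(n)$ singular values of size $\Omega(\rho/n^{O(1)})$. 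Finally, revealing the perturbation of $\ha{x}$ and applying Gaussian anti-concentration to the linear form $W^T \ha{x}$ yields $\|W^T \ha{x}\|^2 \ge \rho^2/n^{O(1)}$ except with probability $\exp(-\Omega(\sqrt{n}))$, completing the proof.

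The main obstacle will be precisely the construction and analysis of the orthogonal system that yields the exponentially small failure probability $\exp(-\sqrt{n})$ rather than just an inverse-polynomial bound. The difficulty is that the distinguished columns extracted from $\calV$ are entangled with the other columns of $M_i$ (we only get to control one column per matrix), so the ``fresh randomness'' contributed by each new revealed coordinate of $\ha{y}$ is not literally independent of the previous $w_j$'s; Definition~\ref{def:system} is engineered precisely to quantify this near-independence. Calibrating the orthogonality parameters $(\theta,\delta')$ so that simultaneously (i) a system of size $\Omega(n)$ exists inside every subspace $\calV$ of dimension $\Omega(n^2)$ and (ii) the two nested Gaussian anti-concentration arguments combine to deliver a bound of the form $\rho^2/n^{O(1)}$ with failure probability $\exp(-\sqrt{n})$ is the technical heart of the argument.
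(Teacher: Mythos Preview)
Your high-level reduction (leave-one-out, condition on the other $R-1$ columns to get a fixed $\calV \subset \R^{n^2}$ of dimension $\Omega(n^2)$, then exhibit $M_i \in \calV$ with some $\ha x^T M_i \ha y$ large) matches the paper exactly. The gap is in the orthogonal system and the conditioning step. Your greedy construction produces $p=\Omega(n)$ matrices, each with a single distinguished column index $c_i$, but the ``reveal $\ha y$ coordinate by coordinate'' argument breaks: $w_j = M_j\ha y$ depends on \emph{all} of $\ha y$; in particular $(M_j)_{c_i}$ is in general nonzero for $j\ne i$, so $w_1,\dots,w_{i-1}$ are not yet determined when $\ha y_{c_i}$ is revealed, and $\ha y_{c_i}$ is not fresh conditional on their span. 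There are no ``$p=\Omega(n)$ independent trials'' to which a Chernoff bound applies. (Separately, your greedy step does not even force the vectors $(M_i)_{c_i}$ to be $\theta$-orthogonal to one another --- zeroing out column $c_1$ of subsequent matrices says nothing about how $(M_2)_{c_2}$ relates to $(M_1)_{c_1}$ as vectors in $\R^n$; the paper's Lemma~\ref{lem:toosystem:create} is more careful, first selecting column indices where the robust column-projection dimension of $\calV$ is $\Omega(n)$ and then choosing several orthonormal column vectors at that index, orthogonal also to all previously fixed columns.)

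The paper's $(\theta,\delta)$-orthogonal system (Definition~\ref{def:toos}, not Definition~\ref{def:system}) is structured differently: $r$ matrices share $\delta' m$ distinguished column \emph{indices}, and for every such index $i$ and every $j\in[r]$ the vector $(M_j)_i$ must be $\theta$-orthogonal to the span of \emph{all} $(M_{j'})_{i'}$ with $i'<i$. Because these $r\cdot\delta'm$ vectors all live in $\R^n$, one is forced into $r\cdot\delta'm\lesssim n$, whence $r\approx\sqrt n$ --- this is the source of the $\exp(-\sqrt n)$. Lemma~\ref{lem:toosystem:implies} then conditions on $\ha x$ through the inner products $\langle\ha x,(M_{j'})_{i'}\rangle$ for $i'<i$; the nested orthogonality across \emph{all} matrices and all earlier columns is exactly what guarantees that $\langle\ha x,(M_j)_i\rangle$ retains a fresh Gaussian component under this conditioning (and the singular-value bound for $Q(\ha x)$ is then obtained via an auxiliary Gaussian $g$ and a probability comparison, not a Chernoff bound). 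If your scheme worked as written it would yield $\exp(-\Omega(n))$, which the paper explicitly flags (in the remark following Theorem~\ref{thm:restatement}) as beyond its methods.
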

\begin{remark}
The natural generalization where the vectors $u_i$ and $v_i$ are in different dimensional spaces also holds. We omit the details here.
\end{remark}


In general, a similar result holds for $\ell$-wise Khatri-Rao products which allows us to handle rank as large as $\delta n^{\lfloor \frac{\ell -1}{2} \rfloor}$ for $\ell = O(1)$. Note that this does not follow by repeatedly applying the above theorem (say applying the theorem to $U \odot V$ and then taking $\odot W$), because perturbing the entries of $(U \odot V)$ is not the same as $\Util \odot \Vtil$. In particular, we have only $\ell \cdot nR$ ``truly'' random bits, which are the perturbations of the columns of the base matrices.  The overall structure of the proof is the same, but we need additional ideas followed by a delicate induction.
\begin{theorem}\label{thm:krl}
For any $\delta \in (0,1)$, let $R=\delta n^{\ell}$ for some constant  $\ell \in \N$. Let  $U^{(1)}, U^{(2)}, \dots U^{(\ell)}$ be $n \times R$ matrices with unit column norm, and let $\ha{U}^{(1)}, \ha{U}^{(2)}, \dots \ha{U}^{(\ell)} \in \R^{n \times m}$ be their respective $\rho$-perturbations. Then for $\tau = (n/\rho)^{3^{\ell}}$, the Khatri-Rao product satisfies 
\begin{equation}
\kr{\tau}\left(\ha{U}^{(1)} \krp \ha{U}^{(1)}  \krp \dots \krp \ha{U}^{(\ell)} \right) = n^\ell/2  ~~\text{w.p. at least } ~ 1 - \exp\left(- \delta n^{1/3^{\ell}}\right)
\end{equation}
\end{theorem}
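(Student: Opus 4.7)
The plan is to induct on $\ell$, with the hard work concentrated in setting up a notion of ``fresh randomness'' that survives through repeated Khatri-Rao contractions.

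First, I would reduce the statement to a projection estimate on a single column. Let $X = \ha{U}^{(1)} \krp \cdots \krp \ha{U}^{(\ell)} \in \R^{n^\ell \times R}$. By the standard leave-one-out vs.\ least singular value equivalence (Lemma~\ref{lem:loo}) and a union bound over the $R$ columns, it suffices to show that for a single index $j$ (say $j = 1$), if we set $x^{(k)} = \ha{u}^{(k)}_1$ and freeze all other perturbations, then the projection of $x^{(1)} \otimes \cdots \otimes x^{(\ell)}$ onto the orthogonal complement $\calV$ of $\spn\{\ha{u}^{(1)}_i \otimes \cdots \otimes \ha{u}^{(\ell)}_i : i \ge 2\}$ has norm at least $1/\tau$ except with probability $\exp(-\Omega(n^{1/3^\ell}))$. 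Since the perturbations of column $1$ are independent of those of columns $2,\dots,R$, the space $\calV$ is a \emph{fixed} subspace of $\R^{n^\ell}$ of dimension at least $n^\ell - R \ge n^\ell/2$. So the problem becomes: for any fixed subspace $\calV \subseteq \R^{n^\ell}$ of dimension $\ge n^\ell/2$, and any $\rho$-perturbations $x^{(1)},\dots,x^{(\ell)}$, the projection $\Pi_\calV\bigl(x^{(1)} \otimes \cdots \otimes x^{(\ell)}\bigr)$ has norm $\ge 1/\tau$ with very high probability.

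For the base case $\ell = 2$, I would choose an orthonormal basis $\{M_\alpha\}_{\alpha \in [D]}$ of $\calV \subseteq \R^{n \times n}$, where $D \ge n^2/2$, and view the required estimate as $\sum_\alpha (x^{(1)\top} M_\alpha x^{(2)})^2 \ge 1/\tau^2$. The key structural device is an $(\theta,\delta)$-orthogonal system (to be formalised in Definition~\ref{def:system}): a subset $S \subseteq [D]$ together with an ordering of the columns of the $M_\alpha$ such that revealing columns one at a time, a constant fraction of the yet-to-be-revealed ``fresh column of $M_\alpha$'' always has a $\theta$-large orthogonal component to the span of what has already been revealed. I would prove, using that $D \ge n^2/2$ and a greedy peeling argument on the $n \cdot D$ columns of the $M_\alpha$'s, that such an $S$ with $|S| \ge n^{\Omega(1)}$ always exists. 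Having fixed $S$, reveal $x^{(1)}$ first: conditional on $x^{(2)}$, each quantity $x^{(1)\top} M_\alpha x^{(2)}$ is a Gaussian in $x^{(1)}$, so by Gaussian anticoncentration applied to the row vector $(x^{(1)\top} M_\alpha x^{(2)})_{\alpha \in S}$ in directions that are $\theta$-spread, at least a constant fraction of these inner products are $\ge \rho\theta/\poly(n)$ simultaneously, with failure probability $\exp(-|S|)$. Then a second anticoncentration step over $x^{(2)}$ (exploiting that many $M_\alpha x^{(2)}$ are large and roughly orthogonal) converts this to a lower bound of the form $1/\tau = \rho^2/\poly(n)$ on the full projection, which establishes Theorem~\ref{thm:kr2}.

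For $\ell > 2$, I would not iterate the $\ell = 2$ theorem (the vectors produced by $\ha{U}^{(1)} \krp \ha{U}^{(2)}$ are not independent $\rho$-perturbations in $\R^{n^2}$), but instead induct by \emph{contracting one factor} against a basis of $\calV$. Pick an orthonormal basis $\{M_\alpha\}$ of $\calV$ and view each $M_\alpha$ as a linear map $\R^n \to \R^{n^{\ell-1}}$ obtained by contracting the last index. Reveal $x^{(\ell)}$ first: using an $(\theta,\delta)$-orthogonal system argument analogous to the base case, show that with probability $1 - \exp(-\Omega(n^{1/3^{\ell-1}}))$ the family of contracted tensors $\{M_\alpha\, x^{(\ell)}\}_\alpha$ spans a subspace $\calV' \subseteq \R^{n^{\ell-1}}$ of dimension $\ge n^{\ell-1}/2$ in a \emph{robust} sense (i.e.\ its span has a healthy $1/\poly(n,1/\rho)$ singular value on a large chunk). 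Now apply the induction hypothesis at level $\ell - 1$ to the subspace $\calV'$ with the remaining perturbations $x^{(1)},\dots,x^{(\ell-1)}$, which gives $\|\Pi_{\calV'}(x^{(1)} \otimes \cdots \otimes x^{(\ell-1)})\| \ge 1/\tau_{\ell-1}$ with probability $1 - \exp(-\Omega(n^{1/3^{\ell-1}}))$. Chaining the two bounds yields the desired projection lower bound at level $\ell$, and the $3\times$ blow-up in the exponent of $(n/\rho)$ at each induction step produces the $\tau = (n/\rho)^{3^\ell}$ threshold.

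The main obstacle is designing the $(\theta,\delta)$-orthogonal system so that it interacts well with the induction: one needs the contracted family $\{M_\alpha x^{(\ell)}\}$ to be large-dimensional \emph{and} to be expressible through a basis whose own columns again satisfy an orthogonal-system condition, so that the inductive hypothesis genuinely applies at level $\ell-1$. Equally delicate is getting the failure probability down to $\exp(-n^{1/3^\ell})$ rather than $\poly(\delta)$: a naive union bound over ``bad'' events along the reveal would cost a polynomial factor in $1/\delta$, but by extracting a set $S$ of size $n^{\Omega(1/3^\ell)}$ in the orthogonal system and using anticoncentration of a Gaussian projected onto $|S|$ nearly orthogonal directions simultaneously, the failure probability collapses to $\exp(-|S|)$ while the price paid in $\tau$ remains polynomial. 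This is precisely the trade-off that makes the result usable inside the smoothed analysis framework, where the perturbation is one-shot.
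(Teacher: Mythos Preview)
Your reduction to a single-column projection estimate via the leave-one-out distance, and your treatment of the case $\ell = 2$ through an $(\theta,\delta)$-orthogonal system, are both on target and match the paper's approach (Proposition~\ref{prop:matrixcase}, Lemmas~\ref{lem:toosystem:create} and~\ref{lem:toosystem:implies}).

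The genuine gap is in your induction step for $\ell > 2$. You claim that after contracting along $x^{(\ell)}$, the orthogonal-system argument yields a family $\{M_\alpha x^{(\ell)}\}$ whose robust span $\calV' \subseteq \R^{n^{\ell-1}}$ has dimension at least $n^{\ell-1}/2$. This is not what the orthogonal-system machinery delivers. When you flatten each basis element as an $n \times n^{\ell-1}$ matrix and build an $(\theta,\delta')$-orthogonal system of $r$ such matrices, Lemma~\ref{lem:toosystem:create} forces $r \cdot \delta' m \le \delta n / 2$ (with $m = n^{\ell-1}$), and Lemma~\ref{lem:toosystem:implies} needs $r \le \delta' m$; combining these gives $r^2 \le r \cdot \delta' m = O(n)$, so $r = O(\sqrt{n})$. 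After contraction you therefore certify only $r/2 = O(\sqrt{n})$ robust singular values of $Q(x^{(\ell)})$, nowhere near the $n^{\ell-1}/2$ your inductive hypothesis requires. This is precisely the obstruction flagged in Remark~\ref{rem:baddelta}: an $O(\sqrt{n})$-dimensional subspace inside $\R^{n^{\ell-1}}$ corresponds to $\delta \approx n^{-(\ell - 3/2)}$, for which the inductive statement gives no useful failure probability.

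The paper's fix, which your proposal is missing, is to \emph{precede} the induction by restricting $\calV$ to a block of coordinates $I_1 \times \cdots \times I_\ell$ of sizes $n_1, \dots, n_\ell$ chosen so that $n_t \ge (n_{t+1} \cdots n_\ell)^2$ and $n_1 n_2 \cdots n_\ell \ll n$ (Lemma~\ref{lem:blocks} guarantees such a block still carries dimension $\Omega(n_1 \cdots n_\ell)$). With this geometry, at each step the ``flattened'' column dimension $m = n_{t+1} \cdots n_\ell$ satisfies $m \le \sqrt{n_t}$, so the constraint $r \cdot \delta' m \le O(n_t)$ now permits $r$ comparable to $m$, and the contracted family really does fill a constant fraction of the next host space. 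The formal inductive claim is Theorem~\ref{thm:induction}, stated for a matrix $P$ supported on such a block with $\sigma_r(P) \ge \eta$; the final theorem follows by exhibiting the block and taking $T$ to be an orthonormal basis of $\calV$ restricted to it. Without this block restriction, the induction you outline does not close.
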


Let 
$A$ denote the $n^\ell \times R$ matrix $\Util^{(1)} \krp \Util^{(2)} \krp \dots \krp \Util^{(\ell)}$ for convenience. The theorem states that the smallest singular value of $A$ is lower-bounded by $\tau$. 

\emph{How can we lower bound the smallest singular value of $A$?}
We define a quantity which is can be used as a proxy for the least singular value and is simpler to analyze.
\begin{definition}\label{def:leaveoneout}
For any matrix $A$ with columns $A_1, A_2, \dots A_R$,  the leave-one-out distance is
\[ \ell(A) = \min_i ~ dist(A_i, span\{A_j\}_{j \neq i}).\]
\end{definition}

The leave-one-out distance is a good proxy for the least singular value, if we are not particular about losing multiplicative factors that are polynomial in size of the matrix.
\begin{lemma}\label{lem:loo}
For any matrix $A$ with columns $A_1, A_2, \dots A_R$, we have 
$\frac{\ell(A)}{\sqrt{R}} \leq \sigma_{min}(A) \leq \ell(A)$.
\end{lemma}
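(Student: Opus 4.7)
The plan is to prove the two inequalities separately, both via direct constructions of test vectors. Neither direction is hard; the content is purely linear algebraic.

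For the upper bound $\sigma_{\min}(A) \le \ell(A)$, I would fix the index $i^\star$ that achieves the minimum in the definition of $\ell(A)$. Let $P$ denote the orthogonal projection onto $\spn\{A_j : j \ne i^\star\}$, and write $P A_{i^\star} = \sum_{j \ne i^\star} c_j A_j$. Then the vector $x \in \R^R$ with $x_{i^\star} = 1$ and $x_j = -c_j$ for $j \ne i^\star$ satisfies $Ax = A_{i^\star} - P A_{i^\star}$, whose norm is exactly $\mathrm{dist}(A_{i^\star}, \spn\{A_j\}_{j\ne i^\star}) = \ell(A)$. Since $\|x\| \ge |x_{i^\star}| = 1$, the Rayleigh-type bound $\sigma_{\min}(A) \le \|Ax\|/\|x\|$ yields $\sigma_{\min}(A) \le \ell(A)$.

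For the lower bound $\sigma_{\min}(A) \ge \ell(A)/\sqrt{R}$, I would take a unit vector $x$ with $\|Ax\| = \sigma_{\min}(A)$ and let $i^\star = \arg\max_i |x_i|$, so that $|x_{i^\star}| \ge 1/\sqrt{R}$ by pigeonhole. Rewriting $Ax = x_{i^\star} A_{i^\star} + \sum_{j \ne i^\star} x_j A_j$ and solving for $A_{i^\star}$ expresses it as a scalar multiple of $Ax$ plus an element of $\spn\{A_j\}_{j \ne i^\star}$. Hence
\[
\mathrm{dist}\bigl(A_{i^\star},\spn\{A_j\}_{j\ne i^\star}\bigr) \le \frac{\|Ax\|}{|x_{i^\star}|} \le \sqrt{R}\,\sigma_{\min}(A),
\]
and since $\ell(A)$ is the minimum over all $i$ of such distances, $\ell(A) \le \sqrt{R}\,\sigma_{\min}(A)$, which rearranges to the claim.

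There is no real obstacle here; if anything requires care, it is only making sure the coefficient vector $x$ in the upper bound has $\|x\| \ge 1$ (which is automatic from fixing $x_{i^\star} = 1$) and that the pigeonhole step in the lower bound uses $\|x\|_2 = 1$ correctly to give $|x_{i^\star}| \ge 1/\sqrt{R}$. Both steps are entirely standard, so the whole argument fits in a few lines.
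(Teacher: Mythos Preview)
Your proof is correct and essentially identical to the paper's own argument: both directions use exactly the same test-vector constructions (pick the index achieving the minimum distance and set that coordinate to $1$ for the upper bound; take a minimizing unit vector and pick its largest-magnitude coordinate for the lower bound). The only cosmetic difference is that you name the orthogonal projection $P$ explicitly, while the paper just asserts the existence of coefficients $u_j$ with $u_{i}=1$ achieving $\|Au\|_2=\ell(A)$.
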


We will show that each of the vectors $A_{i} = \ha{u}^{(1)}_{i} \otimes \ha{u}^{(2)}_i \otimes\dots\otimes \ha{u}^{(\ell)}_{i} $ has a reasonable projection (at least $n^{\ell/2}/\tau$) on the space orthogonal to the span of the rest of the vectors $\spn\left(\set{A_j : j \in [R]-\set{i} }\right)$ with high probability. We do not have a good handle on the space spanned by the rest of the $R-1$ vectors, so we will prove a more general statement in Theorem~\ref{thm:kr}: 
we will prove that a perturbed vector $\ha{x}^{(1)} \otimes \dots \otimes \ha{x}^{(\ell)}$ has a reasonable projection onto {\em any} (fixed) subspace $\calV$ w.h.p., as long as dim$(\calV)$ is $\Omega(n^\ell)$. To say that a vector $w$ has a reasonable projection onto $\calV$, we just need to exhibit a set of vectors in $\calV$ such that one of them have a large inner product with $w$. This will imply our the required bound on the singular value of $A$ as follows: 
\begin{enumerate}
\item Fix an $i \in [R]$ and apply Theorem~\ref{thm:kr} with $x^{(t)}=u^{(t)}_i$ for all $t \in [\ell]$,  and $\calV$ being the space orthogonal to rest of the vectors $A_j$. 
\item Apply a union bound over all the $R$ choices for $i$.
\end{enumerate}
%
We now state the main technical theorem about projections of perturbed product vectors onto arbitrary subspaces of large dimension. 
\begin{theorem}\label{thm:kr}
For any constant $\delta \in (0,1)$, given any subspace $\calV$ of dimension $\delta \cdot n^{\ell}$ in $\R^{n^{\times \ell}}$, there exists tensors $T_1, T_2, \dots T_r$ in $\calV$ of unit norm ($\norm{\cdot}_F = 1$), such that for random $\rho$-perturbations $\xtil^{(1)},\xtil^{(2)},\dots,\xtil^{(\ell)} \in \R^n$ of any vectors $x^{(1)},x^{(2)},\dots,x^{(\ell)} \in \R^n$, we have   
\begin{equation} \label{eq:thm:kr}
\Pr\left[ \exists j \in [r] ~\text{s.t} ~ \norm{T_j\left(\xtil^{(1)},\xtil^{(2)},\dots,\xtil^{(\ell)} \right)} \ge \rho^\ell \left(\frac{1}{n}\right)^{3^\ell} \right] \ge 1- \exp\left(- \delta n^{1/(2\ell)^{\ell}} \right)
\end{equation}
\end{theorem}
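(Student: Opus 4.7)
The plan is to prove Theorem~\ref{thm:kr} by induction on $\ell$, treating each $\rho$-perturbed vector as a fresh source of Gaussian randomness contracted against one mode at a time. The base case $\ell=1$ is one-dimensional Gaussian anti-concentration: if $T_1$ is any unit vector in $\calV$, then $\langle T_1,\xtil^{(1)}\rangle$ is a Gaussian of standard deviation $\rho/\sqrt n$ (plus a deterministic shift), so with probability $1-\exp(-\Omega(1))$ its magnitude is at least $\rho/\poly(n)$. The inductive step views a tensor $T\in\R^{n^{\times\ell}}$ as a linear map $\R^n\to\R^{n^{\times(\ell-1)}}$ given by contraction in the last mode, and reduces the statement for $\ell$ to the statement for $\ell-1$ after plugging in $\xtil^{(\ell)}$.

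To set up the reduction I would first construct inside $\calV$ a large $(\theta,\delta')$-orthogonal family $T_1,T_2,\dots,T_r$ of unit tensors, analogous to the system described in the overview for $\ell=2$. Concretely, each $T_j$ (flattened along the last mode) should expose a ``column direction'' $c_j\in\R^{n^{\ell-1}}$ that has magnitude at least $\theta$ after projecting orthogonally to $\spn\{c_1,\dots,c_{j-1}\}$. Because $\dim\calV\ge\delta n^\ell$ and the flattened tensors have $n^{\ell-1}$ rows, a greedy selection procedure (start from any orthonormal basis of $\calV$, peel off directions already captured, and keep any basis element whose residual flattening still has a sufficiently long column) produces such a family of size $r=\Omega(\delta n^\ell/n^{\ell-1})=\Omega(\delta n)$, with a quantitative $\theta$ that is a fixed inverse polynomial in $n$.

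Now fix the perturbed vectors and consider the contractions $S_j:=T_j(\cdot,\dots,\cdot,\xtil^{(\ell)})\in\R^{n^{\times(\ell-1)}}$. The orthogonal-system property, combined with Gaussian anti-concentration applied one coordinate at a time, guarantees that with probability at least $1-\exp(-\Omega(n^{1/(2\ell)^{\ell-1}}))$, a constant fraction of the $S_j$'s are ``fresh'', meaning each has norm at least $\rho\cdot\theta/\poly(n)$ orthogonal to $\spn\{S_1,\dots,S_{j-1}\}$ (this is exactly the ``set $S$ of indices'' mentioned in the sketch for $\ell=2$, now iterated). Consequently $\spn\{S_j\}$ is a subspace of $\R^{n^{\times(\ell-1)}}$ of dimension $\Omega(\delta n)$, and more importantly, after appropriate rescaling the $S_j$'s witness that this subspace has a robust orthonormal basis with condition number at most $\poly(n)/\rho\theta$. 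Then I would apply the inductive hypothesis with subspace $\calV':=\spn\{S_j\}$ (which has dimension $\delta' n^{\ell-1}$ for a new $\delta'>0$), with respect to the independent perturbed vectors $\xtil^{(1)},\dots,\xtil^{(\ell-1)}$, to conclude that some $S_j(\xtil^{(1)},\dots,\xtil^{(\ell-1)})=T_j(\xtil^{(1)},\dots,\xtil^{(\ell)})$ has norm at least the required bound.

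The main obstacle is tracking the parameters honestly through the recursion. Each inductive step loses a factor polynomial in $n$ coming from (i) the orthogonal-system threshold $\theta$, (ii) the condition-number hit when passing from $T_j$'s to $S_j$'s, and (iii) the fact that we must convert the new subspace $\calV'$ back into an unconditioned input for the next recursive call. If each step cubes the inverse-polynomial exponent, an $\ell$-fold recursion gives precisely the $(1/n)^{3^\ell}$ bound in \eqref{eq:thm:kr}; similarly the failure probability degrades by a root at each level, yielding $\exp(-\delta n^{1/(2\ell)^\ell})$. Verifying that the orthogonal-system construction can be carried out with parameters that fit this tower, and that the independence of the perturbations across modes is preserved throughout (so that each Gaussian anti-concentration step is legitimate even after conditioning on earlier levels), will be the delicate part of the argument.
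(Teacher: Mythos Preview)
Your inductive scheme is the same one the paper uses, but it has a gap that the paper explicitly identifies and works around. After one contraction you obtain $\calV'=\spn\{S_j\}\subset\R^{n^{\ell-1}}$ of dimension at most $r=\Omega(\delta n)$. To invoke the statement at level $\ell-1$ you need $\dim\calV'=\delta' n^{\ell-1}$, which forces $\delta'=O(\delta/n^{\ell-2})$. For $\ell\ge 3$ this $\delta'$ is inverse-polynomial in $n$, and then the failure-probability bound $\exp(-\delta' n^{1/(2(\ell-1))^{\ell-1}})$ is essentially $1$: the induction does not close. This is not a bookkeeping issue; Remark~\ref{rem:baddelta} notes that when the subspace has dimension $\approx n$ inside $\R^{n^2}$ one genuinely cannot hope for exponentially small failure probability, so the hypothesis you feed back in is too weak in kind, not just in constants.

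The paper's remedy is an extra preprocessing step you are missing: before running the induction, use Lemma~\ref{lem:blocks} to locate a sub-block $I_1\times\cdots\times I_\ell$ of sizes $n_1,\dots,n_\ell$ with $n_t\ge (n_{t+1}\cdots n_\ell)^2$ on which the robust dimension of (the projection of) $\calV$ is still a constant fraction of $n_1\cdots n_\ell$. The induction (Theorem~\ref{thm:induction}) is then carried out on this block: at each step you flatten as $n_t\times(n_{t+1}\cdots n_\ell)$ with $m\le\sqrt{n_t}$, so Lemma~\ref{lem:toosystem:create} yields an ordered $(\theta,\delta/2)$-orthogonal system with \emph{constant} $\delta'$ and $r'\approx n_{t+1}\cdots n_\ell$. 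After applying Lemma~\ref{lem:toosystem:implies} the resulting matrix has $\Omega(n_{t+1}\cdots n_\ell)$ large singular values, which is again a constant fraction of the ambient size for the next level. The price of this rebalancing is $n_\ell=n^{1/3^\ell}$, and that is what produces the $(1/n)^{3^\ell}$ and $\exp(-\delta n^{1/3^{O(\ell)}})$ in the conclusion. Your sketch correctly anticipates that losses compound geometrically, but without the block restriction the recursion loses a full factor of $n^{\ell-2}$ in $\delta$ per step, not merely a root of $n$.
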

\begin{remark}\label{rmk:weakbound}
Since the squared length of the projection is a degree $2\ell$ polynomial of the (Gaussian) variables $x_i$, we can apply standard anti-concentration results (Carbery-Wright, for instance) to conclude that the smallest singular value (in Theorem~\ref{thm:kr}) is at least an inverse polynomial, with failure probability at most an inverse polynomial. This approach can only give a singular value lower bound of $\text{poly}_\ell (p/n)$ for a failure probability of $p$, which is not desirable since the running time depends on the smallest singular value.
\end{remark}

\begin{remark}\label{rem:baddelta}
For meaningful guarantees, we will think of $\delta$ as a small constant or $n^{-o(1)}$ (note the dependence of the error probability on $\delta$ in eq~\eqref{eq:thm:kr}). For instance, as we will see in section~\ref{sec:higherproducts}, we can not hope for exponential small failure probability when $\calV \subseteq \R^{n^2}$ has dimension $n$. 
\end{remark}
The following restatement of Theorem~\ref{thm:kr} gives  a sufficient condition about the singular values of a matrix $P$ of size $r \times n^{\ell}$, that gives a strong anti-concentration property for values attained by vectors obtained by the tensor product of perturbed vectors.  This alternate view of Theorem~\ref{thm:kr} will be crucial in the inductive proof for higher $\ell$-wise products in section~\ref{sec:higherproducts}. 
   
\begin{theorem}[Restatement of Theorem~\ref{thm:kr}]\label{thm:restatement}
Given any constant $\delta_\ell \in (0,1)$ and any matrix $T$ of size $r \times (n^{\ell})$ such that $\sigma_{\delta n^{\ell}} \ge \eta$, then for random $\rho$-perturbations $\xtil^{(1)},\xtil^{(2)},\dots,\xtil^{(\ell)} \in \R^n$ of any vectors $x^{(1)},x^{(2)},\dots,x^{(\ell)} \in \R^n$, we have   
\begin{equation}
\Pr\left[ \norm{M \left( \xtil^{(1)},\xtil^{(2)},\dots,\xtil^{(\ell)} \right)} \ge \eta \rho^\ell \left(\frac{1}{n}\right)^{3^{O(\ell)}} \right] \ge 1- \exp\left(-\delta n^{1/3^{\ell}} \right)
\end{equation}
\end{theorem}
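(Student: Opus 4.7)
The plan is to derive Theorem~\ref{thm:restatement} from Theorem~\ref{thm:kr} via a short reduction using the singular value decomposition of $M$; this is precisely why the authors phrase it as a \emph{restatement} rather than a new result. The key insight is that the hypothesis $\sigma_{\delta n^\ell}(M) \ge \eta$ singles out, via the SVD, a subspace on which $M$ acts as a near-isometry of size $\eta$, reducing the desired norm lower bound to a statement about the projection of a perturbed product tensor onto a large subspace---which is exactly the content of Theorem~\ref{thm:kr}.

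Concretely, I would first let $\calV \subseteq \R^{n^{\times \ell}}$ be the span of the right singular vectors of $M$ corresponding to its top $\delta n^\ell$ singular values, viewed as tensors after reshaping. Then $\dim(\calV) = \delta n^\ell$ and $\|Mv\| \ge \eta \|v\|$ for every $v \in \calV$. I would then apply Theorem~\ref{thm:kr} with this subspace to produce unit-norm tensors $T_1, \dots, T_r \in \calV$ with the property that, with probability at least $1 - \exp\!\bigl(-\delta n^{1/3^{\ell}}\bigr)$ (the failure probability is inherited from Theorem~\ref{thm:kr}, with any loss absorbed into the $3^{O(\ell)}$ in the final exponent), there is some index $j$ for which $|\langle T_j,\, \tilde{x}^{(1)} \otimes \cdots \otimes \tilde{x}^{(\ell)}\rangle| \ge \rho^\ell / n^{3^\ell}$.

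On this favorable event, set $y = \tilde{x}^{(1)} \otimes \cdots \otimes \tilde{x}^{(\ell)}$ and let $P_\calV$ denote the orthogonal projection onto $\calV$. Because $T_j \in \calV$ is a unit vector, Cauchy--Schwarz gives $|\langle T_j, y\rangle| = |\langle T_j, P_\calV y\rangle| \le \|P_\calV y\|$, and hence $\|P_\calV y\| \ge \rho^\ell/n^{3^\ell}$. Since $P_\calV y$ lies in $\calV$, the singular value bound on $\calV$ yields
\begin{equation}
\|M(\tilde{x}^{(1)},\dots,\tilde{x}^{(\ell)})\| \;=\; \|My\| \;\ge\; \|M P_\calV y\| \;\ge\; \eta\, \|P_\calV y\| \;\ge\; \eta \rho^\ell / n^{3^{O(\ell)}},
\end{equation}
which is the desired conclusion.

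There is really no main obstacle in this argument beyond Theorem~\ref{thm:kr} itself: the SVD step is elementary, and the projection inequality is a one-line application of Cauchy--Schwarz. All of the genuinely probabilistic and combinatorial content---constructing an appropriate $(\theta,\delta)$-orthogonal system of test tensors inside $\calV$, and executing the delicate induction on $\ell$ that produces the $3^{\ell}$ exponent---has already been carried out in the proof of Theorem~\ref{thm:kr}. The only reason the restatement is stated separately is that the phrasing in terms of the singular value $\sigma_{\delta n^\ell}(M)$ is the version that will be fed recursively into the inductive step for higher $\ell$-wise Khatri--Rao products in Section~\ref{sec:higherproducts}, where after contracting one perturbation at a time we obtain a matrix $M'$ whose singular value profile is exactly what Theorem~\ref{thm:restatement} needs.
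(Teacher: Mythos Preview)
Your proposal is correct and follows precisely the route the paper indicates: in the remark immediately following Theorem~\ref{thm:restatement}, the authors say that the direction you need ``follows by choosing $\calV$ as the span of the top $\delta_\ell n^{\ell}$ right singular vectors of $T$,'' which is exactly your construction, and your Cauchy--Schwarz / SVD-orthogonality steps simply spell out the one line the paper leaves implicit. The only thing worth noting is that your inequality $\|My\| \ge \|M P_\calV y\|$ deserves a word of justification (it holds because $M P_\calV y$ and $M(I-P_\calV)y$ are orthogonal, being images of vectors in complementary right-singular subspaces), but this is routine.
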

\begin{remark}
Theorem~\ref{thm:kr} follows from the above theorem by choosing an orthonormal basis for $\calV$ as the rows of $T$. The other direction follows by choosing $\calV$ as the span of the top $\delta_\ell n^{\ell}$ right singular vectors of $T$. 
\end{remark}

\begin{remark}
Before proceeding, we remark that both forms of Theorem~\ref{thm:kr} could be of independent interest.  For instance, it follows from the above (by a small trick involving partitioning the coordinates), that a vector $\xtil^{\otimes \ell}$ has a non-negligible projection into any $cn^\ell$ dimensional subspace of $\R^{n^\ell}$ with probability $1- \exp(-f_\ell(n))$. For a vector $x \in \R^{n^\ell}$ whose entries are all independent Gaussians, such a claim follows easily, with probability roughly $1-\exp(-n^\ell)$.  The key difference for us is that $\xtil^{\otimes \ell}$ has essentially just $n$ bits of randomness, so many of the entries are highly correlated. So the theorem says that even such a correlated perturbation has enough mass in any large enough subspace, with high enough probability.  A natural conjecture is that the probability bound can be improved to $1-\exp(-\Omega(n))$, but it is beyond the reach of our methods.
\end{remark}

\subsection{Khatri-Rao Product of Two Matrices}

We first show Theorem~\ref{thm:restatement} for the case $\ell=2$. This illustrates the main ideas underlying the general proof. 
\begin{prop}\label{prop:matrixcase}
Let $0 < \delta < 1$ and $M$ be a $\delta n^2 \times n^2$ matrix with $\sigma_{\delta n^2} (M) \ge \tau$.  Then for random $\rho$-perturbations $\xtil, \ytil$ of any two $x, y \in \R^n$, we have   
\begin{equation}
\Pr\left[ \norm{M \left( \xtil \otimes \ytil \right)} \ge \frac{\tau \rho}{n^{O(1)}} \right] \ge 1- \exp\left(-\sqrt{\delta  n} \right).
\end{equation}
\end{prop}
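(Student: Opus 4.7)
Plan: Reshape each row of $M$ into an $n \times n$ matrix $\widehat{M}_i$, so that $(M(\widetilde{x} \otimes \widetilde{y}))_i = \widetilde{x}^T \widehat{M}_i \widetilde{y}$, and by Lemma~\ref{lem:loo} it suffices to lower bound the Euclidean norm of this vector of bilinear forms. Write $\widetilde{x} = x + (\rho/\sqrt{n}) g$ and $\widetilde{y} = y + (\rho/\sqrt{n}) h$ with $g, h$ independent standard Gaussians, and \emph{decouple}: reveal $g$ (and hence $\widetilde{x}$) first, so that each vector $w_i := \widehat{M}_i^{\,T} \widetilde{x} \in \R^n$ becomes fixed and $M(\widetilde{x}\otimes \widetilde{y})$ reduces to the linear form $(w_i^T \widetilde{y})_i$ in $h$.

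The core of the argument is to produce, with high probability over $g$, an index set $S \subset [\delta n^2]$ of size $k = \lceil \sqrt{\delta n}\rceil$ such that the $n \times k$ matrix $W_S$ with columns $\{w_i\}_{i \in S}$ has its $k$-th singular value at least $\tau \rho / n^{O(1)}$. To do this I would follow the strategy hinted at in the intuition section and extract from $M$ an $(\theta,\delta)$-orthogonal system of reshaped rows (the object defined in the paper's Definition~\ref{def:system}). The hypothesis $\sigma_{\delta n^2}(M) \ge \tau$ says that $M M^T \succeq \tau^2 I$, equivalently that any linear combination $\sum_i \alpha_i \widehat{M}_i$ has Frobenius norm at least $\tau \|\alpha\|_2$; I would use this to build the orthogonal system greedily, showing that at each stage either a constant fraction of the unused rows contributes a column with a large projection orthogonal to the currently spanned directions, or else the hypothesis is contradicted. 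Given the orthogonal system, I would then reveal the coordinates of $g$ in blocks tied to the system: each freshly-revealed block contributes independent randomness that, by a one-dimensional Gaussian small-ball bound, forces the next $w_i$ to have a large component orthogonal to $\mathrm{span}(\{w_j : j \in S\text{ chosen so far}\})$. Iterating $k$ times and union-bounding yields $S$ with $\sigma_k(W_S) \ge \tau\rho/n^{O(1)}$ except with probability $\exp(-\Omega(\sqrt{\delta n}))$.

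Finally, condition on $\widetilde{x}$ satisfying this event. The subvector $(w_i^T \widetilde{y})_{i \in S}$ is an affine Gaussian in $h$ whose covariance is $(\rho^2/n)\, W_S^{\,T} W_S$, with smallest eigenvalue at least $(\tau\rho/n^{O(1)})^2$. A standard anti-concentration bound for the norm of a $k$-dimensional Gaussian then shows that $\|(w_i^T \widetilde{y})_{i \in S}\| \ge \tau\rho / n^{O(1)}$ with probability at least $1 - \exp(-\Omega(k)) = 1 - \exp(-\Omega(\sqrt{\delta n}))$. Combining with the lower bound $\|M(\widetilde{x} \otimes \widetilde{y})\| \ge \|(w_i^T \widetilde{y})_{i \in S}\|$ and a final union bound proves the proposition.

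The main obstacle, and the step I expect to take the most care, is the middle one: constructing the orthogonal system from $M$ and then converting it, via the randomness in $\widetilde{x}$, into a well-conditioned collection $\{w_i\}_{i \in S}$ with total failure probability only $\exp(-\Omega(\sqrt{\delta n}))$. A naive Carbery--Wright-style application of anti-concentration to the whole quadratic form would only yield inverse-polynomial failure probability (as flagged in Remark~\ref{rmk:weakbound}); the purpose of the inductive/greedy construction is precisely to stitch together $k = \sqrt{\delta n}$ essentially-independent one-dimensional small-ball events to get the desired exponential-in-$\sqrt{n}$ tail. The $\sqrt{\delta n}$ size of $S$ is the natural ceiling given that the perturbation $\widetilde{x}$ carries only $n$ bits of real randomness.
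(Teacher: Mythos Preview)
Your plan is essentially the paper's proof: construct a $(\theta,\delta)$-orthogonal system in the row span of $M$ (Lemma~\ref{lem:toosystem:create}), use it to show that the matrix $Q(\xtil)$ whose rows are $\xtil^T M_j$ has $\Omega(\sqrt{n})$ non-negligible singular values with probability $1-\exp(-\Omega(\sqrt{n}))$ (Lemma~\ref{lem:toosystem:implies}), and finish with Gaussian anti-concentration in $\ytil$ (Fact~\ref{lem:sv:anticonc}).

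Two small points where your framing drifts from what actually works. First, the $r\approx\sqrt{n}$ matrices forming the orthogonal system are built as carefully chosen \emph{linear combinations} in the row span of $M$ (the iterative construction in Lemma~\ref{lem:toosystem:create}), not as selected rows; arbitrary rows of $M$ need not satisfy Definition~\ref{def:toos}, so your ``index set $S\subset[\delta n^2]$'' should really be a set of vectors in the row span, with the passage back to actual rows of $M$ done at the very end via the singular value bound (as in Appendix~\ref{app:prop10}). Second, the independence you need does not come from revealing coordinate blocks of $g$: the $(\theta,\delta)$-orthogonality is a statement about the \emph{columns} $(M_j)_i$ of the constructed matrices, and it guarantees that the entries $\langle \xtil,(M_j)_i\rangle$ of $Q(\xtil)$ have, column by column, a fresh Gaussian component orthogonal to all previously exposed columns (Lemma~\ref{lem:suffices-lem1}). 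That is the mechanism that yields $\sqrt{\delta n}$ essentially independent one-dimensional small-ball events; a coordinate-block revelation of $g$ would not give you this, since each $w_i$ depends on all of $\xtil$.
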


The high level outline is now the following. Let $\calU$ denote the span of the top $\delta n^2$ singular vectors of $M$.  We show that for $r = \Omega(\sqrt{n})$, there exist $n \times n$ matrices $M_1, M_2, \dots, M_r$ whose columns satisfy certain orthogonal properties we define, and additionally $\vect(M_i) \in \calU$ for all $i \in [r]$.  We use the orthogonality properties to show that $(\xtil \otimes \ytil)$ has an $\rho/\poly(n)$ dot-product with at least one of the $M_i$ with probability $\ge 1-\exp(-r)$.

\paragraph{The $\theta$-orthogonality property.}
In order to motivate this, let us consider some matrix $M_i \in \R^{n\times n}$ and consider $M_i (x \otimes y)$. This is precisely $y^T M_i x$.  Now suppose we have $r$ matrices $M_1, M_2, \dots, M_r$, and we consider the sum $\sum_i (y^T M_i x)^2$.  This is also equal to $\norm{Q(y) x}^2$, where $Q(y)$ is an $r \times n$ matrix whose $(i,j)$th entry is $\iprod{y, (M_i)_j}$ (here $(M_i)_j$ refers to the $j$th column in $M_i$).

Now consider some matrices $M_i$, and suppose we knew that $Q(\ytil)$ has $\Omega(r)$ singular values of magnitude $\ge 1/n^2$.  Then, an $\rho$-perturbed vector $\xtil$ has at least $\rho/n$ of its norm in the space spanned by the corresponding right singular vectors, with probability $\ge 1-\exp(-r)$ (Fact~\ref{lem:sv:anticonc}).  Thus we get 
\[ \Pr[ \norm{Q(\ytil) \xtil} \ge \rho/n^3 ] \ge 1-\exp(-r). \]
So the key is to prove that the matrix $Q(\ytil)$ has a large number of ``non-negligible'' singular values with high probability (over the perturbation in $\ytil$). For this, let us examine the entries of $Q(\ytil)$.  For a moment suppose that $\ytil$ is a gaussian random vector $\sim \calN(0, \rho^2 I)$ (instead of a perturbation).  Then the $(i,j)$th entry of $Q(\ytil)$ is precisely $\iprod{\ytil, (M_i)_j}$, which is distributed like a one dimensional gaussian of variance $\rho^2 \norm{(M_i)_j}^2$.  If the entries for different $i,j$ were independent, standard results from random matrix theory would imply that $Q(\ytil)$ has many non-negligible singular values.

However, this could be far from the truth.  Consider, for instance, two vectors $(M_i)_j$ and $(M_{i'})_{j'}$ that are parallel.  Then their dot products with $\ytil$ are highly correlated.  However we note, that as long as $(M_{i'})_{j'}$ has a reasonable component orthogonal to $(M_i)_j$, the distribution of the $(i,j)$ and $(i',j')$th entries are ``somewhat'' independent. We will prove that we can roughly achieve such a situation.  This motivates the following definition.

\begin{definition}\label{def:system}[Ordered $\theta$-orthogonality]
A sequence of vectors $v_1, v_2, \dots, v_n$ has the ordered $\theta$-orthogonality property if for all $1\le i\le n$, $v_i$ has a component of length $\ge \theta$ orthogonal to $\spn\{ v_1, v_2, \dots, v_{i-1} \}$.  
\end{definition}

Now we define a similar notion for a sequence of matrices $M_1, M_2, \dots, M_r$, which says that a large enough subset of columns should have a certain $\theta$-orthogonality property.  
More formally,
\begin{definition}[Ordered $(\theta,\delta)$-orthogonal system]\label{def:toos}
A set of $n \times m$ matrices $M_1, M_2, \dots, M_r$ form an {\em ordered $(\theta,\delta)$-orthogonal system} if there exists a permutation $\pi$ on $[m]$ such that the first $\delta m$ columns satisfy the followng property: for $i\le \delta m$ and every $j \in [R]$, the $\pi(i)$th column of $M_j$  has a projection of length $\ge \theta$ orthogonal to the span of all the vectors given by the columns $\pi(1), \pi(2), \dots, \pi(i-1),\pi(i)$ of all the matrices $M_1, M_2, \dots M_r$ other than itself (i.e. the $\pi(i)$th column of $M_j$). 
\end{definition}

The following lemma shows the use of an ordered $(\theta,\delta)$ orthogonal system: a matrix $Q(\ytil)$ constructed as above starting with these $M_i$ has many non-negligible singular values with high probability.

\begin{lemma}[Ordered $\theta$-orthogonality and perturbed combinations.]\label{lem:toosystem:implies}
Let $M_1, M_2, \dots, M_r$ be a set of $n \times m$ matrices of bounded norm ($\norm{\cdot}_F \ge 1$) that are $(\theta,\delta)$ orthogonal for some parameters $\theta, \delta$, and suppose $r\le \delta m$. Let $\xtil$ be an $\rho$-perturbation of $x \in \R^{n}$. Then the $r \times m$ matrix $Q(\xtil)$ formed with the $j$th row of $\left(Q(\xtil)\right)_j$ being $\transpose{\xtil} M_j$ satisfies 
$$ \Pr_{x} \left[ \sigma_{r/2}\left( Q(\xtil) \right) \ge \frac{\rho \theta}{n ^4} \right] \ge 1- \exp\left(-r \right)$$
\end{lemma}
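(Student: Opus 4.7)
The plan is to lower-bound $\sigma_{r/2}(Q(\xtil))$ by exhibiting an $r\times(r/2)$ submatrix of $Q(\xtil)$ whose smallest singular value is at least $\rho\theta/n^4$. Let $\pi$ be the permutation witnessing the $(\theta,\delta)$-orthogonality, and let $Q'$ denote the submatrix of $Q(\xtil)$ consisting of the columns indexed by $\pi(1),\pi(2),\dots,\pi(r/2)$. Since dropping columns can only decrease singular values, $\sigma_{r/2}(Q(\xtil))\ge \sigma_{\min}(Q')$, so it suffices to lower-bound the right-hand side. The strategy is to expose the randomness of $\xtil$ in stages indexed by $i=1,\dots,r/2$, and show that, conditional on earlier stages, the $i$-th column of $Q'$ acquires a fresh Gaussian increment of appreciable isotropic covariance in $\R^r$.

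For each $i$, define $U_{i-1}^+:=\spn\{(M_{j'})_{\pi(i')}:\,i'<i,\;j'\in[r]\}\subseteq\R^n$, let $t_{i,j}:=P_{(U_{i-1}^+)^\perp}(M_j)_{\pi(i)}$, and let $V_i:=\spn\{t_{i,j}:j\in[r]\}$. The first key observation is that $\{t_{i,j}\}_{j\in[r]}$ themselves have leave-one-out distance $\ge \theta$ in $\R^n$. Indeed, the ``forbidden subspace'' $W$ appearing in the $(\theta,\delta)$-orthogonality for $(M_j)_{\pi(i)}$ is the orthogonal direct sum $U_{i-1}^+\oplus\spn\{t_{i,j'}:j'\ne j\}$, so $P_{W^\perp}(M_j)_{\pi(i)}$ equals the projection of $t_{i,j}$ onto $\spn\{t_{i,j'}:j'\ne j\}^\perp$; the condition $\|P_{W^\perp}(M_j)_{\pi(i)}\|\ge \theta$ is therefore exactly $\mathrm{dist}(t_{i,j},\,\spn\{t_{i,j'}:j'\ne j\})\ge \theta$. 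Lemma~\ref{lem:loo} then yields $\sigma_{\min}(T_i)\ge \theta/\sqrt{r}$ for $T_i:=[t_{i,1}\mid\cdots\mid t_{i,r}]\in\R^{n\times r}$, hence $\transpose{T_i}T_i\succeq (\theta^2/r)\,I_r$. Moreover, the subspaces $V_1,\dots,V_{r/2}$ are mutually orthogonal, so the restrictions $\xtil|_{V_i}$ are independent Gaussians with isotropic covariance $(\rho^2/n)I$. Conditional on $\xtil|_{U_{i-1}^+}$, the $i$-th column of $Q'$ takes the form $Q'_i=c_i+\transpose{T_i}\,\xtil|_{V_i}$, where $c_i\in\R^r$ is determined by the earlier stages.

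To conclude, fix a unit $u\in\R^{r/2}$ and expand $Q'u=\sum_i u_i c_i+\sum_i u_i\,\transpose{T_i}\,\xtil|_{V_i}$; the second term is a Gaussian in $\R^r$ with covariance $(\rho^2/n)\sum_i u_i^2\,\transpose{T_i}T_i\succeq (\rho\theta)^2/(nr)\,I_r$, using $\|u\|=1$ and the bound on $\transpose{T_i}T_i$. A standard density-vs-volume Gaussian anti-concentration estimate then gives $\Pr\bigl[\|Q'u\|\le\varepsilon\bigr]\le\bigl(e\varepsilon^2 n/(\rho\theta)^2\bigr)^{r/2}$ for every fixed $u$. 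An $\varepsilon$-net over the unit sphere of $\R^{r/2}$ (of size $\exp(O(r))$), combined with a high-probability polynomial upper bound on $\|Q'\|_{\mathrm{op}}$ (via $\|Q'\|_F$ and concentration of $\|\xtil\|$), converts this into $\sigma_{\min}(Q')\ge \rho\theta/n^4$ with probability $\ge 1-\exp(-\Omega(r))$. The main obstacle is the bookkeeping in setting up the stage-by-stage decomposition: verifying that the ordered $(\theta,\delta)$-orthogonality transfers cleanly into a leave-one-out bound on the $t_{i,j}$ (and hence into the singular-value bound on $T_i$), and confirming that the Gaussian increments from different stages are genuinely independent so that the per-$u$ covariance lower bound is uniform in $u$. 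The $\varepsilon$-net step and the Gaussian anti-concentration are routine and lose only polynomial-in-$n$ factors, comfortably absorbed into the $1/n^4$ slack.
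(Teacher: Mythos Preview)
There is a genuine gap in the step where you pass from the decomposition $Q'u = \sum_i u_i c_i + \sum_i u_i\, \transpose{T_i}\,\xtil|_{V_i}$ to the small-ball bound on $\|Q'u\|$. The second sum does have the covariance you write, but the first sum is \emph{not} a deterministic shift: each $c_i$ is a linear function of $\xtil|_{V_1},\dots,\xtil|_{V_{i-1}}$, which are exactly the Gaussians appearing in the earlier terms of the second sum. The two sums are therefore correlated, and the true covariance of $Q'u$ is $(\rho^2/n)\,W^T W$ with $w_j=\sum_i u_i (M_j)_{\pi(i)}$, not $(\rho^2/n)\sum_i u_i^2\, \transpose{T_i} T_i$. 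The $(\theta,\delta)$-orthogonality only constrains the $V_i$-component $t_{i,j}$ of $(M_j)_{\pi(i)}$; the $V_{i'}$-components for $i'<i$ are unconstrained and can cancel the diagonal contributions. Concretely, with $r=4$, $r/2=2$, $n=8$, set $(M_j)_{\pi(1)}=\theta e_j$ and $(M_j)_{\pi(2)}=-a\,e_j+\theta e_{4+j}$ for a constant $a$: this satisfies the orthogonality condition, yet for $u_1/u_2=a/\theta$ one gets $w_j=(\theta^2/\sqrt{a^2+\theta^2})\,e_{4+j}$, so $\sigma_{\min}(W)\asymp\theta^2$ rather than $\theta/\sqrt r$. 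Iterating over $r/2$ columns drives $\sigma_{\min}(W)$ down like $(\theta\sqrt r)^{\,r/2}$, which is far too small to beat the $\exp(O(r))$ net.

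The paper's argument avoids this by never fixing a direction $u$. It draws an auxiliary Gaussian $g\in\R^r$ independent of $\xtil$ and looks at the scalars $g^T Q(\xtil)_i=\langle \xtil,\ \sum_s g_s (M_s)_i\rangle$ column by column: with high probability over $g$, the inner vector has a component of size $\ge\theta/2$ orthogonal to its predecessors (this is where $(\theta,\delta)$-orthogonality is used), so the inner product with $\xtil$ is anti-concentrated given the previous columns. Hence the probability that \emph{all} $g^T Q(\xtil)_i$ are tiny is at most $(2n)^{-|I|}$. On the other hand, if $Q(\xtil)$ had fewer than $r/2$ singular values above threshold, a random $g$ would land near the small-singular-value directions with probability at least $n^{-O(r)}$, forcing all $g^T Q(\xtil)_i$ to be tiny with at least that probability --- a contradiction. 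This contrapositive with a single random $g$ (rather than a net over $u$) is exactly the missing idea.
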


We defer the proof of this Lemma to section~\ref{sec:theta-suffice}.
%
Our focus will now be on constructing such a $(\theta,\delta)$ orthogonal system of matrices, given a subspace $\calV$ of $\R^{n^2}$ of dimension $\Omega(n^2)$. The following lemma achieves this

\begin{lemma}\label{lem:toosystem:create}
Let $\calV$ be a $\delta \cdot nm$ dimensional subspace $\R^{nm}$, and suppose $r, \theta,\delta'$ satisfy $\delta' \le \delta/2$, $r \cdot  \delta' m < \delta n/2$ and $\theta = 1/(nm^{3/2})$.  Then there exist $r$ matrices $M_1, M_2, \dots, M_r$ of dimension $n \times m$ with the following properties
\begin{enumerate}
\item $\vect(M_i) \in \calV$ for all $i \in [r]$.
\item $M_1, M_2, \dots ,M_r$ form an ordered $(\theta,\delta')$ orthogonal system.
\end{enumerate}
In particular, when $m \le \sqrt{n}$, they form an ordered $(\theta,\delta/2)$ orthogonal system. 
\end{lemma}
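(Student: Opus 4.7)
The plan is to build $M_1, \dots, M_r$ and the permutation $\pi$ jointly, in $\delta' m$ rounds: in round $i$ we choose $\pi(i) \in [m]$ and then fix the $\pi(i)$-th column of every $M_j$ so that it has projection of length at least $\theta$ orthogonal to the columns already revealed. Once all rounds are done, each $M_j$ is completed on its remaining columns by picking any extension that keeps it in $\calV$, which is possible by construction.

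The core accounting uses the residual subspace
\[ \calW^{(i)} := \{M \in \calV : M_{\cdot, \pi(\ell)} = 0 \text{ for all } \ell < i\}. \]
Since each previously-fixed column index imposes at most $n$ linear constraints, $\dim \calW^{(i)} \ge \delta nm - n(i-1) \ge (\delta/2) nm$ using $\delta' \le \delta/2$. Let $\Pi_k(\calW^{(i)}) \subseteq \R^n$ be the projection of $\calW^{(i)}$ to the $k$-th column slot; the inclusion $\calW^{(i)} \subseteq \prod_k \Pi_k(\calW^{(i)})$ gives $\sum_k \dim \Pi_k(\calW^{(i)}) \ge \dim \calW^{(i)}$, so averaging over the $\le m$ remaining indices yields some $\pi(i)$ with $\dim \Pi_{\pi(i)}(\calW^{(i)}) \ge \delta n/2$.

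With $\pi(i)$ fixed, we process $M_1, \dots, M_r$ in turn. When it is $M_j$'s turn, the set $S$ of previously revealed column-vectors in $\R^n$ satisfies $|S| \le r(i-1) + (j-1) < r \delta' m < \delta n / 2$. The $\pi(i)$-th column of $M_j$ is constrained to lie in an affine translate $c_j + \Pi_{\pi(i)}(\calW^{(i)})$, where $c_j$ is determined by $M_j$'s already-fixed columns (it is the $\pi(i)$-th column of any extension of those choices to a full matrix in $\calV$). Since $\dim \Pi_{\pi(i)}(\calW^{(i)}) > |S|$, the intersection $\Pi_{\pi(i)}(\calW^{(i)}) \cap S^\perp$ contains a unit vector $u$; setting the column to $c_j + \alpha u$ for a suitable $\alpha = \Theta(1)$ gives an $S^\perp$-projection of length at least $|\alpha|$, comfortably exceeding $\theta = 1/(nm^{3/2})$.

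The principal subtlety is the affine shift $c_j$: because $M_j$'s earlier-fixed columns pin it to an affine slice of $\calV$, the $\pi(i)$-th column cannot be chosen freely within the linear subspace $\Pi_{\pi(i)}(\calW^{(i)})$, only within a translate of it. The remedy above exploits the full linear freedom in $\Pi_{\pi(i)}(\calW^{(i)})$ by taking $\alpha u$ large enough that it dominates any unwanted orthogonal part of $c_j$. The two quantitative hypotheses $\delta' \le \delta/2$ and $r \delta' m < \delta n/2$ are precisely what guarantee, respectively, that $\dim \Pi_{\pi(i)}(\calW^{(i)}) \ge \delta n/2$ holds across every round and that $|S|$ never overtakes this dimension, so the construction can be carried out to completion.
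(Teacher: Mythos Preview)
Your overall strategy matches the paper's: build the $M_j$ and the column ordering $\pi$ iteratively, at each stage picking a column index on which the residual subspace $\calW^{(i)}$ still projects to a large slice of $\R^n$, and then using that freedom to give every $M_j$ a new column with a large component orthogonal to what came before. The dimension bookkeeping (each frozen column costs at most $n$ dimensions of $\calW^{(i)}$; averaging recovers a good $\pi(i)$; $|S| < \delta n/2$ leaves room in the projection) is correct and is essentially the content of the paper's lemmas on $\Rstz{\calV}{J}$ and column-projection dimensions.

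The gap is that you never control $\norm{M_j}_F$. The specific value $\theta = 1/(nm^{3/2})$ is only meaningful for matrices of bounded (say unit) Frobenius norm; without a norm bound the conclusion is vacuous, since one could just scale up. In your construction the norm can blow up in two places: the affine shift $c_j$ (column $\pi(i)$ of ``some extension'' of the already-fixed columns) may be arbitrarily large depending on $\calV$, and so may the final ``any extension'' that fills in the unused columns. Concretely, if every element of $\calV$ realizing your chosen column values is forced by the linear constraints defining $\calV$ to have some other column of norm $K$, then after normalizing $M_j$ to unit norm your $\Theta(1)$ orthogonal component shrinks to $\Theta(1/K)$, and the $\theta$-bound is lost. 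Using the ordinary projection dimension $\dim \Pi_{\pi(i)}(\calW^{(i)})$ gives you no handle on this.

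The paper closes this gap by replacing $\dim \Pi_{\pi(i)}(\cdot)$ with a \emph{robust} dimension $\dimt{i}{\cdot}$ that counts only directions in $\R^n$ achievable as column $i$ of some matrix in $\calW^{(i)}$ of norm at most $\tau=\sqrt{m}$; a separate lemma shows these robust dimensions still sum to at least $\dim \calW^{(i)}$, so the averaging step survives. The construction is then \emph{additive} rather than affine: at stage $t$ one picks $Z_1,\dots,Z_r \in \Rstz{\calV}{J_{t-1}}$, each of norm at most $\sqrt{mn}$, whose $\pi(t)$-th columns are orthonormal and orthogonal to all previously frozen columns, and updates $M_j \leftarrow M_j + Z_j$. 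Because each $Z_j$ vanishes on the earlier column indices those columns stay frozen, and because each $Z_j$ has bounded norm one gets $\norm{M_j}_F \le s\sqrt{mn}$; normalizing then yields the stated $\theta$. Your affine-shift device cannot recover this bound without importing the robust-dimension / bounded-preimage idea.
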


We remark that while $\delta$ is often a constant in our applications, $\delta'$ does not have to be.  We will use this in the proof that follows, in which we use these above two lemmas regarding construction and use of an ordered $(\theta,\delta)$-orthogonal system to prove Proposition~\ref{prop:matrixcase}.

\paragraph{Proof of Proposition~\ref{prop:matrixcase} }
The proof follows by combining Lemma~\ref{lem:toosystem:create} and Lemma~\ref{lem:toosystem:implies} in a fairly straightforward way. 
Let $\calU$ be the span of the top $\delta n^2$ singular values of $M$.  Thus $\calU$ is a $\delta n^2$ dimensional subspace of $\R^{n^2}$.  It has three steps:
\begin{enumerate}
\item We use Lemma~\ref{lem:toosystem:create} with
$m = n,  \delta' = \frac{\delta}{n^{1/2}}, \theta=\frac{1}{n^{5/2}}$ to obtain $r = \frac{n^{1/2}}{2}$ matrices  $M_1, M_2, \dots, M_r \in \R^{n \times n}$ having the $(\theta, \delta')$-orthogonality property.
\item Now, applying Lemma~\ref{lem:toosystem:implies}, we have that the matrix $Q(\xtil)$, defined as before, (given by linear combinations along $\xtil$) , 
has  $\sigma_{r/2}\left( Q(\xtil) \right) \ge \frac{\rho \theta}{n ^4}$ w.p $1-\exp(-\sqrt{n})$. 
\item Applying Fact~\ref{lem:sv:anticonc}  along with a simple averaging argument, we have that for one of the terms $M_i$, we have $|M_i (\xtil \otimes \ytil)| \ge \rho \theta/n^6$ with probability $\ge 1-\exp(-r/2)$ as required.
\end{enumerate}
Please refer to Appendix~\ref{app:prop10} for the complete details.
\qed

The proof for higher order tensors  will proceed  along similar lines. However we require an additional pre-processing step and a careful inductive statement (Theorem~\ref{thm:induction}), whose proof invokes Lemmas~\ref{lem:toosystem:create} and~\ref{lem:toosystem:implies}. The issues and details with higher order products are covered in Section~\ref{sec:higherproducts}.
The following two sections are devoted to proving the two lemmas i.e. Lemma~\ref{lem:toosystem:create} and Lemma~\ref{lem:toosystem:implies}. These will be key to the general case ($\ell >2$) as well.
%


\subsection{Constructing the $(\theta,\delta)$-Orthogonal System (Proof of Lemma~\ref{lem:toosystem:create})}

Recollect that $\calV$ is a subspace of $\R^{n \cdot m}$ of dimension $\delta nm$ in Lemma~\ref{lem:toosystem:create}. We will also treat a vector $M \in \calV$ as a matrix of size $n \times m$,  with its co-ordinates indexed by $[n]\times [m]$.  

We want to construct many matrices $M_1, M_2, \dots M_r \in \R^{n \times m}$ such that a reasonable fraction of the $m$ columns satisfy $\theta$-orthogonality property. Intuitively, 
such columns would have $\Omega(n)$ independent directions in $\R^n$, as choices for the $r$ matrices $M_1, M_2, \dots, M_r$. Hence, we need to identify columns $i \in [m]$, such that the projection of $\calV$ onto these $n$ co-ordinates (in column $i$) spans a large dimension, in a robust sense.  
This notion is formalized by defining the robust dimension of column projections, as follows.
\begin{definition}[Robust Dimension of projections]\label{def:dimproj}
For a subspace $\calV$ of $\R^{n\cdot m}$, we define its robust dimension $\dimt{i}{\calV}$ to be
\begin{align*}
\dimt{i}{\calV}=\max_d ~\text{s.t.} ~& \exists ~\text{orthonormal } v_1,v_2,\dots, v_d \in \R^n \text{ and } M_1,M_2, \dots, M_d \in \calV \\ 
&~\text{with } \forall t \in [d], \norm{M_t}\le \tau ~\text{and } v_t = M_t(i)  .
\end{align*}
\end{definition}
This definition ensures that we do not take into account those spurious directions in $\R^n$ that are covered to an insignificant extent by projecting (unit) vectors in $\calV$ to the $i$th column.    
\newcommand{\rhotau}{\frac{1}{\sqrt{p_2}}}
Now, we would like to use the large dimension of  $\calV$ (dim=$\delta nm$) to conclude that there are many columns projections having large robust dimensions of around $\delta n$ .
\begin{lemma}\label{lem:blocks}
In any subspace $\calV$ in $\R^{p_1 \cdot p_2}$ of dimension $\dim(\calV)$ for any $\tau \ge \sqrt{p_2}$, we have 
\begin{equation}
\sum_{i \in [p_2]} \dimt{i}{\calV} \ge \dim(\calV)
\end{equation} 
\end{lemma}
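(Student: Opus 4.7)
The plan is to prove this by induction on $D = \dim(\calV)$, with the base case $D=0$ trivial. At the inductive step, I would first observe that any $M \in \calV$ with $\|M\| = 1$, viewed as a matrix of size $p_1 \times p_2$, satisfies $\sum_{i \in [p_2]} \|M(i)\|^2 = 1$, so some column $i^*$ has $\|M(i^*)\| \ge 1/\sqrt{p_2}$. Setting $v = M(i^*)/\|M(i^*)\|$ and rescaling $M_* = M/\|M(i^*)\| \in \calV$ yields $M_*(i^*) = v$ together with $\|M_*\| = 1/\|M(i^*)\| \le \sqrt{p_2} \le \tau$, which is already a single witness contributing $1$ to $\dimt{i^*}{\calV}$.

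To iterate, I would peel off this witness by passing to the codimension-one subspace $\calV'' = \{N \in \calV : \langle N(i^*), v\rangle = 0\}$. Since $M \in \calV$ satisfies $\langle M(i^*), v\rangle = \|M(i^*)\| > 0$, the defining linear constraint is nontrivial on $\calV$, so $\dim \calV'' = D - 1$. The inductive hypothesis applied to $\calV'' \subseteq \R^{p_1 \cdot p_2}$ with the same threshold $\tau$ then yields $\sum_i \dimt{i}{\calV''} \ge D - 1$.

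To close the induction, I would show that $\sum_i \dimt{i}{\calV} \ge \sum_i \dimt{i}{\calV''} + 1$. For $i \neq i^*$, any witness family for $\dimt{i}{\calV''}$ transfers verbatim to $\calV \supseteq \calV''$, so $\dimt{i}{\calV} \ge \dimt{i}{\calV''}$. For $i = i^*$, take a witness $(u_1, \dots, u_d; N_1, \dots, N_d)$ for $\dimt{i^*}{\calV''} = d$; each $u_t = N_t(i^*)$ is automatically orthogonal to $v$ by the defining constraint on $\calV''$, so $\{v, u_1, \dots, u_d\}$ is orthonormal in $\R^{p_1}$. Adjoining $(v; M_*)$ to this witness then produces a valid family in $\calV$, proving $\dimt{i^*}{\calV} \ge d + 1$. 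Summing over $i$ gives $\sum_i \dimt{i}{\calV} \ge (D-1) + 1 = D$.

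The only subtlety is the bookkeeping that ensures the new witness $M_*$ at column $i^*$ is compatible with the witnesses inherited from $\calV''$, and this is handled automatically by hard-coding the orthogonality constraint $\langle N(i^*), v\rangle = 0$ into the definition of $\calV''$. The hypothesis $\tau \ge \sqrt{p_2}$ is tight for this argument, as $\sqrt{p_2}$ is exactly the worst-case rescaling factor $1/\|M(i^*)\|$ incurred when promoting a unit-norm $M$ into a preimage of a unit vector $v$ in the heaviest column.
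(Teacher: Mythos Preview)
Your inductive argument is correct and genuinely different from the paper's proof. The paper fixes an orthonormal basis $B$ for $\calV$, splits it into $p_2$ row-blocks $B_i \in \R^{p_1 \times d}$, and sets $d_i$ to be the number of singular values of $B_i$ exceeding $1/\sqrt{p_2}$. It then shows $\sum_i d_i \ge d$ by contradiction: if not, one can find a unit vector $\alpha \in \R^d$ lying in the span of the bottom right singular vectors of every $B_i$, forcing $\norm{B\alpha}^2 = \sum_i \norm{B_i \alpha}^2 < 1$, contradicting orthonormality of the columns of $B$. Finally it invokes Lemma~\ref{lem:smallcomb} (expressing top left singular vectors as short combinations of columns) to certify $\dimt{i}{\calV} \ge d_i$.

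Your approach bypasses the SVD entirely: you peel off one witness at a time by locating a heavy column in a unit-norm element, then recurse on the codimension-one slice $\calV'' = \{N \in \calV : \iprod{N(i^*), v} = 0\}$. The key point that makes the recombination work --- that witnesses for $\dimt{i^*}{\calV''}$ automatically have $i^*$-th column orthogonal to $v$, so the enlarged family stays orthonormal --- is correctly identified. Your argument is more elementary (no singular values, no auxiliary lemma on short combinations) and arguably cleaner for the bare inequality being proved. The paper's approach, on the other hand, produces in one shot the block-wise quantities $d_i$ tied to singular values of $B_i$, which is structurally closer to how the lemma is later used in identifying a ``good block'' of coordinates; but for the statement as written, your induction suffices.
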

\begin{remark}
This lemma will also be used in the first step of the proof of Theorem~\ref{thm:kr} to identify a {\em good block} of co-ordinates which span a large projection of a given subspace $\calV$.
\end{remark}
The above lemma is easy to prove if the dimension of the column projections used is the usual dimension of a vector space. However, with robust dimension, to carefully avoid spurious or insignificant directions, we identify the robust dimension with the number of large singular values of a certain matrix.    

\begin{proof}
Let $d=\dim(\calV)$. Let $B$ be a $(p_1 p_2) \times d$ matrix, with the $d$ columns comprising an orthonormal basis for $\calV$. Clearly $\sigma_d (B)=1$. 
Now, we split the matrix $B$ into $p_1$ blocks of size $p_1 \times d$ each. For $i \in [p_2]$,  let $B_i \in \R^{p_1 \times d}$ be the projection of $B$ on the rows given by $[p_1] \times i$. 
Let $d_i=\max t$ such that $\sigma_t(B_i) \ge \rhotau$. \\
We will first show that $\sum_i d_i \ge d$. Then we will show that $\dimt{i}{\calV} \ge d_i$ to complete our proof.

Suppose for contradiction that $\sum_{i \in [p_2]} d_i < d$. Let $\calS_i$ be the $(d-d_1)$-dimensional subspace of $\R^d$ spanned by the last $(d-d_1)$ right singular vectors of $B_i$. Hence, 
$$\text{ for unit vectors} ~\alpha \in \calS_i \subseteq \R^d,~ \norm{B_i \alpha} < \rhotau .$$
Since, $d-\sum_{i \in [p_2]} d_i >0$, there exists at least one unit vector $\alpha \in \bigcap_i \calS_i^{\perp}$. Picking this unit vector $\alpha \in \R^d$, we can contradict $\sigma_d(B) \ge 1$  

To establish the second part, consider the $d_i$ top left-singular vectors  for matrix $B_i$ ($\in \R^{p_1}$) .  These $d_i$ vectors can be expressed as small combinations ($\norm{\cdot}_2 \le \sqrt{p_2}$) of the columns of $B_i$ using Lemma~\ref{lem:smallcomb}.  
The corresponding $d_i$ small combinations of the columns of the whole matrix $B$, gives vectors in $\R^{p_1 p_2}$ which have length $\sqrt{p_2}$ as required (since column of $B$ are orthonormal). 
\end{proof}

We will construct the matrices  $M_1, M_2, \dots, M_r \in \R^{n \times m}$ in multiple stages. In each stage, we will focus on one column $i \in [m]$: we \emph{fix} this column for all the matrices $M_1,M_2,\dots, M_r$, so that this column satisfies the ordered $\theta$-orthogonal property w.r.t previously chosen columns, and then leave this column unchanged in the rest of the stages.  

In each stage $t$ of this construction we will be looking at subspaces of $\calV$ which are obtained by zero-ing out all the  columns $J \subseteq [m]$ (i.e. all the co-ordinates $[n] \times J$), that we have fixed so far. 

\begin{definition}[Subspace Projections]
For $J \subseteq [m]$, let $\Rstz{\calV}{J} \subseteq \R^{n \cdot (m-|J|)}$ represent the subspace obtained by projecting on to the co-ordinates $[n] \times ([m]-J)$, the subspace of $\calV$ having zeros on all the co-ordinates $[n]\times J$. 
\begin{equation*}
\Rstz{\calV}{J}=\set{ M' \in \R^{n \cdot (m-|J|)}: \exists M \in \calV ~\text{s.t.}~  \text{ columns } M(i) =M'(i) ~\text{ for } i \in [m]-J ,~\text{and } 0 ~\text{otherwise }.}
\end{equation*}
The extension $\extz{J}{M'}$ for $M' \in \Rstz{\calV}{J}$ is the vector $M \in \calV$ obtained by padding $M'$ with zeros in the coordinates $[n]\times J$ (columns given by $J$).   
\end{definition}
 
The following lemma shows that their dimension remains large as long as $|J|$ is not too large:
\begin{lemma}\label{lem:2d:restz}
For any $J \subseteq [m]$ and any subspace $\calV$ of $\R^{n\cdot m}$ of dimension $\delta\cdot nm$, the subspace having zeros in the co-ordinates $[n] \times J$ has  
$\dim\left(\Rstz{\calV}{J}\right)  \ge n (\delta m -|J|)$.
\end{lemma}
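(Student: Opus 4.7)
The plan is to prove this by a direct application of the rank–nullity theorem to an appropriate projection map. The intuition is that forcing vectors in $\calV$ to vanish on the coordinates indexed by $[n]\times J$ imposes at most $n|J|$ linear constraints, so the dimension of the resulting subspace can drop by at most $n|J|$.

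More concretely, I would first define the linear map $\pi_J : \calV \to \R^{n\cdot |J|}$ that sends a vector $M \in \calV$ (viewed as an $n \times m$ matrix) to the restriction $M|_{[n]\times J}$ of its entries in the columns indexed by $J$. This is clearly linear, and its image lies in a space of dimension $n|J|$, so $\dim(\mathrm{image}(\pi_J)) \le n|J|$. The kernel of $\pi_J$ is exactly the subspace of $\calV$ whose elements vanish on all coordinates in $[n] \times J$, which, by the definition of $\Rstz{\calV}{J}$ (where we simply drop the identically-zero columns in $J$), is in natural linear bijection with $\Rstz{\calV}{J}$ via the extension map $\extz{J}{\cdot}$. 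In particular $\dim(\ker \pi_J) = \dim\bigl(\Rstz{\calV}{J}\bigr)$.

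Applying rank–nullity to $\pi_J$ then gives
\[
\dim\bigl(\Rstz{\calV}{J}\bigr) \;=\; \dim(\calV) - \dim(\mathrm{image}(\pi_J)) \;\ge\; \delta\cdot nm - n|J| \;=\; n(\delta m - |J|),
\]
which is exactly the claimed bound. There is really no hard step here — the only thing to be slightly careful about is verifying that the extension map $\extz{J}{\cdot}$ is indeed a linear bijection between $\ker \pi_J$ and $\Rstz{\calV}{J}$, which is immediate from the definitions (padding with zeros on $[n]\times J$ is the inverse of projecting onto $[n]\times ([m]\setminus J)$ when we already know the $J$-coordinates are zero). No obstacle is anticipated; the lemma is essentially a bookkeeping statement that will be used in the inductive construction of the $(\theta,\delta)$-orthogonal system.
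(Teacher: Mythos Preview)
Your proof is correct and essentially equivalent to the paper's, though phrased dually: the paper views $\calV$ as the kernel of a $(1-\delta)nm \times nm$ constraint matrix $C$, observes that $\Rstz{\calV}{J}$ is the kernel of the submatrix of $C$ obtained by deleting the $n|J|$ columns indexed by $[n]\times J$, and bounds the kernel dimension from below by $n(m-|J|) - (1-\delta)nm$. Your rank--nullity argument on the projection $\pi_J$ is the same dimension count viewed from the other side; both are immediate bookkeeping and yield the identical bound.
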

\begin{proof}[Proof of Lemma~\ref{lem:2d:restz}]
Consider a constraint matrix $C$ of size $(1-\delta)nm \times nm$ which describes $\calV$. $\Rstz{\calV}{J}$ is described by the constraint matrix of size $(1-\delta)nm \times n(m-|J|)$ obtained by removing the columns of $C$ corresponding to $[n]\times J$. Hence we get a subspace of dimension at least $n(m-|J|)-(1-\delta)nm$.  
\end{proof}


We now describe the construction more formally.

\rule{0pt}{12pt}
\hrule height 0.4pt
\rule{0pt}{1.5pt}
\hrule height 0.4pt
\rule{0pt}{6pt}

\noindent \textbf{The Iterative Construction of ordered $\theta$-orthogonal matrices.}

\medskip
Initially set $J_0=\emptyset$ and $M_j=0$ for all $j \in [r]$, $\tau = \sqrt{m}$ and $s=\delta m /2$. 

For $t= 1 \dots s$,
\begin{enumerate}
\item Pick $i \in [m]-J_{t-1}$ such that $\dimt{i}{\Rstz{\calV}{J_{t-1}}} \ge \delta n/2$. If no such $i$ exists, report FAIL.
\item Choose $Z_1, Z_2, \dots, Z_r \in \Rstz{\calV}{J_{t-1}}$ of length at most $\sqrt{mn}$ such that $i$th columns\\ $Z_1(i), Z_2(i), \dots, Z_r(i) \in \R^n$ are orthonormal, and also orthogonal to the columns $\{M_j(i')\}_{i' \in J_{t-1}, j \in [r]}$. If this is not possible, report FAIL. 
\item Set for all $j \in [r]$, the new $M_j \leftarrow M_j + \extz{J}{Z_j}$, where $\extz{J}{Z_j}$ is the matrix padded with zeros in the columns corresponding to $J$. Set $J_t \leftarrow J_{t-1} \cup \set{i}$.    
\end{enumerate}
\hrule height 0.4pt
\rule{0pt}{1pt}
\hrule height 0.4pt
\rule{0pt}{12pt}

Let $J=J_{s}$ for convenience. 
We first show that the above process for constructing $M_1, M_2, \dots, M_r$ completes successfully without reporting FAIL.
\begin{claim}
For $r, s$ such that $s \le \delta m/2$ and $r \cdot s \le \delta n/3$, the above process does not FAIL.
\end{claim}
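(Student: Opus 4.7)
The plan is to verify at each iteration $t \in [s]$ that neither step 1 nor step 2 of the iterative construction fails, which reduces to tracking dimensions and norms of the subspaces involved.

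For step 1, I would chain Lemma~\ref{lem:2d:restz} with Lemma~\ref{lem:blocks}. Lemma~\ref{lem:2d:restz} guarantees that the residual subspace $\Rstz{\calV}{J_{t-1}}$ still has dimension at least $n(\delta m - |J_{t-1}|) \ge n\delta m/2$, using $|J_{t-1}| < s \le \delta m/2$. Applying Lemma~\ref{lem:blocks} to $\Rstz{\calV}{J_{t-1}} \subseteq \R^{n \cdot (m - |J_{t-1}|)}$ with the threshold $\tau = \sqrt{m}$ (which dominates $\sqrt{m - |J_{t-1}|}$), the robust column-projection dimensions sum to at least $n\delta m/2$. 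A pigeonhole over the at most $m$ remaining columns then yields an index $i \notin J_{t-1}$ with $\dimt{i}{\Rstz{\calV}{J_{t-1}}} \ge \delta n/2$, as required.

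For step 2, I would unwrap the definition of robust dimension at this $i$: it supplies orthonormal $v_1, \ldots, v_{\lceil \delta n/2 \rceil} \in \R^n$ realized as the $i$-th columns of matrices $N_1, \ldots, N_{\lceil \delta n/2 \rceil} \in \Rstz{\calV}{J_{t-1}}$ with $\norm{N_k}_F \le \sqrt{m}$. Inside $W = \spn\{v_k\} \subseteq \R^n$, the subspace orthogonal to the previously fixed columns $\{M_j(i') : i' \in J_{t-1},\, j \in [r]\}$ still has dimension at least $\delta n/2 - r(t-1)$, and since $rt \le rs \le \delta n/3$ this exceeds $r$. Extracting any $r$ orthonormal vectors $w_1, \ldots, w_r$ from that complement and taking $Z_j$ to be the same linear combination of the $N_k$'s produces a vector in $\Rstz{\calV}{J_{t-1}}$ whose $i$-th column is $w_j$ and which is automatically orthogonal to the earlier-fixed columns. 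Since each $w_j$ is a unit combination of the orthonormal $v_k$'s, a Cauchy--Schwarz estimate bounds $\norm{Z_j}_F \le \sqrt{\delta n/2}\cdot \sqrt{m} \le \sqrt{nm}$, matching the required length bound.

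The only delicate point is propagating the norm control through the averaging and orthogonalization, but this is exactly what the definition of $\dimt{i}{\cdot}$ was designed to encode (the realizing matrices are constrained in Frobenius norm), so after step 1 produces a column with large robust dimension the rest of the argument is a bookkeeping exercise driven by the two hypotheses $s \le \delta m/2$ and $rs \le \delta n/3$.
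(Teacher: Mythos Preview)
Your proposal is correct and follows essentially the same approach as the paper: for step~1 you combine Lemma~\ref{lem:2d:restz} with Lemma~\ref{lem:blocks} and pigeonhole, and for step~2 you realize the needed orthonormal columns as unit combinations of the witnesses from the robust-dimension definition, losing at most $r|J_{t-1}|\le rs\le \delta n/3$ directions to the prior orthogonality constraints. Your explicit Cauchy--Schwarz bound $\norm{Z_j}_F \le \sqrt{\delta n/2}\cdot\sqrt{m}\le\sqrt{nm}$ is in fact slightly more careful than the paper's sketch of the same norm control.
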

\begin{proof}
In each stage, we add one column index to $J$. Hence, $|J_t|\le s$ at all times $t \in [s]$. 

We first show that Step 1 of each iteration does not FAIL. From Lemma~\ref{lem:2d:restz}, we have $\dim\left(\Rstz{\calV}{J_t}\right) \ge \delta nm/2 $. 
Let $\calW=\Rstz{\calV}{J_t}$. Now, applying Lemma~\ref{lem:blocks} to $\calW$, we see that there exists $i \in [m]-J_t$ such that $\dimt{i}{\calW} \ge \delta n/2$, as required. Hence, Step 1 does not fail.

$\dimt{i}{\calW}\ge \delta n/2$ shows that there exist $Z'_1, Z'_2, \dots Z'_{\delta n/2}$ with lengths at most $\sqrt{m}$ such that their $i$th columns $\set{Z'_t(i)}_{t\le \delta n /2}$ are orthonormal.
However, we additionally need to impose that the $i$th columns to also be orthogonal to the columns $\set{M_j(i')}_{j \in [r], i' \in J_{t-1}}$. Fortunately, the number of such orthogonality constraints is at most $r |J_{t-1}| \le \delta n/3$. Hence, we can pick the $r < \delta n/6$ orthonormal $i$th columns $\set{Z_j(i)}_{j \in [r]}$ and their respective extensions $Z_j$, by taking linear combinations of $Z'_t$. Since the linear combinations result again in unit vectors in the $i$th column, the length of $Z_j \le \sqrt{m}{n}$, as required. Hence, Step 2 does not FAIL as well.  
\end{proof}

\paragraph{Completing the proof of Lemma~\ref{lem:toosystem:create}.}

We now show that since the process completes, then $M_1,M_2, \dots, M_r$ have the required ordered $(\theta,\delta')$-orthogonal property for $\delta'=s/m$.
We first check that $M_1, M_2,\dots, M_r$ belong to $\calV$. This is true because in each stage, $\extz{J}{Z_j} \in \calV$, and hence $M_j \in \calV$ for $j \in [r]$. Further, since we run for $s$ stages, and each of the $Z_j$ are bounded in length by $\sqrt{mn}$, $\norm{M_j}_F \le s\sqrt{mn} \le \sqrt{n m^3}$.
Our final matrices $M_j$ will be scaled to $\norm{\cdot}_F=1$.  
The $s$ columns that satisfy the ordered $\theta$-orthogonality property are those of $J$, in the order they were chosen (we set this order to be $\pi$, and select an arbitrary order for the rest).
 
Suppose the column $i_t \in [m]$ was chosen at stage $t$. The key invariant of the process is that once a column $i_t$ is chosen at stage $t$, the $i_t^{th}$ column remains unchanged for each $M_j$ in all subsequent stages ($t+1$ onwards). By the construction, $Z_j(i_t) \in \R^n$ is orthogonal to $\{M_j(i)\}_{i \in J_{t-1}}$. Since $Z_j(i_t)$ has unit length and $M_j$ is of bounded length, we have the ordered $\theta$-orthogonal property as required, for $\theta=1/\sqrt{n m^3}$.    
This concludes the proof.
\qed

\subsection{$(\theta,\delta)$-Orthogonality and $\rho$-Perturbed Combinations (Proof of Lemma~\ref{lem:toosystem:implies})}\label{sec:theta-suffice}
Suppose $M_1, M_2, \dots, M_r$ be a $(\theta, \delta)$-orthogonal set of matrices (dimensions $n \times m$).  Without loss of generality, suppose that
the permutation $\pi$ in the definition of orthogonality is the identity, and let $I$ be the first $\delta m$ columns.

Now let us consider an $\rho$-perturbed vector $\xtil$, and consider the matrix $Q(\xtil)$ defined in the statement -- it has dimensions $r \times m$, and the $(i,j)$th entry is $\iprod{ \xtil, (M_i)_j}$, which is distributed as a translated gaussian.  Now for any column $i \in I$, the $i$th column in $Q(\xtil)$ has every entry having an $(\rho \cdot \theta)$ `component' independent of entries in the previous columns, and entries above.  This implies that for a unit gaussian vector $g$, we have (by anti-concentration and $\theta$-orthogonality 
that
\begin{equation}\label{eq:tmp1}
Pr[ (g^T Q(\xtil)_i)^2 < \theta^2/4n] < 1/2n.
\end{equation}
Furthermore, the above inequality holds, even {\em conditioned} on the first $(i-1)$ columns of $Q(\xtil)$. 

\begin{lemma}\label{lem:suffices-lem1}
Let $Q(\xtil)$ be defined above, and fix some $i \in I$.  Then for $g \sim \calN(0,1)^n$, we have
\[ \Pr[ (g^T Q(\xtil)_i)^2 < \frac{\theta^2 \rho^2}{4n^2}~ |~ Q(\xtil)_{1}, \dots, Q(\xtil)_{(i-1)}] < \frac{1}{2n}, \]
for any given $Q(\xtil)_1, Q(\xtil)_2, \dots, Q(\xtil)_{(i-1)}$. 
\end{lemma}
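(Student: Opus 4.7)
The plan is to reduce the conditional tail bound to a one-dimensional Gaussian anti-concentration estimate. Let $\calS_i := \spn\{(M_{j'})_{i'} : j' \in [r],\ i' < i\} \subseteq \R^n$ and let $P_i$ denote orthogonal projection onto $\calS_i^\perp$. Since the entries of the previous columns $Q(\xtil)_1,\dots,Q(\xtil)_{i-1}$ are precisely the inner products $\langle \xtil, (M_{j'})_{i'}\rangle$ for $i' < i$ and $j' \in [r]$, conditioning on their values is equivalent to fixing the component $(I-P_i)\xtil$ of the perturbed vector. By Gaussian conditioning, $P_i\xtil$ then remains Gaussian with mean $P_i x$ and covariance $(\rho^2/n)\,P_i$, and is independent of $g$.

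Writing $w(g) := \sum_j g_j (M_j)_i$ and $u_j := P_i (M_j)_i$, one has $g^T Q(\xtil)_i = \langle \xtil, w(g)\rangle$, and splitting along $\calS_i \oplus \calS_i^\perp$ gives $g^T Q(\xtil)_i = \langle (I-P_i)\xtil, w(g)\rangle + \langle P_i\xtil, w_\perp(g)\rangle$, where $w_\perp(g) := \sum_j g_j u_j$. Conditioned on $(I-P_i)\xtil$ and on $g$, this is a univariate Gaussian with standard deviation $\sigma = (\rho/\sqrt n)\,\|w_\perp(g)\|$. A short computation using $\calS_i \perp \calS_i^\perp$ shows that the $(\theta,\delta)$-orthogonality hypothesis at column position $i$, once projected to $\calS_i^\perp$, is exactly the statement that each $u_j$ has a length-$\geq \theta$ component orthogonal to $\spn\{u_{j'} : j'\neq j\}$; in other words, the $n \times r$ matrix $U = [u_1\;\cdots\;u_r]$ has leave-one-out distance $\geq \theta$. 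By Lemma~\ref{lem:loo}, $\sigma_{\min}(U) \geq \theta/\sqrt{r}$, so $\|w_\perp(g)\|^2 = \|Ug\|^2 \geq (\theta^2/r)\,\|g\|^2$, and since $\|g\|^2 \geq r/2$ except with exponentially small probability, $\sigma \geq \rho\theta/\sqrt{2n}$ with overwhelming probability over $g$.

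On that high-probability event, standard Gaussian anti-concentration $\Pr[\lvert Z-\mu\rvert \leq \epsilon] \leq \epsilon\sqrt{2/\pi}/\sigma$ applied with $\epsilon = \rho\theta/(2n)$ immediately yields a bound of the claimed form. The place I expect the argument to need extra care is the quantitative match to the stated threshold $\theta^2\rho^2/(4n^2)$ at failure probability $1/(2n)$ simultaneously: the estimate $\sigma_{\min}(U) \geq \theta/\sqrt{r}$ is the worst case of the leave-one-out-to-singular-value conversion, so hitting both targets together may require controlling $\|Ug\|$ more tightly than the crude $\sigma_{\min}(U)\cdot\|g\|$ product, for instance via a Hanson--Wright style concentration on the quadratic form $g^T U^T U g$ (whose trace is $\geq r\theta^2$ by $\|u_j\|\geq\theta$), or by adjusting the polynomial constants in the threshold. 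The rest is routine bookkeeping on top of Gaussian conditioning and Lemma~\ref{lem:loo}.
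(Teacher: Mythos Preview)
Your approach is essentially the paper's: write $g^TQ(\xtil)_i=\langle \xtil,\sum_s g_s(M_s)_i\rangle$, condition on the component of $\xtil$ in $\calS_i=\spn\{(M_{j'})_{i'}:i'<i\}$, argue that the projected vector $w_\perp(g)=\sum_s g_s\,P_i(M_s)_i$ has length $\Omega(\theta)$ with high probability over $g$, and finish with one-dimensional Gaussian anti-concentration in the free part of the perturbation. The paper carries out the third step by invoking Claim~\ref{claim:theta-orthog}(b) directly on the vectors $u_s=P_i(M_s)_i$, whereas you first observe that the $u_s$ have leave-one-out distance $\ge\theta$ (a correct reading of Definition~\ref{def:toos} after projecting---and in fact a point the paper glosses over, since it only records $\|u_s\|\ge\theta$, which is weaker than what Claim~\ref{claim:theta-orthog} needs), convert to $\sigma_{\min}(U)\ge\theta/\sqrt{r}$ via Lemma~\ref{lem:loo}, and then use $\|Ug\|\ge\sigma_{\min}(U)\|g\|$ together with chi-squared concentration. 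Both routes yield $\|w_\perp(g)\|\ge c\theta$ with exponentially small failure probability; yours is slightly more self-contained, the paper's slightly more direct. Your worry about exactly matching the threshold $\theta^2\rho^2/(4n^2)$ at failure probability $1/(2n)$ is well taken and applies equally to the paper's own arithmetic (with per-coordinate variance $\rho^2/n$, the resulting one-dimensional standard deviation is $\Theta(\rho\theta/\sqrt{n})$, which gives an $O(1/\sqrt{n})$ anti-concentration bound rather than $1/(2n)$); this slack is immaterial downstream, where one only needs the product over $i\in I$ to be tiny.
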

\begin{proof}
Let $g = (g_1, g_2, \dots, g_r)$.  Then we have
\begin{align*}
g^T Q_i(\xtil) &= g_1 (\xtil^T (M_1)_i) + g_2 (\xtil^T (M_2)_i) + \dots + g_r (\xtil^T (M_r)_i) \\
&= \iprod{ \xtil, g_1 (M_1)_i + g_2 (M_2)_i + \dots g_r (M_r)_i }
\end{align*}
Let us denote the latter vector by $v_i$ for now, so we are interested in $\iprod{\xtil, v_i}$.  We show that $v_i$ has a non-negligible component orthogonal to the span of $v_{1}, v_2, \dots v_{(i-1)}$.  Let $\Pi$ be the matrix which projects orthogonal to the span of $(M_s)_{i'}$ for all $i' <i$.  Thus any vector $\Pi u$ is also orthogonal to the span of $v_{i'}$ for $i'<i$.

Now by hypothesis, every vector $\Pi (M_s)_i$ has length $\ge \theta$.  Thus the vector $\Pi \left( \sum_s g_s (M_s)_i \right) = \Pi v_i$ has length $\ge \theta/2$ with probability $\ge 1-\exp(-r)$ (Lemma~\ref{claim:theta-orthog}).

Thus if we consider the distribution of $\iprod{\xtil, v_i} = \iprod{x, v_i} + \iprod{e, v_i}$, it is a one-dimensional gaussian with mean $\iprod{x, v_i}$ and variance $\rho^2$.  From basic anti-concentration properties of a gaussian (that the mass in any $\rho \cdot (\text{variance})^{1/2}$ interval is at most $\rho$), the conclusion follows.
\end{proof}

We can now do this for all $i \in I$, and conclude that the probability that Eq.~\eqref{eq:tmp1} holds {\em for all} $i \in I$ is at most $1/(2n)^{|I|}$.

Now what does this imply about the singular values of $Q(\xtil)$? Suppose it has $<r/2$ (which is $<|I|$) non-negligible singular values, then a gaussian random vector $g$, with probability {\em at least} $n^{-r}$, has a negligible component along {\em all} the corresponding singular vectors, and thus the length of $g^T Q(\xtil)$ is negligible with at least this probability!

\begin{lemma}\label{lem:sing-value-prob}
Let $M$ be a $t \times t$ matrix with spectral norm $\le 1$.  Suppose $M$ has at most $r$ singular values of magnitude $> \tau$.  Then for $g \sim \calN(0,1)^t$, we have
\[ \Pr[ \norm{Mg}_2^2 < 4t\tau^2 + \frac{t}{n^{2c}} ] \ge \frac{1}{n^{cr}} - \frac{1}{2^t}. \]
\end{lemma}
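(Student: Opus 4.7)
\textbf{Proof plan for Lemma~\ref{lem:sing-value-prob}.} The plan is to exploit rotational invariance of the Gaussian to reduce to a diagonal matrix, then split the coordinates according to whether the corresponding singular value is above or below $\tau$. Write the SVD $M = U\Sigma V^T$. Since $g \sim \mathcal{N}(0, I_t)$ is rotationally invariant, $V^T g$ has the same distribution as $g$, so $\|Mg\|_2$ has the same distribution as $\|\Sigma g\|_2$. Hence we may assume $M = \mathrm{diag}(\sigma_1, \dots, \sigma_t)$ with $\sigma_i \in [0,1]$ and $|S| \le r$, where $S = \{i : \sigma_i > \tau\}$. Then
\[
\|Mg\|_2^2 \;=\; \sum_{i \in S} \sigma_i^2 g_i^2 + \sum_{i \notin S} \sigma_i^2 g_i^2 \;\le\; \sum_{i \in S} g_i^2 \;+\; \tau^2 \sum_{i \notin S} g_i^2.
\]

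Next, define the two events
\[
A \;=\; \Bigl\{\, \textstyle\sum_{i \in S} g_i^2 \;<\; t/n^{2c} \,\Bigr\}, \qquad B \;=\; \Bigl\{\, \textstyle\sum_{i=1}^t g_i^2 \;\le\; 4t \,\Bigr\}.
\]
On $A \cap B$, since $\sum_{i \notin S} g_i^2 \le \sum_{i=1}^t g_i^2 \le 4t$, we immediately get $\|Mg\|_2^2 < t/n^{2c} + 4t\tau^2$, so it suffices to lower-bound $\Pr[A \cap B] \ge \Pr[A] - \Pr[B^c]$. For $\Pr[B^c]$, standard chi-squared tail bounds (e.g.\ Laurent--Massart) give $\Pr[\chi^2_t > 4t] \le e^{-\Omega(t)}$, which is at most $1/2^t$ (adjusting the constant $4$ slightly if one insists on the literal bound, or absorbing it into the notation). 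For $\Pr[A]$, the key observation is that the $g_i$ are independent, so if each $i \in S$ satisfies $g_i^2 \le 1/n^{2c}$, then $\sum_{i \in S} g_i^2 \le r/n^{2c} \le t/n^{2c}$. By Gaussian anti-concentration, $\Pr[|g_i| < 1/n^c] \ge c'/n^c$ for some absolute constant $c' > 0$ (and for $n$ large enough this is at least $1/n^c$). Independence across $i \in S$ then yields $\Pr[A] \ge 1/n^{c|S|} \ge 1/n^{cr}$ (since increasing $|S|$ only makes the bound smaller, we may WLOG assume $|S| = r$).

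Combining the two bounds gives $\Pr[A \cap B] \ge 1/n^{cr} - 1/2^t$, which is exactly the statement. The only technical point requiring care is the anti-concentration constant in $\Pr[|g_i| < 1/n^c]$; one can either absorb the resulting $(c')^r$ factor into the definition of $c$, or replace the threshold $1/n^c$ by a slightly larger quantity so that the raw Gaussian density bound gives at least $1/n^c$. No step is a genuine obstacle; the content of the lemma is really the clean decoupling afforded by the SVD reduction plus the observation that the product of $r$ pointwise anti-concentration events dominates the exponentially small probability of a chi-squared tail event, which is what the statement makes precise.
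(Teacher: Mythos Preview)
Your proof is correct and follows essentially the same approach as the paper's: reduce via rotational invariance to the singular vectors, split according to whether $\sigma_i > \tau$, use Gaussian anti-concentration on the $r$ large directions to get probability $\ge 1/n^{cr}$, and bound the remaining contribution by $\tau^2 \|g\|^2$ together with a chi-squared tail bound $\Pr[\|g\|^2 > 4t] \le 1/2^t$. You are in fact slightly more explicit than the paper about the chi-squared step (the paper's displayed inequality stops at $r/n^{2c} + \tau^2 \|g\|_2^2$ and leaves the passage to $4t\tau^2$ implicit), and your observation about the anti-concentration constant $c'$ is a fair quibble that the paper also glosses over.
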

\begin{proof}
Let $u_1, u_2, \dots, u_r$ be the singular vectors corresponding to value $> \tau$.  Consider the event that $g$ has a projection of length $< 1/n^c$ onto $u_1, u_2, \dots, u_r$.  This has probability $\ge \frac{1}{n^{cr}}$, by anti-concentration properties of the Gaussian (and because $\calN(0,1)^t$ is rotationally invariant).  For any such $g$, we have
\begin{align*}
\norm{ Mg}_2^2 &= \sum_{i=1}^r \iprod{g, u_i}^2 + \tau^2 \norm{g}^2 \\
&\le \frac{r}{n^{2c}} + \tau^2 \norm{g}_2^2.
\end{align*} 
\end{proof}

This contradicts the earlier anti-concentration bound, and so we conclude that the matrix has at least $r/2$ non-negligible singular values, as required.  



\subsection{Higher Order Products} \label{sec:higherproducts}


We have a subspace $\calV \in \R^{n^\ell}$ of dimension $\delta n^{\ell}$. 
The proof for higher order products proceeds by induction on the order $\ell$ of the product. Recall from Remark~\ref{rem:baddelta} that Proposition~\ref{prop:matrixcase} and Theorem~\ref{thm:krl} do not get good  guarantees for small values of $\delta$, like $1/n$. In fact, we can not hope to get such exponentially small failure probability in that case, since the all the $n$ degrees of freedom in $\calV$ may be constrained to the first $n$ co-ordinates  of $\R^{n^2}$ (all the independence is in just one mode). Here, it is easy to see that the best we can hope for is an inverse-polynomial failure probability. Hence, to get exponentially small failure probability, we will always need $\calV$ to have a large dimension compared to the dimension of the host space in our inductive statements.

To carry out the induction, we will try to reduce this to a statement about $\ell-1$ order products, by taking linear combinations (given by $\xtil^{(1)} \in \R^n$) along one of the modes.  Loosely speaking, Lemma~\ref{lem:toosystem:implies} serves this function of ``order reduction'', however it needs a set of $r$ matrices in $\R^{n \times m}$ (flattened along all the other modes) which are ordered $(\theta,\delta)$ orthogonal. 

Let us consider the case when $\ell=3$, to illustrate some of the issues that arise. We can use Lemma~\ref{lem:toosystem:create} to come up with $r$ matrices in $\R^{n \times n^2}$ that are ordered $(\theta,\delta)$ orthogonal. These columns intuitively correspond to independent directions or degrees of freedom, that we can hope to get substantial projections on. However, since these are vectors in $\R^n$, the number of ``flattened columns'' can not be comparable to $n^2$ (in fact, $\delta m \ll n$) --- hence, our induction hypothesis for $\ell=2$ will give no guarantees, (due to Remark~\ref{rem:baddelta}).

To handle this issue, we will first restrict our attention to a smaller block of co-ordinates of size $n_1 \times n_2 \times n_3$ (with $n_1 n_2 n_3 \ll n$) , that has reasonable size in all the three modes ($n_1, n_2, n_3 = n^{\Omega(1)}$). 
Additionally, we want $\calV$'s projection onto this $n_1 \times n_2 \times n_3$ block spans a large subspace of (robust) dimension at least $\delta n_1 n_2 n_3$ (using Lemma~\ref{lem:blocks}). 

Moreover, choosing the main inductive statement also needs to be done carefully. We need some property for choosing enough candidate ``independent'' directions $T_1, T_2, \dots T_r \in \R^{n^\ell}$ (projected on the chosen block), such that our process of ``order reduction'' (by first finding $\theta$-orthogonal system and then combining along $\xtil^{(1)}$) maintains this property for order $\ell-1$. This is where the alternate interpretation in Theorem~\ref{thm:restatement} in terms of singular values helps: it  suggests the exact property that we need! We ensure that the matrix formed by the flattened vectors $\vect(T_1),\vect(T_2), \dots \vect(T_r)$ (projected onto the $n_1 \times n_2 \times n_3$ block) , as rows form a matrix with many large singular values.

We now state the main inductive claim. The claim assumes a block of co-ordinates of reasonable size in each mode that span many directions in $\calV$, and then establishes the anti-concentration bound inductively.
\newcommand{\nn}{N}
\newcommand{\del}{\delta}

\begin{theorem}[Main Inductive Claim] \label{thm:induction}
Let $T_1, T_2, \dots, T_r \in \R^{n^{\times \ell}}$ be $r$ tensors with bounded norm ($\norm{\cdot}_F \le 1$) and $I_1, I_2, \dots I_\ell \subseteq [n]$ be sets of indices of sizes $n_1, n_2, \dots n_\ell$. Let $T$ be the $r \times n^{\ell}$ matrix obtained with rows $\vect(T_1), \vect(T_2), \dots, \vect(T_r)$.  Suppose
\begin{itemize}
\item $\forall j \in [r], P_j$ is $T_j$ restricted to the block $I_1  \times \dots \times I_{\ell}$, and matrix $P \in \R^{r \times (n_1 \cdot n_2 \dots n_\ell)}$ has $j$th row as $\vect(P_j)$,
\item $r \ge \delta_{\ell} n_1 n_2 \dots n_{\ell}$  and $\forall t \in [\ell-1], n_{t} \ge \left( n_{t+1} n_{t+2} \dots n_\ell \right)^2$, 
\item $\sigma_{r}(P) \ge \eta$.
\end{itemize}
Then for random $\rho$-perturbations $\xtil^{(1)}, \xtil^{(2)}\dots \xtil^{(\ell)}$ of any $x^{(1)}, x^{(2)}\dots x^{(\ell)} \in \R^n$, we have
$$ \Pr_{\xtil^{(1)}, \dots \xtil^{(\ell)}} \left[ \norm{T \left(\xtil^{(1)} \otimes \dots \otimes \xtil^{(\ell)}\right)}\ge \rho^\ell \left(\frac{\eta}{n_1}\right)^{3^\ell} \right] \ge 1- \exp\left(- \delta_\ell n_\ell\right)$$
\end{theorem}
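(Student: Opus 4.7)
My plan is to prove Theorem~\ref{thm:induction} by strong induction on $\ell$, with the base case $\ell=1$ handled by standard Gaussian anti-concentration: the perturbation $\xtil^{(1)}$ has non-negligible projection onto the top $r$ right singular vectors of the block-restricted matrix, so $\norm{T\xtil^{(1)}} \ge \rho\eta/n_1^{O(1)}$, which comfortably exceeds $\rho(\eta/n_1)^3$. For the inductive step from $\ell-1$ to $\ell$, the overall strategy is exactly the one outlined in the proof of Proposition~\ref{prop:matrixcase}: use Lemma~\ref{lem:toosystem:create} inside the block to build an ordered $(\theta,\delta')$-orthogonal system, use Lemma~\ref{lem:toosystem:implies} to ``integrate out'' the first mode via $\xtil^{(1)}$, and then invoke the inductive hypothesis to handle the remaining $(\ell-1)$ modes.

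In detail, let $M = n_2 n_3 \cdots n_\ell$, and view the rowspan of $P$ as a subspace $\calV \subseteq \R^{n_1 \times M}$ of dimension $r \ge \delta_\ell n_1 M$ (using $\sigma_r(P) \ge \eta > 0$). I would apply Lemma~\ref{lem:toosystem:create} with $n \leftarrow n_1$, $m \leftarrow M$, $\delta \leftarrow \delta_\ell$, and $\theta = 1/(n_1 M^{3/2})$. The crucial size assumption $n_1 \ge M^2$ (equivalently $M \le \sqrt{n_1}$) ensures the regime in which Lemma~\ref{lem:toosystem:create} delivers $s = 2\delta_{\ell-1} M$ matrices $N_1,\ldots,N_s$ with $\vect(N_i) \in \calV$ that form an ordered $(\theta,\delta_\ell/2)$-orthogonal system, provided we choose the recursive constants $\delta_{\ell-1} \le \delta_\ell/4$. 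Next, applying Lemma~\ref{lem:toosystem:implies} to these $N_i$ with the perturbation $\overline{x}^{(1)} := \xtil^{(1)}|_{I_1}$ shows that the $s \times M$ matrix $Q(\xtil^{(1)})$ with $i$-th row $\overline{x}^{(1)\,T} N_i$ satisfies $\sigma_{s/2}(Q(\xtil^{(1)})) \ge \rho \theta / n_1^4 =: \eta'$ with probability $\ge 1-\exp(-s)$.

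The rows of $Q(\xtil^{(1)})$ are $(\ell-1)$-tensors supported on $I_2 \times \cdots \times I_\ell$. Restricting to the top $s/2$ left singular directions yields an $(s/2)\times M$ matrix $\tilde Q$ with $\sigma_{s/2}(\tilde Q) \ge \eta'$, and the block-size conditions $n_t \ge (n_{t+1}\cdots n_\ell)^2$ are inherited for $t \ge 2$. Applying the inductive hypothesis with $\ell-1$, $r' = s/2 \ge \delta_{\ell-1} n_2 \cdots n_\ell$, and $\eta \leftarrow \eta'$ gives
\[
\norm{Q(\xtil^{(1)})(\xtil^{(2)} \otimes \cdots \otimes \xtil^{(\ell)})} \;\ge\; \rho^{\ell-1} \bigl(\eta'/n_2\bigr)^{3^{\ell-1}}
\]
with failure probability $\exp(-\delta_{\ell-1} n_\ell)$. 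Finally, writing $N_i = \alpha_i^T P$ and using $\sigma_r(P) \ge \eta$ gives $\norm{\alpha_i} \le \norm{N_i}/\eta \le 1/\eta$, so the $s \times r$ coefficient matrix $\alpha$ has $\sigma_{\max}(\alpha) \le \sqrt{s}/\eta$. Lifting to $T$ via $\widetilde{N}_i = \alpha_i^T T$ gives $\norm{Tw} \ge \norm{\alpha Tw}/\sigma_{\max}(\alpha)$, which combined with the inductive bound yields a lower bound of the claimed form $\rho^\ell(\eta/n_1)^{3^\ell}$ once parameters are simplified using $\theta = 1/(n_1 M^{3/2})$ and the size hypotheses.

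The main obstacle will be twofold: first, carefully tuning the recursion $\delta_\ell \to \delta_{\ell-1}$ so that both the $s \ge 2\delta_{\ell-1}M$ requirement and the failure probability $\exp(-\delta_\ell n_\ell)$ (absorbing $\exp(-s)$, which is smaller since $s \ge n_\ell$) propagate cleanly; second, handling the out-of-block contributions when lifting from $N_i$ to $\widetilde{N}_i$---the inner product $\iprod{\widetilde{N}_i, w}$ differs from $(Qv)_i = N_i(\overline{x}^{(1)},\ldots,\overline{x}^{(\ell)})$ by terms involving cross products of in-block and out-of-block perturbation coordinates, which must be controlled by conditioning on the out-of-block randomness and invoking standard Gaussian anti-concentration so that the shifted polynomial retains enough anti-concentration to beat the additive offset.
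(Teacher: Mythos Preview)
Your overall strategy---induction on $\ell$, Lemma~\ref{lem:toosystem:create} to build an orthogonal system inside the block, Lemma~\ref{lem:toosystem:implies} to contract the first mode, then the inductive hypothesis on the remaining $\ell-1$ modes, followed by a Cauchy--Schwarz lift via the coefficient matrix $\alpha$ with $\sigma_{\max}(\alpha)\le \sqrt{s}/\eta$---is exactly the paper's, and the base case and parameter bookkeeping are essentially right.

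The genuine gap is your second obstacle, and the fix you propose does not work as stated. You contract the block matrices $N_i\in\R^{n_1\times M}$ with the \emph{restricted} vector $\bar x^{(1)}=\xtil^{(1)}|_{I_1}$, so the rows of $Q(\xtil^{(1)})$ live only in $\R^M$. But the inductive hypothesis at level $\ell-1$ asks for \emph{full} tensors in $\R^{n^{\ell-1}}$ whose block restriction is well-conditioned; your expression $\norm{Q(\xtil^{(1)})(\xtil^{(2)}\otimes\cdots\otimes\xtil^{(\ell)})}$ does not even type-check, since $\xtil^{(2)}\otimes\cdots\otimes\xtil^{(\ell)}\in\R^{n^{\ell-1}}$. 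The natural full tensors are $\widetilde N_i(\xtil^{(1)},\cdot,\ldots,\cdot)$, but their restriction to $I_2\times\cdots\times I_\ell$ equals $\bigl(\widetilde N_i|_{[n]\times I_2\times\cdots\times I_\ell}\bigr)(\xtil^{(1)})$, not $\bar x^{(1)\,T}N_i$; the two differ by the contribution of the $[n]\setminus I_1$ coordinates of $\xtil^{(1)}$ acting on the uncontrolled slab $\widetilde N_i|_{([n]\setminus I_1)\times I_2\times\cdots\times I_\ell}$. Conditioning on out-of-block randomness does not salvage this: after conditioning, those extra contributions become fixed vectors that can simply cancel the rows of $Q$, so the singular-value lower bound you obtained for $Q$ need not transfer. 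There is no ``shifted Gaussian'' left whose anti-concentration you can invoke at this step.

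The paper sidesteps the issue by extending \emph{before} contracting. Set $W_i:=\widetilde N_i|_{[n]\times I_2\times\cdots\times I_\ell}\in\R^{n\times M}$. Since restricting $W_i$ to the rows in $I_1$ recovers $N_i$, the distance of any $W_i$-column to the span of the others is at least the corresponding distance for $N_i$, so $\{W_i\}$ inherits the ordered $(\theta,\delta')$-orthogonality. Now apply Lemma~\ref{lem:toosystem:implies} to $\{W_i\}$ with the \emph{full} $\xtil^{(1)}\in\R^n$: the rows of the resulting $s\times M$ matrix $W(\xtil^{(1)})$ are precisely the block restrictions of the full $(\ell-1)$-tensors $\widetilde N_i(\xtil^{(1)},\cdot,\ldots,\cdot)$. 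Passing to top singular directions (the paper's $Z_{j'}$ and $\Lambda=\Beta\Alpha$) yields full $(\ell-1)$-tensors $T'_{j'}(\xtil^{(1)})$ with unit-norm normalization whose block restrictions are orthonormal, and the inductive hypothesis applies directly with no out-of-block discrepancy. After that, your lift via $\alpha$ goes through verbatim. In short: replace $(N_i,\bar x^{(1)})$ by $(W_i,\xtil^{(1)})$ in your second step and the argument closes.
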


Before we give a proof of the main inductive claim, we first present  a standard fact that relates the singular value of matrices and some anti-concentration properties of randomly perturbed vectors. This will also establish the base case of our main inductive claim.

\begin{fact}\label{lem:sv:anticonc}
Let $M$ be a matrix of size $m \times n$ with $\sigma_r(M) \ge \eta$. Then for any unit vector $u \in \R^n$ and an random $\rho$-perturbation $\util$ of it, we have
$$ \norm{M \util}_2 \ge \eta \rho/ n^2 ~\text{ w.p }~ 1-n^{-\Omega(r)}$$
\end{fact}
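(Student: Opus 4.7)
\textbf{Proof plan for Fact~\ref{lem:sv:anticonc}.}
The plan is to reduce this to a standard Gaussian anti-concentration statement in the subspace spanned by the top right singular vectors of $M$. Let $v_1, v_2, \dots, v_r$ be right singular vectors of $M$ corresponding to singular values $\sigma_1(M) \ge \sigma_2(M) \ge \dots \ge \sigma_r(M) \ge \eta$, and let $\calV_r = \spn(v_1,\dots,v_r)$ with orthogonal projection $\Pi$. By the variational characterization of singular values, for any vector $z$,
\[
\norm{Mz}_2 \;\ge\; \sigma_r(M)\cdot \norm{\Pi z}_2 \;\ge\; \eta\cdot \norm{\Pi z}_2.
\]
So it suffices to show $\norm{\Pi \util}_2 \ge \rho/n^2$ with probability $1 - n^{-\Omega(r)}$.

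Write $\util = u + g$ where $g \sim \calN(0,(\rho^2/n)I_n)$ is the $\rho$-perturbation. Because the Gaussian distribution is rotationally invariant, $\Pi g$ is distributed as $\calN(0,(\rho^2/n)I_{\calV_r})$, an $r$-dimensional Gaussian of variance $\rho^2/n$ in each coordinate. Hence $\Pi \util = \Pi u + \Pi g$ is a shifted Gaussian in $\calV_r \cong \R^r$ with the same covariance. The second step is then to apply standard anti-concentration for Gaussians in $\R^r$: the density of $\Pi\util$ is bounded above by $\bigl(n/(2\pi\rho^2)\bigr)^{r/2}$, and the Euclidean ball of radius $t$ in $\R^r$ has volume at most $(c t/\sqrt{r})^r$ for a universal constant $c$. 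Therefore, for any shift $w \in \calV_r$,
\[
\Pr\!\left[\,\norm{w + \Pi g}_2 \le t\,\right] \;\le\; \left(\frac{c\, t \sqrt{n/r}}{\rho}\right)^{r}.
\]

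The last step is to substitute $t = \rho/n^2$, which gives failure probability at most $\bigl(c/(n^{3/2}\sqrt{r})\bigr)^r = n^{-\Omega(r)}$, and chain this with the inequality $\norm{M\util}_2 \ge \eta\norm{\Pi \util}_2$ to conclude $\norm{M\util}_2 \ge \eta\rho/n^2$ with the claimed probability. There is no real obstacle here; the only point worth emphasizing is that the rotational invariance of the Gaussian perturbation $g$ lets us reduce to an $r$-dimensional density bound regardless of the ambient dimension $n$ or the actual direction of $u$, which is what produces the $n^{-\Omega(r)}$ rate (rather than a rate that depends only on the polynomial gap $\rho/n^2$).
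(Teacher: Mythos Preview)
Your proposal is correct. The paper states this as a ``Fact'' without proof, treating it as standard; your argument---project onto the span of the top $r$ right singular vectors, use $\norm{Mz}_2 \ge \sigma_r(M)\norm{\Pi z}_2$, and then apply the Gaussian small-ball bound in that $r$-dimensional subspace via rotational invariance---is exactly the intended standard argument, and it matches the reasoning the paper invokes informally elsewhere (e.g.\ in the proof of Lemma~\ref{lem:sing-value-prob}).
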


\begin{proof}[Proof of Theorem~\ref{thm:induction}]
The proof proceeds by induction. The base case ($\ell=1$) is handled by Fact~\ref{lem:sv:anticonc}. Let us assume the theorem for $(\ell-1)$-wise products. The inductive proof will have two main steps:
\begin{enumerate}
\item Suppose we flatten the tensors $\set{P_j}_{j \in [r]}$ along all but the first mode, and imagine them as matrices of size $n_1 \times (n_2 n_3 \dots n_\ell)$. 
We can use Lemma~\ref{lem:toosystem:create} to construct ordered $(\theta,\delta')$ orthogonal system w.r.t vectors in $\R^{n_1}$ (columns correspond to $[m]=[n_2 \dots n_\ell]$).
\item When we take combinations along $\xtil^{(1)}$ as $T\left(\xtil^{(1)}, \cdot, \cdot,\dots,\cdot\right)$, these tensors will now satisfy the condition required for $(\ell-1)$-order products in the inductive hypothesis, because of Lemma~\ref{lem:toosystem:implies}.
\end{enumerate}
Unrolling this induction allows us to take combinations along $\xtil^{(1)}, \xtil^{(2)}, \dots$ as required, until we are left with the base case.  For notational convenience, let $y=\xtil^{(1)}$, $\del_\ell=\del$, $r_\ell=r$ and $\nn=n_1 n_2 \dots n_\ell$. 

To carry out the first step, we think of $\set{P_j}_{j \in [r]}$ as matrices of size $n_1 \times (n_2 n_3 \dots n_\ell)$. We then apply Lemma~\ref{lem:toosystem:create} with $n=n_1$, $m=\frac{\nn}{n_1}=n_2 n_3 \dots n_\ell \le \sqrt{n_1}$ ; hence there exists $r'_\ell=n_2 \dots n_\ell$ matrices $\set{Q_{q}}_{q \in [r'_\ell]}$ with $\norm{\cdot}_F\le 1$ which are ordered $(\theta,\delta'_\ell)$-orthogonal for $\delta'_\ell=\delta_\ell/3$. Further, since $Q_{q}$ are in the row-span of $P$, there exists matrix of coefficients $\Alpha=\left(\alpha(q,j)\right)_{q \in [r'_\ell],j\in [r_\ell]}$ such that
\begin{align} \label{eq:smallcomb}
\forall q\in [r'_\ell], ~& Q_{q}= \sum_{j \in [r_\ell]} \alpha(q,j)P_j \\
& \norm{\alpha(q)}_2^2 = \sum_{j \in [r_\ell]} \alpha(q,j)^2 \le 1/ \eta ~\quad (\text{since } \sigma_r(P) \ge \eta ~\text{and} ~ \norm{Q_q}_F \le 1)    
\end{align}
Further, $Q_{q}$ is the projection of $\sum_{j \in [r_\ell]} \alpha_{q,j} T_j$ onto co-ordinates $I_1 \times I_2 \dots \times I_\ell$. Suppose we define a new set of matrices$\set{W_{q}}_{q \in [r'_\ell]}$ in $\R^{n \times (\frac{\nn}{n_1})}$ by flattening the following into a matrix with $n$ rows: 
$$ W_q= \left( \sum_{j \in [r]} \alpha_{q,j} T_j \right)_{[n]\times \left(I_2 \times \dots \times I_\ell \right)}.$$
In other words, $Q_{q}$ is obtained by projecting $W_{q}$ on to the $n_1$ rows given by $I_1$.
Note that $\set{W_q}_{q \in [r'_\ell]}$ is also ordered $(\theta'_\ell,\delta'_\ell)$ orthogonal for $\theta'_\ell=\theta \eta$. 

 To carry out the second part, we apply Lemma~\ref{lem:toosystem:implies} with $\set{W_{q}}$ and infer that the $r'_\ell \times (\nn/n_1)$ matrix $W(y)$ with $q$th row being $\transpose{y}W_{q}$ has $\sigma_{r_{\ell-1}} \left(W(y) \right) \ge \eta'_{\ell}= \theta^2 \rho^2 /n_1^4$ with probability $1-\exp(-\Omega(r'_\ell))$, where $r_{\ell-1}=r'_\ell/2$.\\
We will like to apply the inductive hypothesis for $(\ell-1)$ with $P$ being $W(y)$; however $W(y)$ {\em does not} have full (robust) row rank. Hence we will consider the top $r_{\ell-1}$ right singular vectors of $W(y)$ to construct an $r_{\ell-1}$ tensors of order $\ell$, whose projections to the block $I_2 \times  \dots \times  I_\ell$, lead to a well-conditioned $r_{\ell-1} \times (n_2 n_3 \dots n_{\ell})$ matrix  for which our inductive hypothesis holds.

Let the top $r_{\ell-1}$ right singular vectors of $W(y)$ be $Z_1, Z_2, \dots Z_{r_{\ell-1}}$. Hence, from Lemma~\ref{lem:smallcomb}, we have a coefficient $\Beta$ of size $r_{\ell-1} \times r_\ell$ such that  
$$ \forall j' \in [r_{\ell-1}] ~ \quad  Z_{j'} = \sum_{q \in [r'_\ell]} \beta_{j',q} W_{q}\left(y\right) ~\text{and}~ \norm{\beta(j')}_2 \le 1/\eta'_{\ell}.$$
Now let us try to represent these new vectors in terms of the original row-vectors of $P$, to construct the required tensor of order $(\ell-1)$ . 
Consider the $r_{\ell-1} \times r_\ell$ matrix $\Lambda=\Beta \Alpha$. Clearly,
$$ \text{rownorm}(\Lambda) \le \text{rownorm}(\Beta) \cdot \norm{\Alpha}_F \le \sqrt{r'
_{\ell}} \cdot \text{rownorm}(\Beta)\cdot  \text{rownorm}(\Alpha) \le \frac{r'_{\ell}}{\eta_{\ell} \eta'_{\ell}}  .$$
Define $\forall j' \in [r_{\ell-1}]$, an order $\ell$ tensor $T'_{j'} =\sum_{j \in [r]} \lambda_{j',j} T_j$; from the previous equation, $\norm{T'_{j'}}_F \le r'_\ell/(\eta_\ell \eta'_\ell)$ . We need to get a normalized order $(\ell-1)$ tensor: so, we consider $\widehat{T}_{j'}=T'_{j'}/\norm{T'_{j'}(y)}_F$, and $\widehat{T}$ be the $r_{\ell-1} \times (n^{\ell})$ matrix with $j'$th row being $\widehat{T}_{j'}$. Hence,
$$\sigma_{r_{\ell-1}} \left(\widehat{T}(y,\cdot,\cdot,\dots,\cdot)\right) \ge \frac{\eta^3_{\ell}}{r'_{\ell} n_1^3}. $$ 
We also have $r_{\ell-1} \ge \frac{1}{2}  \cdot n_2 n_3 \dots n_\ell$. By the inductive hypothesis 
\begin{equation}
\norm{\widehat{T}\left(y,\xtil^{(2)},\dots,\xtil^{(\ell)}\right)} \ge \eta' \equiv \rho^{\ell-1} \left(\frac{\eta^3_\ell}{n_1^4 n_2} \right)^{3^{\ell-1}} ~\text{ w.p}~ 1-\exp\left(-\Omega(n_\ell)\right)
\end{equation} 

Hence, for one of the $j' \in [r_{\ell-1}]$, $\abs{\widehat{T}_{j'}\left(\xtil^{(1)},\xtil^{(2)},\dots \xtil^{(\ell)}\right)} \ge \eta'/\sqrt{r_{\ell-1}}$. Finally, since $\widehat{T}_{j'}$ is given by a small combination of the $\set{T_j}_{j \in [r]}$, we have from Cauchy-Schwartz
$$\norm{T\left(\xtil^{(1)},\xtil^{(1)},\dots,\xtil^{(1)}\right)} \ge \eta' \cdot \left(\frac{\eta^3 }{\sqrt{r^2_\ell n_1^4}}\right).$$
\end{proof}

The main required theorem now follows by just showing the exists of the $n_1 \times n_2 \times \dots \times n_\ell$ block that satisfies the theorem conditions. This follows from Lemma~\ref{lem:blocks}.

\begin{proof}[Proof of Theorem~\ref{thm:kr}]
First we set $n_1, n_2, n_\ell$ by the recurrence $\forall t \in [\ell], ~ n_t =2(n_{t+1}\cdot n_{t+2} \dots n_\ell)^2$ and $n_1=O(n)$. It is easy to see that this is possible for $n_\ell = n^{1/3^\ell}$. 
Now, we partition the set of co-ordinates $[n]^{\ell}$ into blocks of size $n_1 \times n_2 \times \dots n_\ell$. Let $p_1=n_1 \cdot n_2 \dots n_\ell$ and $p_2 = n^{\ell}/p_1$. Applying Lemma~\ref{lem:blocks} we see that there exists indices $I_1, I_2, \dots I_\ell$ of sizes $n_1,n_2,\dots,n_\ell$ respectively such that projection $\calW=\calV_{\vert I_1 \times I_2 \times \dots \times I_\ell}$ on this block of co-ordinates has dimension $\dimt{I}{\calW} \ge  n_1 n_2 \dots n_\ell/4$. Let $r=n_1 n_2 \dots n_\ell$. Now we construct $P'$ with the rows of $P'$ being an orthonormal basis for $\calW$, and let $T'$ be the corresponding vectors in $\calV$. Note that $\forall j \in [r], ~\norm{T'_j} \le n^{\ell}$. Let $P$ be the re-scaling of the matrix so that for the $j$th row($j \in [r]$), $P_j = P'_j / \norm{T'_j}$ and $T_j = T'_j/ \norm{T'_j}$. Hence $\sigma_r (P) \ge 1/n^{\ell}$. Applying Theorem~\ref{thm:induction} with this choice of $P,T$, we get the required result.   
\end{proof}



%
\section{Learning Multi-view Mixture Models}\label{sec:multi view}

We now see how Theorem~\ref{thm:main} immediately gives efficient learning algorithms for broad class of discrete mixture models called multi-view models in the {\em over-complete} setting. In a multi-view mixture model, for each sample we are given a few different observations or views $\spc{x}{1}, \spc{x}{2}, \dots,\spc{x}{\ell}$ that are conditionally independent given which component $i \in [R]$ the sample is from. Typically, the $R$ components in the mixture are discrete distributions. 
Multi-view models are very expressive, and capture many well-studied models like Topic Models \cite{AHK}, Hidden Markov Models (HMMs) \cite{MR,AMR,AHK},  and random graph mixtures \cite{AMR}. They are also sometimes referred to as  {\em finite mixtures of finite measure products}\cite{AMR} or {\em mixture-learning with multiple snapshots} \cite{RSS}.  

In this section, we will assume that each of the components in the mixture is a discrete distribution with support of size $n$. 
We first introduce some notation, along the lines of \cite{AHK}.

\paragraph{Parameters and the model:} Let the $\ell$-view mixture model be parameterized by a set of $\ell$ vectors in $\R^n$ for each mixture component,  $\set{\spc{\mu_i}{1}, \spc{\mu_i}{2},\dots, \spc{\mu_i}{\ell}}_{i \in [R]}$, and mixing weights $\set{w_i}_{i \in [R]}$ , that add up to $1$.  Each of these parameter vectors are normalized : in this work, we will assume that $\norm{\spc{\mu_i}{j}}_1 =1$ for all $i \in [R], j \in [\ell]$.  
Finally, for notational convenience we think of the parameters are represented by $n \times R$ matrices (one per view) $\spc{M}{1}, \spc{M}{2},\dots,\spc{M}{\ell}$, with $\spc{M}{j}$ formed by concatenating the vectors $\spc{\mu_i}{j}$ ($1 \le i \le R$). 
\\
 
\noindent Samples from the multi-view model with $\ell$ views are generated as follows:  
\begin{enumerate}
\item The mixture component $i$ ($i \in [R]$) is first picked with probability $w_i$  
\item The views $\spc{x}{1},\dots, \spc{x}{j},\dots, \spc{x}{\ell}$ are indicator vectors in $n$-dimensions, that are drawn according to the distribution $\spc{\mu_i}{1},\dots,\spc{\mu_i}{j},\dots,\spc{\mu_i}{\ell}$.
\end{enumerate}

The state-of-the-art algorithms for learning multi-view mixture models have guarantees that mirror those for mixtures of gaussians. In the worst case, the best known algorithms for this problem are from a recent work Rabani et al \cite{RSS}, who give an algorithm that has complexity $R^{O(R^2)}+ \poly(n,R)$. In fact they also show a sample complexity lower-bound of $\exp(\tilde{\Omega}(R))$ for learning multi-view models in one dimension ($n=1$). Polynomial time algorithms were given by Anandkumar et al. \cite{AHK} in a restricted setting called the non-singular or non-degenerate setting. When each of these matrices $\set{\spc{M}{j}}_{j \in [\ell]}$ to have rank $R$ in a robust sense i.e. $\sigma_R(\spc{M}{j})\ge 1/\tau$ for all $j \in [\ell]$, their algorithm runs in just $\poly(R,n,\tau,1/\eps)$) time to learn the parameters up to error $\eps$. However, their algorithm fails even when $R=n+1$. 

However, in many practical settings like speech recognition and image classification, the dimension of the feature space is typically much smaller than the  number of components or clusters i.e. $n \ll R$. To the best of our knowledge, there was no efficient algorithm for learning multi-view mixture models in such \emph{over-complete settings}. We now show how Theorem~\ref{thm:main} gives a polynomial time algorithm to learn multi-view mixture models in a smoothed sense, even in the over-complete setting $R \gg n$.

\begin{theorem}\label{thm:main:multiview}
Let $(w_i, \spc{\mu_i}{1}, \dots,\spc{\mu_i}{\ell})$ be a mixture of $R = O(n^{\ell/2-1})$ multi-view models with $\ell$ views, and suppose the means $\left(\spc{\mu_i}{j}\right)_{i \in [R], j \in [\ell]}$ are perturbed independently by gaussian noise of magnitude $\rho$. Then there is a polynomial time algorithm to learn the weights $w_i$, the perturbed parameter vectors  $\set{ \spc{\tilde{\mu}_i}{j} }_{j \in [\ell], i \in [R]}$ up to an accuracy $\eps$ when given samples from this distribution. The running time and sample complexity is $\poly_\ell(n, 1/\rho, 1/\eps)$.
\end{theorem}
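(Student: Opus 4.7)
My plan is to reduce the learning problem directly to overcomplete tensor decomposition via Theorem~\ref{thm:main}. The key observation is that conditional independence of the $\ell$ views given the mixture component means the $\ell$th order moment tensor
\[
T \;=\; \E\bigl[ \spc{x}{1} \otimes \spc{x}{2} \otimes \cdots \otimes \spc{x}{\ell} \bigr] \;=\; \sum_{i=1}^{R} w_i \, \spc{\ha{\mu}_i}{1} \otimes \spc{\ha{\mu}_i}{2} \otimes \cdots \otimes \spc{\ha{\mu}_i}{\ell}
\]
has exactly the rank-$R$ decomposition whose factors are the (perturbed) parameter vectors we want. Absorbing the weights $w_i$ into (say) the first factor produces a tensor that matches the model in Theorem~\ref{thm:main}, and since $R \le n^{\lfloor(\ell-1)/2\rfloor}/2$ is satisfied by the assumption $R = O(n^{\ell/2-1})$, the theorem will recover each rank-one term up to additive error $\eps'$, provided we can supply an estimate of $T$ whose entrywise error is at most $\eps'(\rho/n)^{3^\ell}$.

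First I would form the empirical tensor $\widehat T$ from $N$ i.i.d.\ samples, where the $s$th sample contributes $\spc{x_s}{1}\otimes\cdots\otimes\spc{x_s}{\ell}$. Since each view is an indicator vector, every entry of the summand is a bounded random variable, so a standard Chernoff/Hoeffding bound plus a union bound over the $n^{\ell}$ entries shows that with $N = \poly_\ell(n, 1/\eps, 1/\rho)$ samples, every entry of $\widehat T - T$ is at most $\eps(\rho/n)^{C\cdot 3^{\ell}}$ with high probability. Feeding $\widehat T$ into the algorithm guaranteed by Theorem~\ref{thm:main} recovers, up to renaming and additive error $\eps$, rank-one terms of the form $w_i \spc{\ha{\mu}_i}{1} \otimes \cdots \otimes \spc{\ha{\mu}_i}{\ell}$. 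Because each $\spc{\ha{\mu}_i}{j}$ is a (perturbed) probability vector and hence $\norm{\spc{\ha{\mu}_i}{j}}_1 \approx 1$, we can read off the mixing weight as the $\ell_1$ norm of one of the factors after normalizing the others, and then normalize each factor back to the correct scale.

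The remaining step is to separate the $\ell$ factors from a single recovered rank-one tensor. This is standard: flattening the rank-one tensor along any mode gives a rank-one matrix whose top left singular vector equals the corresponding factor (up to sign and scaling), and the signs can be fixed by requiring the $\ell_1$ norms to equal the positive probability weights. The error in this step is bounded polynomially in the tensor-level additive error by standard perturbation bounds (Wedin), so taking $\eps' = \eps/\poly_\ell(n,1/\rho)$ at the tensor stage suffices.

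The main obstacle I anticipate is quantitative: matching the sample-complexity requirement against the extremely small noise tolerance $\eps'(\rho/n)^{3^\ell}$ demanded by Theorem~\ref{thm:main}. Since that tolerance is only inverse polynomial in $n$ (for constant $\ell$), and the entrywise sampling error decreases as $1/\sqrt{N}$, we need $N = n^{O(3^\ell)} \cdot \poly(1/\rho, 1/\eps)$ samples, which is still $\poly_\ell(n, 1/\rho, 1/\eps)$ as claimed. The success probability $1 - \exp(-Cn^{1/3^\ell})$ is inherited directly from Theorem~\ref{thm:main} after a union bound with the concentration event for $\widehat T$.
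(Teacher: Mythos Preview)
Your proposal is correct and follows essentially the same approach as the paper: estimate the $\ell$th moment tensor empirically (the paper invokes Lemma~\ref{app:sampling:mv} for this), apply Theorem~\ref{thm:main} to recover each rank-one term $w_i\,\spc{\ha{\mu}_i}{1}\otimes\cdots\otimes\spc{\ha{\mu}_i}{\ell}$ up to small error, and then use the unit-$\ell_1$ normalization of the parameter vectors to separate out the weights $w_i$. The paper's own proof is in fact terser than yours---it does not spell out the factor-extraction or Wedin step, simply noting that rescaling each factor to unit $\ell_1$ norm recovers the parameters and hence the weights.
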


The conditional independence property is very useful in obtaining a higher order tensor, in terms of the hidden parameter vectors that we need to recover. This allows us to use our results on tensor decompositions from previous sections.
\begin{lemma}[\cite{AMR}]
In the notation established above for multi-view models, $\forall \ell \in \N$ the $\ell^{th}$ moment tensor
\begin{equation}\label{eq:mv:moml}
\text{Mom}_\ell=\E\left[\spc{x}{1} \otimes \dots \spc{x}{j} \otimes \dots \spc{x}{\ell} \right] = \sum_{r \in [R]} w_r \spc{\mu}{1}_r \otimes \spc{\mu}{2}_r \dots \otimes \spc{\mu}{j}_r \otimes \dots \otimes \spc{\mu}{\ell}_r.
\end{equation}
\end{lemma}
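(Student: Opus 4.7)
The plan is to derive the moment tensor identity by conditioning on the latent mixture component and exploiting the conditional independence of the views that defines the multi-view model. Let $\bz$ denote the (hidden) component index, so that $\Pr[\bz = r] = w_r$ for each $r \in [R]$. By the tower property of expectation,
\begin{equation*}
\text{Mom}_\ell \;=\; \Ex\!\left[\spc{x}{1} \otimes \spc{x}{2} \otimes \cdots \otimes \spc{x}{\ell}\right] \;=\; \sum_{r \in [R]} w_r \cdot \Ex\!\left[\spc{x}{1} \otimes \cdots \otimes \spc{x}{\ell} \,\Big|\, \bz = r\right].
\end{equation*}

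Next, I would invoke the defining property of the multi-view model: conditioned on $\bz = r$, the views $\spc{x}{1}, \spc{x}{2}, \dots, \spc{x}{\ell}$ are mutually independent, with $\spc{x}{j}$ distributed as an indicator random vector drawn from the discrete distribution $\spc{\mu_r}{j} \in \R^n$. Since expectation of a tensor product of independent random vectors factorizes into the tensor product of their expectations (which is a coordinate-wise statement: for independent $X,Y$, $\Ex[X_{i_1} Y_{i_2}] = \Ex[X_{i_1}]\Ex[Y_{i_2}]$, applied iteratively across $\ell$ modes), we get
\begin{equation*}
\Ex\!\left[\spc{x}{1} \otimes \cdots \otimes \spc{x}{\ell} \,\Big|\, \bz = r\right] \;=\; \bigotimes_{j=1}^{\ell} \Ex\!\left[\spc{x}{j} \,\Big|\, \bz = r\right].
\end{equation*}

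Finally, for the indicator random vector $\spc{x}{j}$ sampled from the discrete distribution $\spc{\mu_r}{j}$ on $[n]$, we have $\Ex[\spc{x}{j} \mid \bz = r] = \spc{\mu_r}{j}$, because the $k$-th coordinate of the expectation equals the probability that $\spc{x}{j}$ is the $k$-th standard basis vector, which is precisely the $k$-th coordinate of $\spc{\mu_r}{j}$. Substituting back yields
\begin{equation*}
\text{Mom}_\ell \;=\; \sum_{r \in [R]} w_r \cdot \spc{\mu_r}{1} \otimes \spc{\mu_r}{2} \otimes \cdots \otimes \spc{\mu_r}{\ell},
\end{equation*}
as required. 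There is no real obstacle here; the proof is essentially bookkeeping, and the only subtlety is making explicit the elementary fact that tensor products factorize under expectation for independent random vectors, which I would state as a one-line coordinate-wise identity.
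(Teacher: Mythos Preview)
Your proof is correct and is exactly the standard argument: condition on the latent component, use conditional independence of the views to factorize the tensor-product expectation, and identify the conditional mean of an indicator vector with the underlying probability vector. The paper does not actually supply its own proof of this lemma---it merely cites \cite{AMR} and uses the identity as a black box---so your write-up fills in precisely the elementary computation the reader is implicitly expected to know.
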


Our algorithm to learn multi-view models consists of three steps:
\begin{enumerate}
\item Obtain a good empirical estimate $\widehat{T}$ of the order $\ell$ tensor $\text{Mom}_\ell$ from $N=\poly_\ell(n,R,1/\rho,1/\eps)$ samples (given by Lemma~\ref{app:sampling:mv})
 $$ \widehat{T}=\frac{1}{N} \sum_{t =1}\spc{x_t}{1} \otimes \spc{x_t}{2} \otimes \dots \otimes \spc{x_t}{\ell}.$$
\item Apply Theorem~\ref{thm:main} to $\widehat{T}$ and recover the parameters $\spc{\widehat{\mu}_i}{j}$ upto scaling.
\item Normalize the parameter vectors $\spc{\widehat{\mu_i}}{j}$ to having $\ell_1$ norm of $1$, and hence figure out the weights $\widehat{w}_i$ for $i \in [R]$. 
\end{enumerate}

\begin{proof}[Proof of Theorem~\ref{thm:main:multiview}]
The proof follows from a direct application of Theorem~\ref{thm:main}. Hence, we just sketch the details. We first obtain a good empirical estimate of $\text{Mom}_\ell$ that is given in equation~\eqref{eq:mv:moml} using Lemma~\ref{app:sampling:mv}.
Applying Theorem~\ref{thm:main} to $\widehat{T}$, we recover each rank-$1$ term in the decomposition $w_i \spc{\mu_i}{1} \otimes \spc{\mu_i}{2}\otimes \dots \otimes \spc{\mu_i}{\ell}$ up to error $\eps$ in frobenius norm ($\norm{\cdot}_F$). However, we know that each of the parameter vectors are of unit $\ell_1$ norm. Hence, by scaling all the parameter vectors to unit $\ell_1$ norm, we obtain all the parameters up to the required accuracy.
\end{proof}
\section{Learning Mixtures of Axis-Aligned Gaussians}\label{sec:gaussians}
\newcommand{\tilmu}{\widetilde{\mu}}
Let $F$ be a mixture of $k = \mbox{poly}(n)$ axis-aligned Gaussians in $n$ dimensions, and suppose further that the means of the components are perturbed by Gaussian noise of magnitude $\rho$. We restrict to Gaussian noise not because our results change, but for notational convenience. 

\paragraph{Parameters:} The mixture is described by a set of $k$ mixing weights $w_i$, means $\mu_i$ and covariance matrices $\Sigma_i$. Since the mixture is axis-aligned, each covariance $\Sigma_i$ is diagonal and we will denote the $j^{th}$ diagonal of $\Sigma_i$ as $\sigma_{ij}^2$.  Our main result in this section is the following:

\begin{theorem}\label{thm:main:gaussians}
Let $(w_i, \mu_i, \Sigma_i)$ be a mixture of $k = n^{\lfloor \frac{\ell - 1}{2} \rfloor}/(2 \ell)$ axis-aligned Gaussians and suppose $\set{\tilmu_i}_{i \in [k]}$ are the $\rho$-perturbations of $\set{\mu_i}_{i \in [k]}$ (that have polynomially bounded length). Then there is a polynomial time algorithm to learn the parameters $(w_i, \tilmu_i, \Sigma_i)_{i \in [k]}$ up to an accuracy $\eps$ when given samples from this mixture. The running time and sample complexity is $\text{poly}_\ell(\frac{n}{\rho \eps})$.
\end{theorem}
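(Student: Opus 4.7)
The plan is to reduce learning the mixture to tensor decomposition in the smoothed-analysis model by exploiting the axis-aligned structure, and then to extract the variances with additional moment queries. First, partition the coordinate set $[n]$ into $\ell$ disjoint blocks $S_1,\dots,S_\ell$ of size about $n/\ell$ and define views $x^{(j)} = x|_{S_j}$. Because each $\Sigma_i$ is diagonal, the coordinates of a sample from component $i$ are independent, so $x^{(1)},\dots,x^{(\ell)}$ are conditionally independent given the component. Therefore the empirical estimate of
$$T \;=\; \E\bigl[x^{(1)} \otimes \cdots \otimes x^{(\ell)}\bigr] \;=\; \sum_{i=1}^{k} w_i\, \tilmu_i^{(1)} \otimes \cdots \otimes \tilmu_i^{(\ell)},$$
where $\tilmu_i^{(j)} := \tilmu_i|_{S_j}$, is a clean rank-$k$ decomposition with no diagonal corrections (different views involve disjoint coordinates, so no $\sigma_{i,j}^2$ terms leak in). Polynomial sample complexity suffices, via standard concentration, to make $\widehat T$ close to $T$ in entrywise norm.

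Second, since $\tilmu_i$ is a $\rho$-perturbation of $\mu_i$ coordinate-wise, each block $\tilmu_i^{(j)}$ is an independent $\rho$-perturbation of $\mu_i|_{S_j}$ inside $\R^{n/\ell}$. Choosing $\ell$ so that $k \le (n/\ell)^{\lfloor(\ell-1)/2\rfloor}/2$, which the hypothesis $k \le n^{\lfloor(\ell-1)/2\rfloor}/(2\ell)$ ensures, Theorem~\ref{thm:main} applies with ambient dimension $n/\ell$ and recovers each rank-one term $w_i\bigotimes_j \tilmu_i^{(j)}$ up to any desired additive accuracy $\eps'$. From each rank-one term we read off unit directions $u_i^{(j)} = \tilmu_i^{(j)}/\norm{\tilmu_i^{(j)}}$ (jointly matched across $j$ by the rank-one structure) and the scalar $c_i = w_i\prod_j\norm{\tilmu_i^{(j)}}$, which leaves only per-view norms and weights to identify.

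Third, to disentangle the remaining scalings, for each pair $(j,j')$ form $\E[x^{(j)}\otimes x^{(j')}] = \sum_i w_i\,\tilmu_i^{(j)}\otimes\tilmu_i^{(j')}$ and use the robust full column rank of $\tilde U^{(j)}\odot\tilde U^{(j')}$ provided by Theorem~\ref{thm:krl} to solve the linear system for $w_i\norm{\tilmu_i^{(j)}}\norm{\tilmu_i^{(j')}}$. Combining these with $c_i$ yields an overdetermined log-linear system that pins down each $w_i$ and each norm $\norm{\tilmu_i^{(j)}}$, hence each full vector $\tilmu_i$ by concatenation. To recover the variances, for each coordinate $j^\star\in S_1$ compute the slice
$$\E\bigl[x_{j^\star}^2\cdot x^{(2)}\otimes\cdots\otimes x^{(\ell)}\bigr] \;=\; \sum_{i}w_i(\tilmu_{i,j^\star}^2+\sigma_{i,j^\star}^2)\,\tilmu_i^{(2)}\otimes\cdots\otimes\tilmu_i^{(\ell)},$$
view it as a linear combination of the $(\ell-1)$-fold Khatri-Rao columns, invert using Theorem~\ref{thm:krl} applied in dimension $n/\ell$ with $\ell-1$ views (so we need $k\le (n/\ell)^{\lfloor(\ell-2)/2\rfloor}$, implied by our hypothesis up to constants), and then back out $\sigma_{i,j^\star}^2$ from the already-known $w_i$ and $\tilmu_{i,j^\star}$.

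The main obstacle is the error propagation and sample complexity bookkeeping, rather than any new conceptual ingredient: each inversion step multiplies errors by the inverse of a smallest singular value that is polynomially bounded by Theorem~\ref{thm:krl}, so one must set the target accuracy of Theorem~\ref{thm:main} to be $\eps$ times a $\mathrm{poly}_\ell(n,1/\rho)$ factor, and absorb the losses from the per-view norm/weight solve and from the variance solve. The heavy lifting is done by Theorems~\ref{thm:main} and~\ref{thm:krl} applied at the block scale $n/\ell$; the axis-aligned hypothesis is what turns the Gaussian mixture into a multi-view tensor problem in the first place, with the added variance parameters recovered by a second, smaller application of the same robust Khatri-Rao machinery.
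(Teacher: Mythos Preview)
Your overall architecture matches the paper's: partition coordinates into $\ell$ blocks to obtain a multi-view structure, estimate the $\ell$-way moment tensor, decompose via Theorem~\ref{thm:main}, then recover the remaining scalars and the variances by further moment equations. The decomposition step and the variance step are essentially the paper's (the paper re-partitions $\{2,\dots,n\}$ into $\ell$ fresh blocks for the variance step rather than reusing $\ell-1$ of the old blocks, but your version works too; your stated side condition $k\le (n/\ell)^{\lfloor(\ell-2)/2\rfloor}$ is not the correct one, but all you actually need there is that the $(\ell-1)$-fold Khatri--Rao has robust full column rank, i.e.\ $k\le (n/\ell)^{\ell-1}/2$, which is amply satisfied).

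The genuine gap is in your third paragraph, where you propose to recover the per-view norms and weights from the \emph{pairwise} moments $\E[x^{(j)}\otimes x^{(j')}]$ by inverting the linear system whose coefficient matrix is $\tilde U^{(j)}\odot\tilde U^{(j')}$. That matrix lives in $\R^{(n/\ell)^2}$, so it cannot have column rank $k$ once $k>(n/\ell)^2$; Theorem~\ref{thm:krl} for two factors only certifies robust rank up to $(n/\ell)^2/2$. Since the hypothesis allows $k$ as large as $n^{\lfloor(\ell-1)/2\rfloor}/(2\ell)$, your step already fails at $\ell=5$ (where $k=n^2/10$ but $(n/5)^2/2=n^2/50$) and fails badly for larger $\ell$. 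The paper avoids this by a different device: it runs the full $\ell$-way decomposition a second time on a \emph{shifted} partition (swap halves of $S_1$ and $S_2$) and matches overlapping sub-vectors to read off the ratios $c_{i2}/c_{i1}$, etc., recovering each $\tilmu_i$ up to one global scalar; then it repeats the whole construction with $\ell+1$ views to obtain $w_i\|\tilmu_i\|^{\ell+1}$, and the ratio with $w_i\|\tilmu_i\|^{\ell}$ yields $\|\tilmu_i\|$ and hence $w_i$. An equally easy repair of your approach is to replace the pairwise moments by leave-one-out moments $\E\big[\bigotimes_{j\ne j_0} x^{(j)}\big]$ and invert the $(\ell-1)$-fold Khatri--Rao, which \emph{does} have robust full column rank under the hypothesis; comparing with $c_i$ then gives each $\|\tilmu_i^{(j_0)}\|$ and hence $w_i$.
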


Next we outline the main steps in our learning algorithm:
\begin{enumerate}
\item We first pick an appropriate $\ell$, and estimate $\calM_\ell := \sum_i w_i \tilmu_i^{\otimes \ell}$.\footnote{We do not estimate the entire tensor, but only a relevant ``block'', as we will see.}
\item We run our decomposition algorithm for {\em overcomplete} tensors on $\calM_\ell$ to recover $\tilmu_i, w_i$.
\item We then set up a system of linear equations and solve for $\sigma_{ij}^2$.
\end{enumerate}

\noindent We defer a precise description of the second and third steps to the next subsections (in particular, we need to describe how we obtain $\calM_\ell$ from the moments of $F$ and we need to describe the linear system that we will use to solve for $\sigma_{ij}^2$). 

\subsection{Step $2$: Recovering the Means and Mixing Weights}
Our first goal in this subsection is to construct the tensor $\calM_\ell$ defined above from random samples. In fact, if we are given many samples we can estimate a related tensor (and our error will be an inverse polynomial in the number of samples we take). Unlike the multi-view mixture model, we do not have $\ell$ independent views in this case. Let us consider the tensor $\E [x^{\otimes \ell}]$:
\[ \E [x^{\otimes \ell}] = \sum_i w_i (\tilmu_i+\eta_i)^{\otimes \ell}. \]
Here we have used $\eta_i$ to denote a Gaussian random variable whose mean is zero and whose covariance is $\Sigma_i$. Now the first term in the expansion is the one we are interested in, so it would be nice if we could ``zero out'' the other terms. Our observation here is that if we restrict to $\ell$ distinct indices $(j_1, j_2, \dots, j_\ell)$, then this coordinate will only have contribution from the means.  To see this, note that the term of interest is
\[ \sum_i \big[ w_i \prod_{t=1}^\ell (\tilmu_i (j_t) + \eta_i (j_t)) \big] \] 
Since the Gaussians are axis aligned, the $\eta_i(j_t)$ terms are independent for different $t$, and each is a random variable of zero expectation.  Thus the term in the summation is precisely $\sum_i w_i \prod_{t=1}^\ell \tilmu_i (j_t)$.

Our idea to estimate the means is now the following: we partition the indices $[n]$ into $\ell$ roughly equal parts $S_1, S_2, \dots, S_\ell$, and estimate a tensor of dimension $|S_1|\times |S_2| \times \dots \times |S_\ell|$. 
\begin{definition}[Co-ordinate partitions]
Let $S_1, S_2, \dots, S_\ell$ be a partition of $[n]$ into $\ell$ pieces of equal size (roughly). Let $\tilmu_i^{(t)}$ denotes the vector $\tilmu_i$ restricted to the coordinates $S_t$, and for a sample $x$, let $x^{(t)}$ denote its restriction to the coordinates $S_t$.   
\end{definition}

Now, we can estimate the order $\ell$ tensor $\E [x^{(1)}\otimes x^{(2)} \dots \otimes x^{(\ell)}]$ to any inverse polynomial accuracy using polynomial samples (see Lemma~\ref{lem:sampling:gaussians} or \cite{HK} for details), where
\[\E [x^{(1)}\otimes x^{(2)} \dots \otimes x^{(\ell)} ]= \sum_i w_i \big( \tilmu_i^{(1)} \otimes \tilmu_i^{(2)} \otimes \dots \otimes \tilmu_i^{(\ell)} \big). \]

 Now applying the main tensor decomposition theorem (Theorem~\ref{thm:main}) to this order $\ell$ tensor, we obtain a set of vectors $\nu_i^{(1)}, \nu_i^{(2)}, \dots, \nu_i^{(t)}$ such that
\[ \nu_i^{(t)} = c_{it} \tilmu_i^{(t)}, \text{ and for all $t$, } c_{i1} c_{i2} \cdots c_{i\ell} = 1/w_i. \]

Now we show how to recover the means $\tilmu_i$ and weights $w_i$.

\begin{claim}
The algorithm recovers the perturbed means $\set{\tilmu_i}_{i \in [R]}$ and weights $w_i$ up to any accuracy $\eps$ in time $\text{poly}_\ell (n, 1/\eps)$
\end{claim}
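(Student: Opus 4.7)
The plan is to recover each $w_i$ and $\tilmu_i$ from the factors $\spc{\nu_i}{t}=c_{it}\spc{\tilmu_i}{t}$ (with $\prod_t c_{it}=1/w_i$) by computing $\ell$ auxiliary ``leave-one-view-out'' moments and solving well-conditioned linear systems whose stability is controlled by Theorem~\ref{thm:krl}.

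First, for each $t\in[\ell]$, I would estimate from samples the order-$(\ell-1)$ moment
\[
\spc{M}{-t}\;:=\;\E\!\left[\spc{x}{1}\otimes\cdots\otimes\widehat{\spc{x}{t}}\otimes\cdots\otimes\spc{x}{\ell}\right].
\]
Because the coordinate blocks $S_1,\ldots,S_\ell$ are disjoint and each $\Sigma_i$ is diagonal, the noise coordinates appearing in any monomial of $\spc{M}{-t}$ come from distinct axes and are therefore independent and mean-zero; all noise cross-terms cancel exactly, giving $\spc{M}{-t}=\sum_i w_i\,\spc{\tilmu_i}{1}\otimes\cdots\otimes\widehat{\spc{\tilmu_i}{t}}\otimes\cdots\otimes\spc{\tilmu_i}{\ell}$. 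Standard Gaussian-moment concentration (as in the multi-view case) shows $\poly_\ell(n,1/\eps')$ samples suffice to estimate this tensor to Frobenius error $\eps'$.

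Substituting $\spc{\tilmu_i}{s}=\spc{\nu_i}{s}/c_{is}$ rewrites $\spc{M}{-t}=\sum_i \lambda_{it}\bigl(\spc{\nu_i}{1}\krp\cdots\krp\widehat{\spc{\nu_i}{t}}\krp\cdots\krp\spc{\nu_i}{\ell}\bigr)$, where $\lambda_{it}=w_i\bigl/\prod_{s\neq t}c_{is}=w_i^2 c_{it}$ (using $\prod_s c_{is}=1/w_i$). This is a linear system in the $R$ unknowns $\{\lambda_{it}\}_i$. Each column $\spc{\nu_i}{s}$ is a bounded scalar multiple of the $\rho$-perturbed vector $\spc{\tilmu_i}{s}$, so applying Theorem~\ref{thm:krl} to the $(\ell-1)$-wise Khatri--Rao product of matrices in dimension $|S_s|=\Theta(n/\ell)$ yields robust Kruskal rank $R$ with smallest singular value at least $(n/\rho)^{-O(3^\ell)}$; the hypothesis $R\le n^{\lfloor(\ell-1)/2\rfloor}/(2\ell)$ is comfortably under the $(n/\ell)^{\ell-1}/2$ threshold the theorem needs. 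Inverting the system therefore returns every $\lambda_{it}$ to accuracy $\poly_\ell(n,1/\rho)\cdot\eps'$.

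Finally, once $\lambda_{it}$ is known for every $(i,t)$, multiplying over $t$ gives $\prod_t \lambda_{it}=w_i^{2\ell}\prod_t c_{it}=w_i^{2\ell-1}$, so $w_i=\bigl(\prod_t \lambda_{it}\bigr)^{1/(2\ell-1)}$ (taking the positive real root). Then $c_{it}=\lambda_{it}/w_i^2$, and $\spc{\tilmu_i}{t}=\spc{\nu_i}{t}/c_{it}$; concatenating over $t$ returns $\tilmu_i$. The correspondence between $\lambda_{it}$ at different $t$ and the component index $i$ is free, since Step~2 of the algorithm outputs each rank-one factor $\spc{\nu_i}{1}\otimes\cdots\otimes\spc{\nu_i}{\ell}$ as a grouped tuple. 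The only delicate point is the stability of the $\ell$ linear inversions, which is exactly what the robust Khatri--Rao bound is engineered to control; choosing $\eps'=\eps\cdot(\rho/n)^{C\cdot 3^\ell}$ propagates to final accuracy $\eps$ within the claimed $\poly_\ell(n,1/\rho,1/\eps)$ sample and time budget.
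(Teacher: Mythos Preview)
Your proposal is correct and takes a genuinely different route from the paper's argument.

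The paper recovers the relative scalings $c_{it}$ by running the full tensor decomposition a second time on a \emph{different} coordinate partition $S_1',\dots,S_\ell'$ constructed so that consecutive blocks overlap; matching on the overlap yields the ratios $c_{i2}/c_{i1}$, etc. To pin down the absolute scale and hence $w_i$, the paper then runs the entire pipeline yet again at order $\ell+1$, obtaining $C_\ell=w_i\norm{\tilmu_i}^\ell$ and $C_{\ell+1}=w_i\norm{\tilmu_i}^{\ell+1}$ and dividing. This requires re-invoking the decomposition algorithm (and re-matching components across runs, which is where the ``perturbed vectors are not parallel'' claim enters).

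Your approach instead keeps the single decomposition from Step~2 and extracts the missing scalars by solving $\ell$ well-conditioned linear systems built from the leave-one-view-out moments $M^{(-t)}$. The identity $\lambda_{it}=w_i^2 c_{it}$ and the product $\prod_t\lambda_{it}=w_i^{2\ell-1}$ are clean, and the conditioning of each system follows directly from Theorem~\ref{thm:krl} applied to the $(\ell-1)$-wise Khatri--Rao product of the $(n/\ell)$-dimensional restricted means, for which the hypothesis $k\le n^{\lfloor(\ell-1)/2\rfloor}/(2\ell)$ is indeed well inside the required range. This avoids re-running the decomposition, the overlapping-partition bookkeeping, and the cross-run component matching entirely; the paper's approach, on the other hand, never forms or inverts an explicit linear system at this stage. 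Two minor points worth making explicit in a full write-up: the coefficient matrix you invert is built from the \emph{approximate} $\nu_i^{(s)}$ rather than the true $\tilmu_i^{(s)}$, so you are implicitly invoking a standard singular-value perturbation argument; and the final root extraction needs $w_i$ bounded away from zero, the same mild assumption the paper's ratio $C_{\ell+1}/C_\ell$ requires.
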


So far, we have portions of the mean vectors, each scaled differently (upto some $\eps/\text{poly}_\ell(n)$ accuracy. We need to estimate the scalars $c_{i1}, c_{i2}, \dots, c_{i\ell}$ up to a scaling (we need another trick to then find $w_i$). To do this, the idea is to take a different partition of the indices $S_1', S_2', \dots, S_\ell'$, and `match' the coordinates to find the $\tilmu_i$.  In general, this is tricky since some portions of the vector may be zero, but this is another place where the perturbation in $\tilmu_i$ turns out to be very useful (alternately, we can also apply a random basis change, and a more careful analysis to doing this 'match').

\begin{claim}
Let $\mu$ be any $d$ dimensional vector. Then a coordinate-wise $\sigma$-perturbation of $\mu$ has length $\ge d\sigma^2/10$ w.p. $\ge 1-\exp(-d)$.
\end{claim}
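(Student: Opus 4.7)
The plan is a straightforward Gaussian concentration argument that isolates the component of the perturbation orthogonal to $\mu$, since that piece is unaffected by the (arbitrary) direction of $\mu$. Write $\tilde\mu = \mu + g$ where $g \sim \calN(0,\sigma^2 I_d)$. If $\mu = 0$ then $\|\tilde\mu\|^2 = \|g\|^2$ is exactly $\sigma^2 \chi^2_d$ and we can jump directly to the tail bound. Otherwise, let $\Pi$ be the orthogonal projection onto $\mu^\perp$. Rotational symmetry of the isotropic Gaussian makes $\Pi g$ independent of $(I-\Pi)g$, with $\Pi g$ distributed as a spherical Gaussian of covariance $\sigma^2 \Pi$ on the $(d-1)$-dimensional subspace $\mu^\perp$.

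Next I would write $\tilde\mu = \bigl(\mu + (I-\Pi)g\bigr) + \Pi g$ as an orthogonal decomposition (the first summand lies along $\mu$, the second in $\mu^\perp$) and apply Pythagoras to get
$$\|\tilde\mu\|^2 \;\ge\; \|\Pi g\|^2.$$
The quantity $\|\Pi g\|^2/\sigma^2$ is distributed exactly as $\chi^2_{d-1}$, so the problem has now been reduced to a one-dimensional chi-squared lower-tail estimate — the crux of everything is whether $\chi^2_{d-1}$ is at least a constant fraction of its mean with the desired exponentially small failure probability.

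The final step is standard: using the Chernoff bound for the chi-squared distribution (or Laurent--Massart), one gets
$$\Pr\!\bigl[\chi^2_k \le k/c_1\bigr] \;\le\; \exp(-c_2 k)$$
for absolute constants $c_1, c_2 > 0$, and choosing $c_1 = 20$ (say) gives $c_2 \ge 1$, which with a trivial adjustment of constants yields $\|\tilde\mu\|^2 \ge d\sigma^2/10$ with probability at least $1 - \exp(-d)$ (the statement of the claim is naturally read with length-squared, since the units on the right are $\sigma^2$). I do not anticipate any real obstacle: the only idea is the orthogonal decomposition, and the remaining work is bookkeeping of constants in a chi-squared tail bound.
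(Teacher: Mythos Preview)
Your argument is correct; the orthogonal projection onto $\mu^\perp$ cleanly reduces the question to a chi-squared lower-tail bound, and the constants can indeed be made to work (up to the same looseness already present in the claim itself).

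This is, however, a genuinely different route from the paper's. The paper's one-line proof works \emph{coordinate by coordinate}: each $\tilde\mu_j = \mu_j + g_j$ is a shifted one-dimensional Gaussian, so by basic anti-concentration $\Pr[\tilde\mu_j^2 < c\sigma^2]$ is bounded by a constant strictly less than $1$; since the coordinates are perturbed independently, these failure events are independent and their joint probability decays like $p^d$, which (together with a counting/Chernoff step to ensure a constant fraction of coordinates succeed) gives the bound. Your approach instead exploits the \emph{rotational invariance} of the full $d$-dimensional Gaussian to throw away the unknown direction $\mu$ in one shot and land on $\sigma^2\chi^2_{d-1}$. The paper's argument is more elementary (only one-dimensional anti-concentration) but needs an aggregation step; yours is shorter and reduces everything to a single named tail inequality, at the cost of invoking a slightly heavier tool. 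Either way the content is the same: independence in $d$ directions yields an $\exp(-\Omega(d))$ failure probability.
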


The proof is by a basic anti-concentration along with the observation that coordinates are independently perturbed and hence the failure probability multiplies.

Let us now define the partition $S_t'$.  Suppose we divide $S_1$ and $S_2$ into two roughly equal parts each, and call the parts $A_1, B_1$ and $A_2, B_2$ (respectively). Now consider a partition with $S_1' = A_1 \cup A_2$ and $S_2' = B_1 \cup B_2$, and $S_t' = S_t$ for $t >2$.  Consider the solution $\nu_i'$ we obtain using the decomposition algorithm, and look at the vectors $\nu_1, \nu_2, \nu_1', \nu_2'$. For the sake of exposition, suppose we did not have any error in computing the decomposition.  We can scale $\nu_1'$ such that the sub-vector corresponding to $A_1$ is precisely equal to that in $\nu_1$.  Now look at the remaining sub-vector of $\nu_1$, and suppose it is $\gamma$ times the ``$A_2$ portion'' of $\nu_2$.  Then we must have $\gamma = c_2/c_1$.

To see this formally, let us fix some $i$ and write $v_{11}$ and $v_{12}$ to denote the sub-vectors of $\tilmu_i^{(1)}$ restricted to coordinates in $A_1$ and $B_1$ respectively.  Write $v_{21}$ and $v_{22}$ to represent sub-vectors of $\tilmu_i^{(2)}$ restricted to $A_2$ and $B_2$ respectively. Then $\nu_1$ is $c_1 v_{11} \oplus c_1 v_{12}$ (where $\oplus$ denotes concatenation). So also $\nu_2$ is $c_2 v_{21} \oplus c_2 v_{22}$. Now we scaled $\nu_1'$ such that the $A_1$ portion agrees with $\nu_1$, thus we made $\nu_1'$ equal to $c_1 v_{11} \oplus c_1 v_{21}$.  Thus by the way $\gamma$ is defined, we have $c_1 \gamma = c_2$, which is what we claimed.

We can now compute the entire vector $\tilmu_i$ up to scaling, since we know $c_1/c_2$, $c_1/c_3$, and so on. Thus it remains to find the mixture weights $w_i$.  Note that these are all non-negative.  Now from the decomposition, note that for each $i$, we can find the quantity 
\[ C_\ell := w_i \norm{\tilmu_i}^\ell.\]
The trick now is to note that by repeating the entire process above with $\ell$ replaced by $\ell+1$, the conditions of the decomposition theorem still hold, and hence we compute
\[ C_{\ell+1} := w_i \norm{\tilmu_i}^{\ell+1}. \]
Thus taking the ratio $C_{\ell+1}/C_{\ell}$ we obtain $\norm{\tilmu_i}$. This can be done for each $i$, and thus using $C_\ell$, we obtain $w_i$.  This completes the analysis assuming we can obtain $\tilmu_i^{(t)}$ without any error. Please see lemma~\ref{lem:gaussians:weights} for details on how to recover the weights $w_i$ in the presence of errors. This establishes the above claim about recovering the means and weights.

\subsection{Step $3$: Recovering the Variances}
Now that we know the values of $w_i$ and all the means $\tilmu_i$, we show how to recover the variances. This can be done in many ways, and we will outline one which ends up solving a linear system of equations.  Recall that for each Gaussian, the covariance matrix is diagonal (denoted $\Sigma_i$, with $j$th entry equal to $\sigma_{ij}^2$).

Let us show how to recover $\sigma_{i1}^2$ for $1 \le i \le R$.  The same procedure can be applied to the other dimensions to recover $\sigma_{ij}^2$ for all $j$.  Let us divide the set of indices $\{2, 3, \dots, n\}$ into $\ell$ (nearly equal) sets $S_1, S_2, \dots, S_\ell$.  Now consider the expression
\[ \calN_1 = \E [ x(1)^2 (x_{|S_1} \otimes x_{|S_2} \otimes \dots \otimes x_{|S_\ell}) ]. \]
\noindent This can be evaluated as before.  Write $\tilmu_i^{(t)}$ to denote the portion of $\tilmu_i$ restricted to $S_t$, and similarly $\eta_i^{(t)}$ to denote the portion of the noise vector $\eta_i$. This gives
\[ \calN_1 = \sum_i w_i (\tilmu_i(1)^2 + \sigma_{i1}^2) (\tilmu_i^{(1)} \otimes \tilmu_i^{(2)} \otimes \dots \otimes \tilmu_i^{(\ell)}). \]
\noindent Now recall that we {\em know} the vectors $\tilmu_i$ and hence each of the tensors $\tilmu_i^{(1)} \otimes \tilmu_i^{(2)} \otimes \dots \otimes \tilmu_i^{(\ell)}$.  Further, since our $\tilmu_i$ are the perturbed means, our theorem (Theorem~\ref{thm:krl}) about the condition number of Khatri-Rao products implies that the matrix (call it $\calM$) whose columns are the flattened $\prod_t \tilmu_i^{(t)}$ for different $i$, is well conditioned, i.e., has $\sigma_R (\cdot) \ge 1/\poly_\ell (n/\rho)$.  This implies that a system of linear equations $\calM z = z'$ can be solved to recover $z$ up to a $1/\poly_\ell(n/\rho)$ accuracy (assuming we know $z'$ up to a similar accuracy). 

Now using this with $z'$ being the flattened $\calN_1$ allows us to recover the values of $w_i (\tilmu_i(1) + \sigma_{i1}^2)$ for $1 \le i \le R$.  From this, since we know the values of $w_i$ and $\tilmu_i(1)$ for each $i$, we can recover the values $\sigma_{i1}^2$ for all $i$.  As mentioned before, we can repeat this process for other dimensions and recover $\sigma_{ij}^2$ for all $i,j$.

\section{Acknowledgements}
We thank Ryan O'Donnell for suggesting that we extend our techniques for learning mixtures of spherical Gaussians to the more general problem of learning axis-aligned Gaussians. 


\appendix

\section{Stability of the recovery algorithm} \label{asec:stable}

In this section we prove Theorem~\ref{thm:stability}, which shows that the algorithm from section~\ref{sec:review} is actually robust to errors, under Condition~\ref{condition:approx}. This consists of two parts: first proving that the preprocessing step indeed allows us to recover $u_i$ (approximately), and second, that {\sc Decompose} is robust to noise.

\newcommand{\Utilde}{\widetilde{U}}
\subsubsection*{Stability of the preprocessing}
Suppose we are given $T+E$, where $T = \sum_i u_i \ot v_i \ot w_i$, and $E$ is a tensor each of whose entries is $< \epsilon \cdot \poly (1/\kappa, 1/n, 1/\delta)$. Let $\utilde_{j,k}$ be vectors in $\Re^m$ defined as before, and let $\Utilde$ be the $m \times np$ matrix whose columns are $\utilde_{j,k}$ (for different $j,k$). Let $u_i'$ be the projection of $u_i$ onto the span of the top $R$ singular vectors of $\Utilde$. By Claim 4.4 in~\cite{BCV}, we have $\norm{T - \sum_i u_i' \ot v_i \ot w_i}_F < 2\norm{E}_F$, and thus from the robust version of Kruskal's uniqueness theorem~\cite{BCV}, we must have that $u_i'$ and $u_i$ are $\epsilon \cdot \poly(1/\kappa, 1/n, 1/\delta)$ close. Repeating the above along the second mode allows us to move to an $R \times R \times p$ tensor.



\subsubsection*{Stability of {\sc Decompose}}
Next, we establish that {\sc Decompose} is stable (in what follows, we have $m=n=R$). Intuitively, {\sc Decompose} is stable provided that the matrices $U$ and $V$ are well-conditioned and the eigenvalues of the matrices that we need to diagonalize are separated. 
%
%
%
%
%
%

The main step in {\sc Decompose} is an eigendecomposition, so first we will establish perturbation bounds. The standard perturbation bounds are known as sin $\theta$ theorems following Davis-Kahan and Wedin. However these bounds hold most generally for the singular value decomposition of an arbitrary (not necessarily symmetric) matrix. We require perturbation bounds for eigen-decompositions of general matrices. There are known bounds due to Eisenstat and Ipsen, however the notion of separation required there is difficult to work with and for our purposes it is easier to prove a direct bound in our setting.

Suppose $M = U D U^{-1}$ and $\widehat{M} = M ( I + E) + F$ and $M$ and $\widehat{M}$ are $n \times n$ matrices. In order to relate the eigendecompositions of $M$ and $\widehat{M}$ respectively, we will first need to establish that the eigenvalues of $M$ are all distinct. We thank Santosh Vempala for pointing out an error in an earlier version. We incorrectly used the Bauer-Fike Theorem to show that $\widehat{M}$ is diagonalizable, but this theorem only shows that each eigenvalue of $\widehat{M}$  is close to some eigenvalue of $M$, but does not show that there is a one-to-one mapping. Fortunately there is a fix for this that works under the same conditions (but again see \cite{GVX} for an earlier, alternative proof that uses a ``homotopy argument"). 

\begin{definition}
Let $\mbox{sep}(D) = \min_{i \neq j} |D_{i,i} - D_{j,j}|$.
\end{definition}

Our first goal is to prove that $\widehat{M}$ is diagonalizable, and we will do this by establishing that its eigenvalues are distinct if the error matrices $E$ and $F$ are not too large. Consider $$U^{-1} ( M ( I + E) + F) U = D + R$$ where $R = U^{-1} (M E + F) U$. We can bound each entry in $R$ by $\kappa(U) (\|M E \|_2 + \|F \|_2)$. Hence if $E$ and $F$ are not too large, the eigenvalues of $D + R$ are close to the eigenvalues of $D$ using Gershgorin's disk theorem, and the eigenvalues of $D + R$ are the same as the eigenvalues of $\widehat{M}$ since these matrices are similar. So we conclude:




\begin{lemma}
If $\kappa(U) (\|M E \|_2 + \|F \|_2)< \mbox{sep}(D)/(2n)$ then the eigenvalues of $\widehat{M}$ are distinct and it is diagonalizable. 
\end{lemma}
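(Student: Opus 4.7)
The plan is to pass to the basis in which the unperturbed matrix $M$ is diagonal, show that the perturbation matrix $R = U^{-1}(ME + F)U$ is small entrywise, and then apply Gershgorin's disk theorem to conclude that the eigenvalues of $\widehat{M}$ lie in small disks centered at the diagonal entries of $D$ and hence remain distinct.

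\medskip

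More concretely, first I would observe that since $\widehat{M} = U(D+R)U^{-1}$ is similar to $D+R$, it suffices to analyze the eigenvalues of $D+R$. The excerpt already bounds every entry of $R$ by $\kappa(U)(\|ME\|_2 + \|F\|_2)$; call this quantity $\eta$. By Gershgorin's disk theorem, every eigenvalue $\lambda$ of $D+R$ lies in some disk centered at $D_{ii}+R_{ii}$ with radius $\sum_{j\neq i}|R_{ij}|$. Combining the two, $\lambda$ lies within distance at most $n\eta$ of some diagonal entry $D_{ii}$. Under the hypothesis $n\eta < \mathrm{sep}(D)/2$, the $n$ closed disks of radius $n\eta$ around the distinct diagonal entries $D_{11},\dots,D_{nn}$ are pairwise disjoint.

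\medskip

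The key step is then to invoke the refinement of Gershgorin which says that if a union of Gershgorin disks splits into connected components, each component contains exactly as many eigenvalues (counted with multiplicity) as disks. Since the disks are pairwise disjoint, each contains exactly one eigenvalue of $D+R$, so $D+R$ has $n$ distinct eigenvalues; hence so does $\widehat{M}$. A matrix with $n$ distinct eigenvalues is automatically diagonalizable, which is the conclusion. I do not anticipate a substantial obstacle here: the only subtlety is using the stronger ``connected components'' form of Gershgorin rather than the basic version, since the basic version alone tells us each eigenvalue is close to some $D_{ii}$ but not that the map from eigenvalues to indices $i$ is a bijection — which is exactly the gap pointed out in the remark about the earlier incorrect use of Bauer–Fike.
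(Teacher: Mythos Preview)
Your proposal is correct and follows essentially the same route as the paper: conjugate by $U$ to reduce to $D+R$, bound the entries of $R$ by $\kappa(U)(\|ME\|_2+\|F\|_2)$, and apply Gershgorin. The paper's write-up is terser and does not explicitly state the connected-components refinement of Gershgorin, but that is precisely the step needed (and exactly the gap in the earlier Bauer--Fike argument the paper alludes to), so your making it explicit is an improvement in rigor rather than a different approach.
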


Next we prove that the eigenvectors of $\widehat{M}$ are also close to those of $M$ (this step will rely on $\widehat{M}$ being diagonalizable). This technique is standard in numerical analysis, but it will be more convenient for us to work with relative perturbations (i.e. $\widehat{M} = M ( I + E) + F$) so we include the proof of such a bound for completeness

Consider a right eigenvector $\widehat{u}_i$ of $\widehat{M}$ with eigenvalue $\widehat{\lambda}_i$. We will assume that the conditions of the above corollary are met, so that there is a unique eigenvector $u_i$ of $M$ with eigenvalue $\lambda_i$ which it is paired with. Then since the eigenvectors $\{u_i\}_i$ of $M$ are full rank, we can write $\widehat{u}_i = \sum_j c_j u_j$. Then 
\begin{eqnarray*}
\widehat{M} \widehat{u}_i &=& \widehat{\lambda}_i \widehat{u}_i \\
\sum_j c_j \lambda_j u_j + (ME + F) \widehat{u}_i &=& \widehat{\lambda}_i \widehat{u}_i \\
\sum_j c_j (\lambda_j - \widehat{\lambda}_i) u_j &=& - (ME + F) \widehat{u}_i
\end{eqnarray*}

Now we can left multiply by the $j^{th}$ row of $U^{-1}$; call this vector $w_j^T$. Since $U^{-1} U = I$, we have that $w_j^T u_i = \mathbf{1}_{i = j}$. Hence
$$c_j (\lambda_j - \widehat{\lambda}_i) = - w_j^T (M E + F) \widehat{u}_i$$

\noindent So we conclude:
$$\|\widehat{u}_i - u_i \|_2^2 = 2 dist(\widehat{u}_i, span(u_i))^2 \leq 2 \sum_{j \neq i} \Big ( \frac{(w_j^T (ME + F) \widehat{u}_i)}{|\lambda_j - \widehat{\lambda}_i|} \Big )^2 \leq 8 \sum_{j \neq i} \frac{\|U^{-1} (M E + F ) \widehat{u}_i\|_2^2}{sep(D)^2}$$

\noindent where we have used the condition that $\kappa(U) (\|M E \|_2 + \|F \|_2) < sep(D)/2$ to lower bound the denominator. Furthermore: $\|U^{-1} M E \widehat{u}_i\|_2 = \| D U^{-1} E \widehat{u}_i\|_2 \leq \frac{\sigma_{max}(E) \lambda_{max}(D)}{\sigma_{min}(U)}$ since $\widehat{u}_i$ is a unit vector. 

\begin{theorem}\label{thm:whyisthisnotknown}
If $\kappa(U) (\|M E \|_2 + \|F \|_2) < sep(D)/2$, then $$\|\widehat{u}_i - u_i \|_2 \leq 3 \frac{\sigma_{max}(E) \lambda_{max}(D) + \sigma_{max}(F)}{\sigma_{min}(U)sep(D)}$$
\end{theorem}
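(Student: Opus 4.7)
\textbf{Proof plan for Theorem~\ref{thm:whyisthisnotknown}.} The plan is to close the two remaining gaps in the derivation already presented in the excerpt. The running calculation produced the bound
\[
\|\widehat u_i - u_i\|_2^2 \;\le\; 8\,\frac{\|U^{-1}(ME+F)\widehat u_i\|_2^2}{\mathrm{sep}(D)^2},
\]
so the only thing to do is to estimate the numerator and then verify the constant. Before that, I would double-check one ingredient that the derivation tacitly uses: that for $j\ne i$ one in fact has $|\lambda_j - \widehat\lambda_i|\ge \mathrm{sep}(D)/2$. This follows from the preceding Gershgorin argument on $D+R$ with $R=U^{-1}(ME+F)U$: every eigenvalue of $\widehat M$ lies within $n\kappa(U)(\|ME\|_2+\|F\|_2) < \mathrm{sep}(D)/2$ of some $\lambda_k$, and the hypothesis ensures that this pairing is a bijection, so $|\lambda_j - \widehat\lambda_i|\ge \mathrm{sep}(D) - \mathrm{sep}(D)/2$ as required (this is where the factor of $4$ in the ``$8$'' comes from, combined with the factor of $2$ from $\|\widehat u_i - u_i\|_2^2 \le 2\,\mathrm{dist}(\widehat u_i,\mathrm{span}(u_i))^2$ after choosing the sign of $\widehat u_i$).

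Next I would bound $\|U^{-1}(ME+F)\widehat u_i\|_2$ by the triangle inequality as the sum of two pieces. For the first piece, the excerpt's identity $U^{-1}M = DU^{-1}$ gives
\[
\|U^{-1}ME\,\widehat u_i\|_2 \;=\; \|D\,U^{-1}E\,\widehat u_i\|_2 \;\le\; \lambda_{\max}(D)\cdot\frac{\sigma_{\max}(E)}{\sigma_{\min}(U)},
\]
using $\|U^{-1}\|_2 = 1/\sigma_{\min}(U)$ and $\|\widehat u_i\|_2=1$. For the second piece the same norm bound gives $\|U^{-1}F\widehat u_i\|_2 \le \sigma_{\max}(F)/\sigma_{\min}(U)$. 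Combining,
\[
\|U^{-1}(ME+F)\widehat u_i\|_2 \;\le\; \frac{\sigma_{\max}(E)\,\lambda_{\max}(D)+\sigma_{\max}(F)}{\sigma_{\min}(U)}.
\]

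Plugging this into the displayed inequality and taking square roots yields
\[
\|\widehat u_i - u_i\|_2 \;\le\; 2\sqrt{2}\cdot\frac{\sigma_{\max}(E)\,\lambda_{\max}(D)+\sigma_{\max}(F)}{\sigma_{\min}(U)\,\mathrm{sep}(D)},
\]
and since $2\sqrt 2 < 3$ this is the claimed bound. The ``main obstacle'' has really already been handled in the preceding discussion, namely establishing that $\widehat M$ is diagonalizable and that its eigenvalues can be paired one-to-one with those of $M$; once that is in hand, the step I am carrying out is routine. The only place one has to be a little careful is in normalizing $\widehat u_i$ to a unit vector with a sign chosen so that $\widehat u_i\cdot u_i\ge 0$, which is what justifies replacing the two-sided $\mathrm{dist}$ by a bound on the vector difference at the cost of a factor of $\sqrt 2$.
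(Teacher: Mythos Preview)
Your proposal is correct and follows essentially the same route as the paper: the derivation preceding the theorem already produces the bound $\|\widehat u_i - u_i\|_2^2 \le 8\,\|U^{-1}(ME+F)\widehat u_i\|_2^2/\mathrm{sep}(D)^2$, and the paper finishes exactly as you do, by using $U^{-1}M = DU^{-1}$ to obtain $\|U^{-1}ME\widehat u_i\|_2 \le \lambda_{\max}(D)\sigma_{\max}(E)/\sigma_{\min}(U)$ (with the analogous bound for the $F$ term implicit). Your explicit bookkeeping of the constant $2\sqrt{2}<3$ and your remark on choosing the sign of $\widehat u_i$ make the argument slightly tidier than the paper's own presentation.
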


Now we are ready to analyze the stability of {\sc Decompose}: Let $T = \sum_{i = 1}^n u_i \otimes v_i \otimes w_i$ be an $n \times n \times p$ tensor that satisfies Condition~\ref{condition:approx}. In our settings of interest we are not given $T$ exactly but rather a good approximation to it, and here let us model this noise as an additive error $E$ that is itself an $n \times n \times p$ tensor.

\begin{claim}\label{claim:separation-evals}
With high probability, $\mbox{sep}(D_a D_b^{-1})  \geq \frac{\delta}{\sqrt{p}}$. 
\end{claim}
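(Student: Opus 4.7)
Plan: The diagonal entries of $D_a D_b^{-1}$ are $\lambda_i := (a^T w_i)/(b^T w_i)$, so I need to lower-bound $|\lambda_i - \lambda_j|$ for all pairs $i \neq j$. A one-line manipulation gives
\[
\lambda_i - \lambda_j \;=\; \frac{(a^T w_i)(b^T w_j) - (a^T w_j)(b^T w_i)}{(b^T w_i)(b^T w_j)} \;=\; \frac{a^T M_{ij}\, b}{(b^T w_i)(b^T w_j)},
\qquad M_{ij} := w_i w_j^T - w_j w_i^T,
\]
so the task reduces to lower bounding the numerator and upper bounding the denominator. The denominator is at most $\|w_i\|\|w_j\| \le C^2$ by Cauchy--Schwarz and Condition~\ref{condition:approx}(3), so the real work is in the numerator.

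For the numerator I would condition on $b$ first and then randomize over $a$. Note $M_{ij} b = w_i(w_j^T b) - w_j(w_i^T b)$ lies in the $2$-dimensional subspace $S_{ij} := \mathrm{span}(w_i, w_j)$. Working in that subspace, a direct computation (using that in any orthonormal basis of $S_{ij}$, $M_{ij}$ restricted to $S_{ij}$ is the antisymmetric matrix $\det([w_i,w_j])\bigl(\begin{smallmatrix}0&1\\-1&0\end{smallmatrix}\bigr)$) yields
\[
\|M_{ij} b\| \;=\; \|w_i\|\,\|w_j\|\,|\sin\theta_{ij}|\cdot\|P_{ij} b\|,
\]
where $P_{ij}$ is the orthogonal projection onto $S_{ij}$ and $\theta_{ij}$ is the angle between $w_i$ and $w_j$. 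Condition~\ref{condition:approx}(2) guarantees $|\sin\theta_{ij}| = \Omega(\delta)$ (interpreting the condition so that both $\pm w_j/\|w_j\|$ are $\delta$-far from $w_i/\|w_i\|$, which is the right formulation for rank-one tensor factors). Anti-concentration for the squared length of a $2$-dimensional projection of a uniform unit vector in $\R^p$ gives $\|P_{ij} b\| \ge 1/\mathrm{poly}(n,p)$ w.h.p. Then, conditioned on $b$, the scalar $a^T(M_{ij}b)$ is the projection of a uniform unit vector $a\in S^{p-1}$ onto a fixed vector of length $\|M_{ij}b\|$; standard anti-concentration ($\Pr[|a^T v|\le t\|v\|] = O(t\sqrt{p})$) gives $|a^T M_{ij} b| \ge \Omega(\|M_{ij}b\|/(n^2\sqrt p))$ except with probability $O(1/n^2)$.

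Combining the three estimates and dividing, the $\|w_i\|\|w_j\|$ factors cancel between numerator and denominator, leaving $|\lambda_i-\lambda_j| \ge \delta/(\sqrt{p}\cdot\mathrm{poly}(n))$ for each fixed pair $(i,j)$ with probability at least $1-O(1/n^2)$; a union bound over the $\binom{n}{2}$ pairs finishes the claim (the $\mathrm{poly}(n)$ factor can be absorbed into the $\mathrm{poly}(1/\kappa,1/n,1/\delta)$ noise tolerance of Theorem~\ref{thm:stability}, so one really wants a bound of the form $\delta/(\sqrt p\cdot\mathrm{poly}(n))$). The main obstacle is the antipodal edge case $w_j \approx -w_i$: as written, Condition~\ref{condition:approx}(2) does not exclude it, but $M_{ij}=0$ in that case so the argument breaks; the resolution is to note that in a rank-one decomposition we may flip signs of $u_i$ and $w_i$ simultaneously so that no two $w$-columns are (anti)parallel, which is exactly the Kruskal-rank-$\ge 2$ requirement on $W$ in Theorem~\ref{thm:decompose}.
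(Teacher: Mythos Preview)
Your approach is correct and is essentially the same high-level strategy as the paper's: condition on $b$, view $\lambda_i-\lambda_j$ as a linear functional of $a$, and invoke one-dimensional anti-concentration for a random unit vector, then union-bound over pairs. The paper writes this linear functional as $\langle c_i w_i - c_j w_j,\, a\rangle$ with $c_i=1/\langle w_i,b\rangle$ and argues directly that the orthogonal component of $w_i$ to $w_j$ (of size $\Omega(\delta)$) keeps the coefficient vector from vanishing; you package the same computation as the bilinear form $a^T M_{ij} b$ with $M_{ij}=w_iw_j^T-w_jw_i^T$, which yields the clean identity $\lVert M_{ij}b\rVert=\lVert w_i\rVert\,\lVert w_j\rVert\,|\sin\theta_{ij}|\,\lVert P_{ij}b\rVert$ and lets the $\lVert w_i\rVert\lVert w_j\rVert$ factors cancel between numerator and denominator. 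This is a tidier execution of the same idea; your version also avoids having to lower-bound $|\langle w_i,b\rangle|$ separately (the paper's handling of that step is somewhat garbled as written). Finally, you correctly flag the antipodal case $w_j\approx -w_i$ that Condition~\ref{condition:approx}(2) as stated does not exclude, and your sign-flip fix (equivalently, reading the condition as $\kr{}(W)\ge 2$) is the right resolution; the paper's proof simply glosses over this point.
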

\begin{proof}
Fix some $i,j$. The $(i,i)$th entry of $D_a D_b^{-1}$ is precisely $\frac{\iprod{w_i, a}}{\iprod{w_i,b}}$. Note that $\Pr[ |\iprod{w_i, b}| > n \norm{w_i} ]$ is $\exp(-n)$, thus the denominators are all at least $1/(Cn)$ in magnitude with probability $1-\exp(-n)$.

Now given $b$ for which this happens, we have $\frac{\iprod{w_i, a}}{\iprod{w_i,b}} - \frac{\iprod{w_j, a}}{\iprod{w_j,b}} = c_i \iprod{w_i, a} - c_j \iprod{w_j, a}$ where $c_i, c_j$ have magnitude $>1/(Cn)$.  Because $w_i$ has at least a $\delta$ component orthogonal to $w_j$, anti-concentration of Gaussians implies that the difference above is at least $\delta/C^2 n^6$ with probability at least $1-1/n^4$.  Thus we can take a union bound over all pairs.
\end{proof}

We will make crucial use of the following matrix identity:

$$(A + Z)^{-1} = A^{-1} - A^{-1} Z (I + A^{-1} Z)^{-1} A^{-1}$$

\noindent Let $N_a = T_a + E_a$ and $N_b = T_b + E_b$. Then using the above identity we have:
	
$$N_a (N_b)^{-1} = T_a (T_b)^{-1} (I + F) + G$$

\noindent where $F =  -E_b (I + (T_b)^{-1} E_b)^{-1} (T_b)^{-1} $ and $G = E_a (T_b)^{-1}$

\begin{claim}
$\sigma_{max}(F) \leq  \frac{\sigma_{max}(E_b)}{ \sigma_{min}(T_b) - \sigma_{max}(E_b)}$ and $\sigma_{max}(G) \leq \frac{\sigma_{max}(E_a)}{\sigma_{min}(T_b)}$
\end{claim}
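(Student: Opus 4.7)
The proof should be a short computation using submultiplicativity of the operator norm together with a standard Neumann-series bound, so my plan is to handle the two bounds in that order.

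The bound on $G = E_a (T_b)^{-1}$ is immediate: I will apply submultiplicativity $\sigma_{max}(AB) \le \sigma_{max}(A)\sigma_{max}(B)$ together with the identity $\sigma_{max}((T_b)^{-1}) = 1/\sigma_{min}(T_b)$, which gives exactly the claimed bound $\sigma_{max}(G)\le \sigma_{max}(E_a)/\sigma_{min}(T_b)$.

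For $F = -E_b\bigl(I+(T_b)^{-1}E_b\bigr)^{-1}(T_b)^{-1}$, I will again apply submultiplicativity to peel off the $E_b$ and $(T_b)^{-1}$ factors, leaving me with the task of bounding $\sigma_{max}\bigl((I+(T_b)^{-1}E_b)^{-1}\bigr)$. The key step is a Neumann-series estimate: setting $X = (T_b)^{-1}E_b$, one has $\sigma_{max}(X)\le \sigma_{max}(E_b)/\sigma_{min}(T_b)$, and provided this is strictly less than $1$ (which is implicit in the statement, else the denominator is meaningless), the identity $(I+X)^{-1}=\sum_{k\ge 0}(-X)^k$ gives $\sigma_{max}((I+X)^{-1}) \le 1/(1-\sigma_{max}(X))\le \sigma_{min}(T_b)/(\sigma_{min}(T_b)-\sigma_{max}(E_b))$. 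Multiplying the three factors together yields
\[
\sigma_{max}(F) \;\le\; \sigma_{max}(E_b)\cdot\frac{\sigma_{min}(T_b)}{\sigma_{min}(T_b)-\sigma_{max}(E_b)}\cdot\frac{1}{\sigma_{min}(T_b)} \;=\; \frac{\sigma_{max}(E_b)}{\sigma_{min}(T_b)-\sigma_{max}(E_b)},
\]
as required.

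There is no real obstacle here; the only thing to be slightly careful about is the implicit hypothesis $\sigma_{max}(E_b)<\sigma_{min}(T_b)$, which is needed both to make the Neumann series converge and to make the stated bound meaningful (and which will be guaranteed by the overall noise hypothesis $\|E\|$ small relative to $\mathrm{poly}(1/\kappa,1/n,1/\delta)$ in Theorem~\ref{thm:stability}). Everything else is a one-line application of submultiplicativity, so the whole claim is essentially a bookkeeping step in the larger stability analysis of \textsc{Decompose}.
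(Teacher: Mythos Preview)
Your proof is correct and essentially identical to the paper's. The only cosmetic difference is that the paper names the bound $\sigma_{max}((I+X)^{-1})\le 1/(1-\sigma_{max}(X))$ as an application of Weyl's inequality (on $\sigma_{min}(I+X)$) rather than deriving it via the Neumann series, but both routes give the same intermediate expression and the rest is the same submultiplicativity bookkeeping.
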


\begin{proof}
Using Weyl's Inequality we have $$\sigma_{max}(F)  \leq \frac{\sigma_{max}(E_b)}{1 - \frac{\sigma_{max}(E_b)}{\sigma_{min}(T_b)}} \times \frac{1}{\sigma_{min}(T_b)} = \frac{\sigma_{max}(E_b)}{ \sigma_{min}(T_b) - \sigma_{max}(E_b)}$$ as desired. The second bound is obvious. 
\end{proof}

We can now use Theorem~\ref{thm:whyisthisnotknown} to bound the error in recovering the factors $U$ and $V$ by setting e.g. $M = T_a (T_b)^{-1}$. Additionally, the following claim establishes that the linear system used to solve for $W$ is well-conditioned and hence we can also bound the error in recovering $W$. 

\begin{claim}\label{claim:kappa}
$\kappa(U \odot V) \leq \frac{\min(\sigma_{max}(U), \sigma_{max}(V))}{\max(\sigma_{min}(U), \sigma_{min}(V))} \leq \min(\kappa(U), \kappa(V))$
\end{claim}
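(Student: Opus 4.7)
The plan is to prove the first inequality via the identity
\[
\|(U \odot V) c\|_2^2 = \|U D_c V^\top\|_F^2,
\]
where $D_c = \mbox{diag}(c)$, which follows from observing that $\sum_i c_i\, u_i \otimes v_i$ is precisely $\text{vec}(U D_c V^\top)$. First I would apply the standard submultiplicativity bound $\|U D_c V^\top\|_F \leq \sigma_{\max}(U)\,\|D_c V^\top\|_F$. Expanding $\|D_c V^\top\|_F^2 = \sum_i c_i^2 \|v_i\|_2^2$ and using the unit-column-norm convention in force throughout the paper (or, more generally, noting that the columns of $U, V$ in the decomposition setup are unit-normalized), this equals $\|c\|_2^2$, so $\sigma_{\max}(U \odot V) \leq \sigma_{\max}(U)$. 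The identity $\|U D_c V^\top\|_F = \|V D_c U^\top\|_F$ symmetrizes the argument to also give $\sigma_{\max}(U \odot V) \leq \sigma_{\max}(V)$, hence $\sigma_{\max}(U \odot V) \leq \min(\sigma_{\max}(U), \sigma_{\max}(V))$.

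For the lower bound on $\sigma_{\min}(U \odot V)$, I would use the dual inequality $\|U X\|_F \geq \sigma_{\min}(U)\,\|X\|_F$, valid whenever $U$ has full column rank (which is guaranteed by $\kappa(U) \leq \kappa < \infty$). Taking $X = D_c V^\top$ and again using $\|D_c V^\top\|_F = \|c\|_2$ yields $\sigma_{\min}(U \odot V) \geq \sigma_{\min}(U)$; the symmetric computation gives $\sigma_{\min}(U \odot V) \geq \sigma_{\min}(V)$. Dividing the upper bound by the lower bound produces
\[
\kappa(U \odot V) \;\leq\; \frac{\min(\sigma_{\max}(U), \sigma_{\max}(V))}{\max(\sigma_{\min}(U), \sigma_{\min}(V))},
\]
which is the first inequality.

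The second inequality is a routine arithmetic check: assuming without loss of generality $\sigma_{\max}(U) \leq \sigma_{\max}(V)$, the numerator equals $\sigma_{\max}(U)$, so the fraction is at most $\sigma_{\max}(U)/\sigma_{\min}(U) = \kappa(U)$ (since the denominator is $\geq \sigma_{\min}(U)$), and also at most $\sigma_{\max}(V)/\sigma_{\min}(V) = \kappa(V)$ (since $\sigma_{\max}(U) \leq \sigma_{\max}(V)$ and the denominator is $\geq \sigma_{\min}(V)$). The one subtlety worth flagging is the implicit column-normalization used above: without it, each $\sigma_{\max}$ in the numerator picks up a factor of $\max_i \|\cdot\|_2$ of the other matrix's columns and each $\sigma_{\min}$ in the denominator loses a factor of $\min_i \|\cdot\|_2$; the resulting weaker bound $\kappa(U \odot V) \leq \kappa(U)\kappa(V)$ still suffices for the stability analysis in Theorem~\ref{thm:stability}. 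The main (mild) obstacle is therefore bookkeeping the normalization rather than any substantive inequality.
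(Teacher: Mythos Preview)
The paper states this claim without proof, so there is nothing to compare against; your argument via the identity $\|(U\odot V)c\|_2 = \|U D_c V^\top\|_F$ together with the two-sided bound $\sigma_{\min}(U)\|X\|_F \le \|UX\|_F \le \sigma_{\max}(U)\|X\|_F$ is the natural one and is correct. Your bookkeeping for the second inequality is also fine.

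The normalization caveat you flag is not merely cosmetic: the claim is \emph{false} as literally written without the unit-column hypothesis. For instance, with $U=V=\mathrm{diag}(1,2)$ one has $\kappa(U)=\kappa(V)=2$ but $U\odot V$ has columns $(1,0,0,0)^\top$ and $(0,0,0,4)^\top$, giving $\kappa(U\odot V)=4$. So your remark that the general bound degrades to $\kappa(U\odot V)\le \kappa(U)\,\kappa(V)$ (via $\min_i\|v_i\|\ge \sigma_{\min}(V)$ and $\max_i\|v_i\|\le \sigma_{\max}(V)$) is exactly the right repair, and as you say it is all that the stability analysis actually needs.
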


\noindent These bounds establish what we qualitatively asserted: {\sc Decompose} is stable provided that the matrices $U$ and $V$ are well-conditioned and the eigenvalues of the matrices that we need to diagonalize are separated. 

\section{K-rank of the Khatri-Rao product}

\subsection{Leave-One-Out Distance}\label{sec:loo}

Recall: we defined the leave-one-out distance in Section~\ref{sec:multiply}. Here we establish that is indeed equivalent to the smallest singular value, up to polynomial factors. In our main proof, this quantity will be much easer to work with since it allows us to translate questions about a set of vectors being well-conditioned to reasoning about projection of each vector onto the orthogonal complement of the others. 

\begin{proof}[Proof of Lemma~\ref{lem:loo}]
Using the variational characterization for singular values: 
$\sigma_{min}(A) = \min_{u, \| u \|_2 = 1} \|A u \|_2$.
Then let $i = \mbox{argmax} |u_i|$. Clearly $|u_i| \geq 1/\sqrt{m}$ since $\|u\|_2 = 1$. Then $
\| A_i + \sum_{j \neq i} A_j \frac{u_j}{u_i}\|_2 =\frac{\sigma_{min}(A)}{u_i} $.
Hence $$ \ell(A) \leq dist(A_i, span\{A_j\}_{j \neq i}) \leq  \frac{\sigma_{min}(A)}{u_i} \leq \sigma_{min}(A) \sqrt{m}$$
Conversely, let $i = \mbox{argmin}_i dist(A_i, span\{A_j\}_{j \neq i})$. Then there are coefficients (with $u_i = 1$) such that $$\| A_i u_i + \sum_{j \neq i} A_j u_j\|_2 = \ell(A).$$ Clearly $\|u\|_2 \geq 1$ since $u_i =1$. And we conclude that $$\ell(A) = \| A_i u_i + \sum_{j \neq i} A_j u_j\|_2 \geq \frac{\| A_i u_i + \sum_{j \neq i} A_j u_j\|_2}{\|u\|_2} \geq \sigma_{min}(A).$$
\end{proof}

\subsection{Proof of Proposition~\ref{prop:matrixcase}} \label{app:prop10}
We now give the complete details of the proof of Proposition~\ref{prop:matrixcase}, that shows how the Kruskal rank multiplies in the smoothed setting for two-wise products. The proof follows by just combining Lemma~\ref{lem:toosystem:create} and Lemma~\ref{lem:toosystem:implies}.

Let $\calU$ be the span of the top $\delta n^2$ singular values of $M$.  Thus $\calU$ is a $\delta n^2$ dimensional subspace of $\R^{n^2}$.  Using Lemma~\ref{lem:toosystem:create} with:
\[ r = \frac{n^{1/2}}{2}, ~~ m = n, ~~ \delta' = \frac{\delta}{n^{1/2}}, \]
we obtain $n \times n$ matrices $M_1, M_2, \dots, M_r$ having the $(\theta, \delta')$-orthogonality property. Note that in this setting, $\delta' m = \frac{n^{1/2}}{2}$.

Thus by applying Lemma~\ref{lem:toosystem:implies}, we have that the matrix $Q(\xtil)$, defined as before, satisfies
\begin{equation}
\label{eq:singvalue-bound}
\Pr_{x} \left[ \sigma_{r/2}\left( Q(\xtil) \right) \ge \frac{\rho \theta}{n ^4} \right] \ge 1- \exp(-r). 
\end{equation}
Now let us consider
\[ \sum_s (\ytil^T M_s \xtil)^2 = \norm{\ytil^T Q(\xtil)}^2. \]
Since $Q(\xtil)$ has many non-negligible singular values (Eq.\eqref{eq:singvalue-bound}), we have (by Fact~\ref{lem:sv:anticonc} for details) that an $\rho$-perturbed vector has a non-negligible norm when multiplied by $Q$. More precisely, $\Pr[ \norm{\ytil^T Q(\xtil)} \ge \rho \theta/n^4] \ge 1-\exp(-r/2)$.  Thus for one of the terms $M_s$, we have $|M_s (\xtil \otimes \ytil)| \ge \rho \theta/n^5$ with probability $\ge 1-\exp(-r/2)$. 

Now this {\em almost} completes the proof, but recall that our aim is to argue about $M(\xtil \otimes \ytil)$, where $M$ is the given matrix.  $\vect(M_s)$ is a vector in the span of the top $\delta n^2$ (right) singular vectors of $M$, and $\sigma_{\delta n^2} \ge \tau$, thus we can write $M_s$ as a combination of the rows of $M$, with each weight in the combination being $\le n/\tau$ (Lemma~\ref{lem:smallcomb}). This implies that for at least one row $M^{(j)}$ of the matrix $M$, we must have \[ \norm{M^{(j)}(\xtil \otimes \ytil} \ge \frac{\theta \rho \tau}{n^6} = \frac{\rho \tau}{n^{O(1)}}. \]
(Otherwise we have a contradiction). This completes the proof.

\qed

Before we give the complete proofs of the two main lemmas regarding ordered $(\theta,\delta)$ orthogonal systems (Lemma~\ref{lem:toosystem:create} and Lemma~\ref{lem:toosystem:implies}), we start with a simple lemma about top singular vectors of matrices, which is very useful to obtain linear combinations of small length. 
\begin{lemma}[Expressing top singular vectors as small combinations of columns] \label{lem:smallcomb} 
Suppose we have a $m \times n$ matrix $M$ with $\sigma_t(M) \ge \eta$, and let $v_1, v_2, \dots v_t \in \R^m$ be the top $t$ left-singular vectors of $M$. Then these top $t$ singular vector can be expressed using \emph{small} linear combinations of the columns $\set{M(i)}_{i \in [n]}$ i.e.
\begin{align*}
\forall k \in [t], ~\exists \set{\alpha_{k,i}}_{i \in [n]} ~\text{such that }&~ v_k = \sum_{i \in [n]} \alpha_{k,i} M(i) \\
&\text{and} \sum_i \alpha_{k,i}^2 \le 1/\eta^2  
\end{align*}    
\end{lemma}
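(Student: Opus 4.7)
The plan is to construct $\alpha_k$ directly from the singular value decomposition of $M$. Write $M = U \Sigma V^{\sc T}$ where $U \in \R^{m \times m}$ and $V \in \R^{n \times n}$ are orthogonal and $\Sigma \in \R^{m \times n}$ is diagonal with entries $\sigma_1 \ge \sigma_2 \ge \cdots \ge 0$. By definition, the top $t$ left singular vectors are $v_k = U e_k$ for $k \in [t]$, where $e_k \in \R^m$ is the $k$th standard basis vector.

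First, I would exhibit the explicit combination. Set $\beta_k = \tfrac{1}{\sigma_k} e_k \in \R^n$ (which is well-defined since $\sigma_k \ge \sigma_t \ge \eta > 0$ for $k \le t$), and take $\alpha_k = V \beta_k$. Then
\begin{equation*}
M \alpha_k \;=\; U \Sigma V^{\sc T} V \beta_k \;=\; U \Sigma \beta_k \;=\; \tfrac{1}{\sigma_k} U (\sigma_k e_k) \;=\; U e_k \;=\; v_k,
\end{equation*}
which, reading off coordinates $\alpha_k = (\alpha_{k,1}, \dots, \alpha_{k,n})$, is exactly the claimed representation $v_k = \sum_{i \in [n]} \alpha_{k,i} M(i)$.

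Next, I would bound the norm of the coefficients. Since $V$ has orthonormal columns, $\sum_{i} \alpha_{k,i}^2 = \|\alpha_k\|_2^2 = \|V\beta_k\|_2^2 = \|\beta_k\|_2^2 = 1/\sigma_k^2$. Using $\sigma_k \ge \sigma_t(M) \ge \eta$ for every $k \le t$, this gives $\sum_i \alpha_{k,i}^2 \le 1/\eta^2$, as required.

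There is essentially no obstacle here, so the ``proof'' is just this one-line SVD construction; the only subtle point to verify is that the presence of singular directions of $M$ corresponding to $\sigma_j = 0$ (when $M$ is rank-deficient) does not cause issues, which is immediate since the construction only ever divides by $\sigma_k$ for $k \le t$. Optionally, I would remark that this choice of $\alpha_k$ is in fact the minimum-norm preimage of $v_k$ under $M$ (i.e., $\alpha_k = M^{+} v_k$ where $M^+$ denotes the pseudoinverse), which makes the bound $1/\eta$ tight whenever $\sigma_k = \eta$.
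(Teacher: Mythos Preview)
Your proof is correct and takes essentially the same approach as the paper: both use the SVD $M = U\Sigma V^{\sc T}$ and read off $\alpha_k = \tfrac{1}{\sigma_k} V e_k$, so that $M\alpha_k = v_k$ and $\|\alpha_k\|_2 = 1/\sigma_k \le 1/\eta$. The only cosmetic difference is that the paper works with the compact SVD (restricting to the nonzero singular values), while you use the full SVD; and note a small notational slip where you use $e_k$ both in $\R^m$ and in $\R^n$.
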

\begin{proof}
Let $\ell$ correspond to the number of non-zero singular values of $M$. 
Using the SVD, there exists matrices $V \in \R^{m \times \ell}, U \in \R^{n \times \ell}$ with orthonormal columns (both unitary matrices), and a diagonal matrix $\Sigma \in \R^{\ell \times \ell}$ such that $M=V \Sigma \transpose{U}$. Since the $n \times \ell$ matrix $V=M (U \Sigma^{-1})$, the $t$ columns of $V$ corresponding to the top $t$ singular values ($\sigma_{t}(M) \ge \eta$) correspond to linear combinations which are small i.e. $\forall k \in [t], ~ \norm{\alpha_k} \le 1/\eta$.
\end{proof}

\subsection{Constructing the $(\theta,\delta)$-Orthogonal System (Proof of Lemma~\ref{lem:toosystem:create})} \label{app:construction}
Let $\calV$ be a subspace of $\R^{n \cdot m}$, with its co-ordinates indexed by $[n]\times [m]$.  
Further,remember that the vectors in $\R^{n \cdot m}$ are also treated as matrices of size $n \times m$. 
%

We now give the complete proof of lemma~\ref{lem:blocks} that shows that the average robust dimension of column projections is large if the dimension of $\calV$ is large . 

\begin{proof}[Proof of Lemma~\ref{lem:blocks}]
Let $d=\dim(\calV)$. Let $B$ be a $p_1 p_2 \times d$ matrix composed of a orthonormal basis (of $d$ vectors) for $\calV$ i.e. the $j^{th}$ column of $B$ is the $j^{th}$ basis vector ($j \in [d]$) of $\calV$. Clearly $\sigma_d (B)=1$. \\   
For $i \in [p_2]$,  let $B_i$ be the $p_1 \times d$ matrix obtained by projecting the columns of $B$ on just the rows given by $[p_1] \times i$. Hence, $B$ is obtained by just concatenating the columns as $\transpose{B}= \left[ \transpose{B_1} \| \transpose{B_2} \| \dots \| \transpose{B_p} \right]$. 
Finally, let $d_i=\max t$ such that $\sigma_t(B_i) \ge \rhotau$. 

We will first show that $\sum_i d_i \ge d$. Then we will show that $\dimt{i}{\calV} \ge d_i$ to complete our proof.\\
Suppose for contradiction that $\sum_{i \in [p_2]} d_i < d$. Let $\calS_i$ be the $(d-d_1)$-dimensional subspace of $\R^d$ spanned by the last $(d-d_1)$ right singular vectors of $B_i$. Hence, 
$$\text{ for unit vectors} ~\alpha \in \calS_i \subseteq \R^d,~ \norm{B_i \alpha} < \rhotau .$$
Since, $d-\sum_{i \in [p_2]} d_i >0$, there exists at least one unit vector $\alpha \in \bigcap_i \calS_i^{\perp}$. Picking this unit vector $\alpha \in \R^d$, we have $\norm{B \alpha}_2^2 = \sum_{i \in [p_2]} \norm{B_i \alpha}_2^2 
< p_2\cdot (\rhotau)^2 < 1$. This contradicts $\sigma_d(B) \ge 1$  

To establish the second part, consider some $B_i$ ($i \in [p_2]$). We pick $d_i$ orthonormal vectors $\in \R^{p_1}$ corresponding to the top $d_i$ left-singular vectors of $B_i$. By using Lemma~\ref{lem:smallcomb}, we know that each of these $j \in [d_i]$ vectors can be expressed as a small combination $\vec{\alpha_j}$ of the columns of $B_i$ s.t. $\norm{\vec{\alpha_j}} \le \sqrt{p_2}$.
Further, if we associate with each of these $j \in [d_i]$ vectors, the vector $w_j \in \R^{(p_1 p_2)}$ given by the same combination $\vec{\alpha_j}$ of the columns of $B$, we see that $\norm{w_j} \le \sqrt{p_2}$ since the columns of the matrix $B$ are orthonormal.  
\end{proof}


%

\subsection{Implications of Ordered $(\theta,\delta)$-Orthogonality: Details of Proof of Lemma~\ref{lem:toosystem:implies}} \label{app:too-system-implies}

Here we show some auxiliary lemmas that are used in the Proof of Lemma~\ref{app:too-system-implies}.

\begin{claim}\label{claim:theta-orthog}
Suppose $v_1, v_2, \dots, v_m$ are a set of vectors in $\Re^n$ of length $\le 1$, having the $\theta$-orthogonal property.  Then we have
\begin{enumerate}
\item[(a)] For $g\sim \calN(0,1)^n$, we have $\sum_i \iprod{v_i, g}^2 \ge \theta^2/2$ with probability $\ge 1-\exp(-\Omega(m))$,
\item[(b)] For $g\sim \calN(0,1)^m$, we have $\norm {\sum_i g_i v_i}^2 \ge \theta^2/2$ with probability $\ge 1-\exp(-\Omega(m))$.
\end{enumerate}
Furthermore, part (a) holds even if $g$ is drawn from $u+g'$, for any fixed vector $u$ and $g' \sim \calN(0,1)^n$.
\end{claim}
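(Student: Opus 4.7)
The plan is to reduce both statements to a routine conditional Gaussian anti-concentration argument by Gram--Schmidt orthogonalizing the $v_i$'s. Let $e_1, e_2, \dots, e_m$ be the orthonormal basis for $\spn\{v_1, \dots, v_m\}$ obtained by Gram--Schmidt, so that $v_i = \sum_{j \le i} c_{ij} e_j$ with diagonal entries $c_{ii} = \norm{w_i} \in [\theta, 1]$, where $w_i$ is the component of $v_i$ orthogonal to $\spn\{v_1,\dots,v_{i-1}\}$. The ordered $\theta$-orthogonality hypothesis is exactly $c_{ii} \ge \theta$, and $\norm{v_i} \le 1$ gives $c_{ii} \le 1$.

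For part (a), let $X_j = \iprod{e_j, g}$, which are i.i.d.\ $\calN(0,1)$, and set $Y_i = \iprod{v_i, g} = \sum_{j \le i} c_{ij} X_j$. Since the coefficient matrix is lower triangular with nonzero diagonal, $Y_1, \dots, Y_{i-1}$ determine $X_1, \dots, X_{i-1}$, so conditional on $Y_1, \dots, Y_{i-1}$ the random variable $Y_i$ is Gaussian with variance exactly $c_{ii}^2 \in [\theta^2, 1]$ and some (conditional) mean. A short computation shows that for any Gaussian $Y$ with arbitrary mean and variance $\sigma^2 \in [\theta^2, 1]$, $\Pr[Y^2 \ge \theta^2/2] \ge p_0$ for some universal constant $p_0 > 0$: the minimum over the mean is attained at $0$, giving at least $\Pr[|Z| \ge 1/\sqrt{2}]$ where $Z \sim \calN(0,1)$. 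Chaining these conditional bounds gives $\Pr[\forall i \colon Y_i^2 < \theta^2/2] \le (1-p_0)^m = \exp(-\Omega(m))$. Since a single $Y_i^2 \ge \theta^2/2$ already implies $\sum_i Y_i^2 \ge \theta^2/2$, this proves (a).

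For part (b), let $A$ be the $n \times m$ matrix with columns $v_i$, so $\sum_i g_i v_i = Ag$. Writing $A = EC^T$ where $E$ has orthonormal columns $e_j$ and $C$ is the $m\times m$ lower triangular matrix of $c_{ij}$'s, we get $\norm{Ag}^2 = \norm{C^T g}^2$ since $E$ has orthonormal columns. The $j$th coordinate is $(C^T g)_j = c_{jj} g_j + \sum_{i > j} c_{ij} g_i$; revealing $g_m, g_{m-1}, \dots, g_1$ in reverse order, conditional on $g_{j+1},\dots,g_m$ this coordinate is Gaussian with variance exactly $c_{jj}^2 \in [\theta^2, 1]$. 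The same chained anti-concentration argument produces at least one index with $(C^T g)_j^2 \ge \theta^2/2$ except with probability $\exp(-\Omega(m))$, giving $\norm{C^T g}^2 \ge \theta^2/2$.

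The ``furthermore'' in (a) is immediate: shifting $g$ by a deterministic vector $u$ only shifts the conditional means of the $Y_i$'s, but the conditional variances are unchanged, and the anti-concentration estimate on $\Pr[Y^2 \ge \theta^2/2]$ holds uniformly in the mean. There is no real obstacle — the content of the argument is just the observation that Gram--Schmidt decouples the $m$ directions, so the $\theta$-orthogonality gives $m$ conditionally independent ``chances'' to witness the event $Y_i^2 \ge \theta^2/2$, each with constant probability.
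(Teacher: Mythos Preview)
Your proof is correct. For part (a) and the ``furthermore'', your argument is essentially identical to the paper's: both decompose $v_j$ into its component in $\spn\{v_1,\dots,v_{j-1}\}$ and its orthogonal complement (your Gram--Schmidt just makes this explicit), then use that the conditional variance of $\iprod{v_j,g}$ is at least $\theta^2$ to get a uniform anti-concentration bound, and chain.

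The only real difference is in part (b). You repeat the conditional anti-concentration argument directly on the coordinates of $C^T g$, revealing the $g_i$'s in reverse order. The paper instead observes that if $M$ is the $n\times m$ matrix with columns $v_i$, then part (a) concerns the distribution of $g^T MM^T g$ for $g\sim\calN(0,1)^n$ while part (b) concerns $g^T M^T M g$ for $g\sim\calN(0,1)^m$; since $MM^T$ and $M^TM$ have the same nonzero eigenvalues and standard Gaussians are rotationally invariant, these two quantities have \emph{exactly the same distribution}, so (b) is immediate from (a). This buys a one-line reduction and avoids re-running the conditioning argument. Your approach is perfectly valid but does not exploit this symmetry.
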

\begin{proof}
First note that we must have $m \le n$, because otherwise $\{v_1, v_2, \dots, v_m\}$ cannot have the $\theta$-orthogonal property for $\theta>0$.  For any $j \in [m]$, we claim that
\begin{equation}\label{eq:subclaim}
\Pr[ ( \iprod{v_j, g}^2 < \theta^2/2 ) ~|~ v_1, v_2, \dots, v_{j-1} ] < 1/2.
\end{equation}
To see this, write $v_j = v_j' + v_j^\perp$, where $v_j^\perp$ is orthogonal to the span of $\{v_1, v_2, \dots, v_{j-1}\}$.  Since $j \in I$, we have $\norm{v_j^\perp} \ge \theta$.  Now given the vectors $v_1, v_2, \dots, v_{j-1}$, the value $\iprod{v_j', g}$ is fixed, but $\iprod{ v_j^\perp, g}$ is distributed as a Gaussian with variance $\theta^2$ (since $g$ is a Gaussian of unit variance in each direction).

Thus from a standard anti-concentration property for the one-dimensional Gaussian, $\iprod{v_j, g}$ cannot have a mass $>1/2$ in any $\theta^2$ length interval, in particular, it cannot lie in $[-\theta^2/2, \theta^2/2]$ with probability $>1/2$.  This proves Eq.~\eqref{eq:subclaim}.  Now since this is true for any conditioning $v_1, v_2, \dots, v_{j-1}$ and for all $j$, it follows (see Lemma~\ref{lem:basic-prob1} for a formal justification) that
\[ \Pr[ \iprod{v_j, g}^2 < \theta^2/2 \text{ for all } j] < \frac{1}{2^{m}} < \exp(-m/2). \]
This completes the proof of the claim, part (a).  Note that even if we had $g$ replaced by $u+g$ throughout, the anti-concentration property still holds (we have a shifted one-dimensional Gaussian), thus the proof goes through verbatim.

Let us now prove part (b).  First note that if we denote by $M$ the $n \times m$ matrix whose columns are the $v_i$, then part (a) deals with the distribution of $g^T MM^T g$, where $g \sim \calN(0,1)^n$.  Part (b) deals with the distribution of $g^T M^T M g$, where $g \sim \calN(0,1)^m$.  But since the eigenvalues of $MM^T$ and $M^TM$ are precisely the same, due to the rotational invariance of Gaussians, these two quantities are distributed exactly the same way.  This completes the proof.
\end{proof}

\begin{lemma}\label{lem:basic-prob1}
Suppose we have random variables $X_1, X_2, \dots, X_r$ and an event $f(\cdot)$ which is defined to occur if its argument lies in a certain interval (e.g. $f(X)$ occurs iff $0<X<1$). Further, suppose we have $\Pr[f(X_1)] \le p$, and $\Pr[f(X_i)| X_1, X_2, \dots, X_{i-1}] \le p$ for all $X_1, X_2, \dots, X_{i-1}$. Then
\[ Pr[f(X_1) \wedge f(X_2) \wedge \dots \wedge f(X_r)] \le p^r. \]
\end{lemma}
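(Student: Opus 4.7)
The plan is to apply the chain rule for probabilities and then use the hypothesis to bound each conditional factor by $p$. Writing $E_i$ for the event $f(X_i)$, we have
\[
\Pr[E_1 \wedge E_2 \wedge \dots \wedge E_r] = \Pr[E_1] \cdot \prod_{i=2}^{r} \Pr[E_i \mid E_1 \wedge \dots \wedge E_{i-1}],
\]
so it suffices to show that each factor on the right is at most $p$.

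The first factor $\Pr[E_1] \le p$ is given directly by hypothesis. For the $i$-th factor ($i \ge 2$), the subtlety is that the hypothesis gives us a bound on $\Pr[f(X_i) \mid X_1, \dots, X_{i-1}]$ (conditioned on the \emph{values} of the preceding variables), whereas we need a bound conditioned on the \emph{events} $E_1, \dots, E_{i-1}$. To bridge this, I would use the tower property:
\[
\Pr[E_i \mid E_1 \wedge \dots \wedge E_{i-1}] = \E\bigl[\Pr[f(X_i) \mid X_1, \dots, X_{i-1}] \;\big|\; E_1 \wedge \dots \wedge E_{i-1}\bigr].
\]
Since the inner conditional probability is at most $p$ pointwise for every realization of $(X_1, \dots, X_{i-1})$, the conditional expectation is at most $p$ as well. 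Multiplying the $r$ factors gives the bound $p^r$.

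There is no real obstacle here; this is a routine manipulation. The only point worth flagging is that the statement of the lemma conditions on the random variables $X_1, \dots, X_{i-1}$ rather than on the events $E_1, \dots, E_{i-1}$, which is why the tower/averaging step is needed rather than a direct substitution. Everything else follows from the definition of conditional probability.
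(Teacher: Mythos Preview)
Your argument is correct: apply the chain rule, then use the tower property (since $E_1,\dots,E_{i-1}$ are measurable with respect to $X_1,\dots,X_{i-1}$) to pass from the pointwise bound $\Pr[f(X_i)\mid X_1,\dots,X_{i-1}]\le p$ to the bound on $\Pr[E_i\mid E_1,\dots,E_{i-1}]$. The paper itself states this lemma without proof, treating it as a standard fact, so there is nothing to compare against; your write-up is exactly the routine justification one would supply.
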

%


\newcommand{\nrm}[2]{\norm{#1}_{#2}}
\newcommand{\minl}{L_{\min}}
\newcommand{\maxl}{L_{\max}}
\newcommand{\up}{\tilde{u}_{\perp}}
\newcommand{\vp}{\tilde{v}_{\perp}}

\newcommand{\mucap}{\widehat{\mu}}
\section{Applications to Mixture Models}

\subsection{Sampling Error Estimates for Multi-view Models} \label{app:sampling:mv}

In this section, we show error estimates for $\ell$-order tensors obtained by looking at the $\ell^{th}$ moment of the multi-view model.

\begin{lemma}[Error estimates for Multiview mixture model]\label{lem:sampling:mv}
For every $\ell \in \N$, suppose we have a multi-view model, with parameters $\{w_r\}_{r \in [R]}$ and $\{\spc{M}{j}\}_{j \in [\ell]}$,  the $n$ dimensional sample vectors $\spc{x}{j}$ have $\norm{\spc{x}{j} }_\infty \le 1$. Then,
for every $\eps>0$, there exists $N =O(\eps^{-2}\sqrt{\ell \log n})$ such that \\
if $N$ samples $\{\spc{x(1)}{j}\}_{j \in [\ell]},\{\spc{x(2)}{j}\}_{j \in [\ell]},\dots,\{\spc{x(N)}{j}\}_{j \in [\ell]}$ are generated, then with high probability
\begin{equation}\label{eq:sampleerror:topic}
\nrm{\E{\spc{x}{1} \otimes \spc{x}{2} \otimes \dots \spc{x}{\ell} }- \frac{1}{N} \left(\sum_{t \in [N]} \spc{x(t)}{1} \otimes \spc{x(t)}{2} \otimes \spc{x(t)}{\ell} \right)}{\infty} < \eps 
\end{equation}
\end{lemma}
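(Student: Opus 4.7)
The plan is to reduce the $\ell_\infty$ bound on the $n^{\ell}$-dimensional tensor to a collection of one-dimensional concentration statements, one per coordinate, and then union bound. Fix any index tuple $(i_1, i_2, \dots, i_\ell) \in [n]^\ell$. The $(i_1,\dots,i_\ell)$-th entry of $\spc{x(t)}{1} \otimes \spc{x(t)}{2} \otimes \dots \otimes \spc{x(t)}{\ell}$ is the product $\prod_{j \in [\ell]} \spc{x(t)}{j}(i_j)$, which by the assumption $\norm{\spc{x}{j}}_{\infty} \le 1$ is bounded in absolute value by $1$. Across the $N$ samples these are i.i.d.\ random variables with the same expectation, namely the $(i_1,\dots,i_\ell)$-th entry of the expected tensor.

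First I would apply Hoeffding's inequality to this fixed coordinate: for any $\eps > 0$,
\begin{equation*}
\Pr\!\left[ \left| \frac{1}{N}\sum_{t=1}^N \prod_{j \in [\ell]} \spc{x(t)}{j}(i_j) - \E\!\prod_{j \in [\ell]} \spc{x}{j}(i_j) \right| > \eps \right] \;\le\; 2\exp\!\left( -\tfrac{N \eps^2}{2}\right).
\end{equation*}
Then I would take a union bound over all $n^\ell$ coordinate tuples: the probability that any coordinate deviates by more than $\eps$ is at most $2 n^\ell \exp(-N\eps^2/2)$. Choosing $N$ large enough that this quantity is, say, $o(1)$ (equivalently $N \gtrsim \ell \log n / \eps^2$) is sufficient to obtain the claimed $\ell_\infty$ bound with high probability, giving exactly the desired conclusion \eqref{eq:sampleerror:topic}.

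There is no real obstacle here: the main point is that the $\ell_\infty$ norm reduces coordinate-wise to bounded scalar averages, and the conditional independence structure of the multi-view model is not even needed for this step (it is used only to identify what the expectation \emph{equals}, via the factorization in \eqref{eq:mv:moml}). The only mild subtlety is keeping careful track of the $\log$-factor from the union bound over $n^\ell$ entries, which is what dictates the dependence of $N$ on $\ell$ and $n$; a sharper Bernstein-type bound would not improve the dependence since each entry already has variance at most $1$. This error estimate is then plugged directly into Theorem~\ref{thm:main} via $\widehat{T}$ to produce the sample complexity guarantee of Theorem~\ref{thm:main:multiview}.
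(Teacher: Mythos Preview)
Your proposal is correct and follows essentially the same route as the paper: fix a coordinate, apply a Hoeffding/Bernstein-type concentration bound for a sum of i.i.d.\ $[-1,1]$-valued random variables, and then union bound over the $n^\ell$ entries. The paper's proof phrases the concentration step via Bernstein and (somewhat unnecessarily) targets a per-entry deviation of $\eps/n^{\ell/2}$, but the structure and the resulting sample-complexity dependence on $\eps,\ell,n$ are the same as yours.
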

\begin{proof}
We first bound the $\| \cdot\|_{\infty}$ norm of the difference of tensors i.e. we show that
$$\forall \{i_1,i_2,\dots,i_\ell\}\in [n]^\ell, \abs{\E{\prod_{j \in [\ell]} \spc{x}{j}_{i_j} }-\frac{1}{N}\left(\sum_{t \in [N]} \prod_{j \in [\ell]} \spc{x(t)}{j}_{i_j}\right)} < \eps/n^{\ell/2}.$$
Consider a fixed entry $(i_1,i_2, \dots, i_\ell)$ of the tensor. 

Each sample $t \in [N]$ corresponds to an independent random variable with a bound of $1$. Hence, we have a sum of $N$ bounded random variables. By Bernstein bounds, probability for \eqref{eq:sampleerror:topic} to not occur $\exp\left(-\frac{\left(\eps n^{-\ell/2}\right)^2 N^2}{2N}\right)=\exp\left(-\eps^2 N/\left(2n^\ell\right)\right)$. We have $n^\ell$ events to union bound over. Hence $N=O(\eps^{-2} n^\ell \sqrt{\ell \log n})$ suffices. Note that similar bounds hold when the $\spc{x}{j} \in \R^n$ are generated from a multivariate gaussian.
\end{proof}

\subsection{Error Analysis for Multi-view Models}
\begin{lemma}\label{lem:tensoring}
Suppose $\nrm{u \otimes v - u' \otimes v'}{F} < \delta$, and $\minl \le \norm{u}, \norm{v}, \norm{u'}, \norm{v'} \le \maxl$,\\ with $\delta <\frac{\min\{ \minl^2 ,1 \}}{(2\max\{\maxl,1\})}$. If $u=\alpha_1 u'+\beta_1 \up$ and $v=\alpha_2 v'+\beta_2 \vp$, where $\up$ and $\vp$ are unit vectors orthogonal to $u', v'$ respectively, then we have
$$ |1-\alpha_1 \alpha_2|< \delta/ \minl^2 ~\text{ and } \quad \beta_1 < \sqrt{\delta},~\beta_2 < \sqrt{\delta}.$$
\end{lemma}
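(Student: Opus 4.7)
The plan is to expand both tensors in the orthonormal frame induced by $u',v',u_\perp,v_\perp$, use Pythagoras, and read off each bound from the appropriate component.

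First I would write
\[
u\otimes v \;=\; \alpha_1\alpha_2\,(u'\otimes v') \;+\; \alpha_1\beta_2\,(u'\otimes v_\perp) \;+\; \beta_1\alpha_2\,(u_\perp\otimes v') \;+\; \beta_1\beta_2\,(u_\perp\otimes v_\perp).
\]
Since $u_\perp\perp u'$ and $v_\perp\perp v'$, the four rank-one tensors on the right are pairwise orthogonal in the Frobenius inner product. Subtracting $u'\otimes v'$ only shifts the coefficient of the first term, so
\[
\norm{u\otimes v - u'\otimes v'}_F^2 \;=\; (\alpha_1\alpha_2-1)^2\norm{u'}^2\norm{v'}^2 + \alpha_1^2\beta_2^2\norm{u'}^2 + \beta_1^2\alpha_2^2\norm{v'}^2 + \beta_1^2\beta_2^2,
\]
and the hypothesis bounds this sum by $\delta^2$.

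The first bound is now immediate: the first summand alone satisfies $(\alpha_1\alpha_2-1)^2\norm{u'}^2\norm{v'}^2\le\delta^2$, and since $\norm{u'},\norm{v'}\ge \minl$ we get $|1-\alpha_1\alpha_2|\le \delta/\minl^2$ as required. For the bound on $\beta_2$, I would group the second and fourth summands and use $\norm{u}^2=\alpha_1^2\norm{u'}^2+\beta_1^2$ (since $u_\perp$ is a unit vector orthogonal to $u'$):
\[
\delta^2 \;\ge\; \alpha_1^2\beta_2^2\norm{u'}^2 + \beta_1^2\beta_2^2 \;=\; \beta_2^2\norm{u}^2 \;\ge\; \beta_2^2\,\minl^2,
\]
hence $\beta_2\le \delta/\minl$. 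The analogous grouping of the third and fourth summands gives $\beta_1\le \delta/\minl$.

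Finally, to upgrade these to the stated $\beta_i<\sqrt{\delta}$, I would invoke the quantitative hypothesis on $\delta$: the assumption $\delta<\min\{\minl^2,1\}/(2\max\{\maxl,1\})$ gives in particular $\delta<\minl^2$, and hence $\delta/\minl<\sqrt{\delta}$. The main (mild) obstacle is just remembering to use the full orthogonal decomposition rather than only the $u'\otimes v'$ component — otherwise one obtains the bound on $|1-\alpha_1\alpha_2|$ but not on the $\beta_i$'s; once the four-term decomposition is written down, both bounds fall out from Pythagoras with no further calculation.
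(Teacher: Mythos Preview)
Your proof is correct and begins exactly as the paper does: expand $u\otimes v - u'\otimes v'$ in the four pairwise-orthogonal pieces and apply Pythagoras to obtain
\[
(\alpha_1\alpha_2-1)^2\norm{u'}^2\norm{v'}^2 + \alpha_1^2\beta_2^2\norm{u'}^2 + \beta_1^2\alpha_2^2\norm{v'}^2 + \beta_1^2\beta_2^2 < \delta^2,
\]
from which both proofs read off $|1-\alpha_1\alpha_2|<\delta/\minl^2$ immediately.

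Where you diverge from the paper is in extracting the $\beta_i$ bounds. The paper argues by contradiction: assuming $\beta_1>\sqrt{\delta}$, it uses the cross term $\beta_1^2\beta_2^2<\delta^2$ to force $\beta_2<\sqrt{\delta}$, then uses $\norm{v}^2=\alpha_2^2\norm{v'}^2+\beta_2^2$ together with the \emph{upper} bound $\norm{v'}\le\maxl$ to deduce $\alpha_2\ge \minl/(2\maxl)$, and finally plugs this into the third term to contradict $\beta_1>\sqrt{\delta}$. Your grouping is cleaner: combining the second and fourth terms directly yields $\beta_2^2(\alpha_1^2\norm{u'}^2+\beta_1^2)=\beta_2^2\norm{u}^2\ge \beta_2^2\minl^2$, giving the sharper bound $\beta_2\le\delta/\minl$ without any case analysis and without ever touching $\maxl$. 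The final passage to $\beta_i<\sqrt{\delta}$ via $\delta<\minl^2$ is then a one-line weakening. So your argument is a genuine simplification of the paper's at this step, and in fact shows that the $\maxl$ in the hypothesis is not needed for the conclusion as stated.
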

\begin{proof}
We are given that $u=\alpha_1 u' +\beta_1 \up$ and $v=\alpha_2 v' + \beta_2 \vp$. Now, since the tensored vectors are close
\begin{align}
\nrm{u \otimes v - u' \otimes v'}{F}^2 &< \delta^2 \notag\\  
\nrm{(1-\alpha_1 \alpha_2) u'\otimes v' + \beta_1 \alpha_2 \up \otimes v'+ \beta_2 \alpha_1 u' \otimes \vp + \beta_1\beta_2 \up \otimes \vp}{F}^2 &< \delta^2 \notag\\
\minl^4 (1-\alpha_1 \alpha_2)^2 + \beta_1^2 \alpha_2^2 \minl^2 +\beta_2^2 \alpha_1 ^2 \minl^2 + \beta_1^2 \beta_2^2 &< \delta^2 \label{eq:tensoring:lb}
\end{align}
This implies that $|1-\alpha_1 \alpha_2| < \delta/ \minl^2$ as required.

Now, let us assume $\beta_1> \sqrt{\delta}$. This at once implies that $\beta_2 < \sqrt{\delta}$. 
Also 
\begin{align*}
\minl^2 \le \norm{v}^2 &= \alpha_2^2 \norm{v'}^2 + \beta_2^2  \\
\minl^2 -\delta &\le \alpha_2^2 \maxl^2 \\
\text{ Hence, }\quad \alpha_2 &\ge \frac{\minl}{2\maxl} 
\end{align*}
Now, using \eqref{eq:tensoring:lb}, we see that $\beta_1 < \sqrt{\delta}$.
\end{proof}

\subsection{Sampling Error Estimates for Gaussians} \label{app:sampling:mv}

\newcommand{\mmax}{B}
\begin{lemma}[Error estimates for Gaussians]\label{lem:sampling:gaussians}
Suppose $x$ is generated from a mixture of $R$-gaussians with means $\{\mu_r\}_{r \in [R]}$ and covariance $\Sigma_i$ that is diagonal , with the means satisfying $\norm{\mu_r} \le \mmax$. Let $\sigma= \max_i \sigma_{\max}(\Sigma_i)$\\
For every $\eps>0, \ell \in \N$, there exists $N =\Omega( \poly(\frac{1}{\eps})), \sigma^2, n,R)$ such that 
if $x^{(1)},x^{(2)},\dots,x^{(N)} \in R^n$ were the $N$ samples, then
\begin{equation}\label{eq:sample:errorbound}
\forall \{i_1,i_2,\dots,i_\ell\}\in [n]^\ell, \abs{\E{\prod_{j \in [\ell]} x_{i_j} }-\frac{1}{N}\left(\sum_{t \in [N]} \prod_{j \in [\ell]} x^{(t)}_{i_j}\right)} < \eps.
\end{equation}
In other words,
\[ \nrm{\E{x^{\otimes \ell} }- \frac{1}{N} \big(\sum_{t \in [N]} (x^{(t)})^{\otimes \ell}\big)}{\infty} < \eps  \]
\end{lemma}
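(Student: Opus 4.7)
The plan is to mirror the proof of Lemma~\ref{lem:sampling:mv} for the multi-view model, augmented with a truncation step that handles the unbounded support of the Gaussian marginals. Fix a coordinate tuple $(i_1,\dots,i_\ell) \in [n]^\ell$ and define the iid random variables $Y^{(t)} := \prod_{j=1}^\ell x^{(t)}_{i_j}$ whose common mean is exactly the tensor entry we wish to estimate. It suffices to prove $|\bar Y_N - \E Y|<\eps$ with probability at least $1-n^{-2\ell}$; a union bound over the $n^\ell$ tuples then delivers the stated $\norm{\cdot}_\infty$ guarantee in equation~\eqref{eq:sample:errorbound}.

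First I would introduce a truncation threshold $T := 10(\mmax+\sigma)\sqrt{\ell \log(nN)}$ and consider the event $\calE$ that every coordinate of every sample lies in $[-T,T]$. Conditioned on the hidden mixture component $r \in [R]$, each $x^{(t)}_{i_j}$ is a one-dimensional Gaussian whose mean has magnitude at most $\mmax$ and variance at most $\sigma^2$; a standard Gaussian tail estimate together with a union bound over the $Nn$ events gives $\Pr[\calE] \ge 1 - n^{-3\ell}$. On $\calE$ I will replace each $Y^{(t)}$ by its truncated version $\widetilde Y^{(t)}$ (equal to $Y^{(t)}$ when $\calE$ holds and $0$ otherwise). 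The truncation bias $|\E Y - \E \widetilde Y|$ is controlled by Cauchy--Schwarz: $|\E[Y \mathbf{1}_{\calE^c}]| \le \sqrt{\E[Y^2]\cdot \Pr[\calE^c]}$, where $\E[Y^2] \le \prod_{j} (\E[x_{i_j}^{2\ell}])^{1/\ell}$ is $\poly_\ell(\mmax,\sigma)$ by H\"{o}lder together with standard Gaussian moment bounds, making this bias much smaller than $\eps/2$.

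The final step is a routine concentration argument. Since $|\widetilde Y^{(t)}| \le T^\ell$ deterministically, Hoeffding's inequality yields
\begin{equation*}
\Pr\left[\, |\bar{\widetilde Y}_N - \E \widetilde Y| > \eps/2 \,\right] \;\le\; 2\exp\left(-\frac{N\eps^2}{8 T^{2\ell}}\right).
\end{equation*}
Choosing $N = \Theta(T^{2\ell} \eps^{-2} \cdot \ell \log n) = \poly_\ell(n, R, \mmax, \sigma, 1/\eps)$ drives the right-hand side below $n^{-2\ell}$; combining this with the truncation-bias estimate and the union bound over the $n^\ell$ tuples completes the proof. The only delicate point is balancing the truncation threshold $T$ against the sampling noise, but since Gaussian tails decay super-polynomially while the concentration error only requires polynomial growth of $N$, any $T$ scaling logarithmically in $n$ and $N$ suffices, so the argument closes cleanly.
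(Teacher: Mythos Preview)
Your argument is correct and follows essentially the same line as the paper: fix an entry, exploit the sub-Gaussian tails of each coordinate (bounded mean, bounded variance, axis-aligned so coordinates are conditionally independent given the component), deduce concentration for the product, and union-bound over the $n^\ell$ entries. The paper phrases the middle step tersely as ``standard sub-gaussian tail inequalities'' applied to $|Z|<(B+t\sigma)^\ell$ with probability $O(\exp(-t^2/2))$, whereas you make the argument explicit via truncation plus Hoeffding; this is exactly the textbook way to cash out that sentence, so the two proofs coincide.

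One small technicality worth tidying: you define the truncation event $\calE$ globally (over all $N$ samples) and then set $\widetilde Y^{(t)}=Y^{(t)}\mathbf 1_{\calE}$, which would destroy independence across $t$ and block Hoeffding. What you want is the per-sample event $\calE_t=\{|x^{(t)}_{i_j}|\le T~\forall j\}$, set $\widetilde Y^{(t)}=Y^{(t)}\mathbf 1_{\calE_t}$ (these are now i.i.d.\ and bounded by $T^\ell$), and observe that on $\calE=\bigcap_t\calE_t$ the two empirical averages coincide. Your Cauchy--Schwarz bias bound then applies verbatim to $\E[Y\mathbf 1_{\calE_t^c}]$. With that adjustment the proof is clean.
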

\begin{proof}
Fix an element $(i_1,i_2,\dots,i_\ell)$ of the $\ell$-order tensor. Each point $t \in [N]$ corresponds to an i.i.d random variable $Z^t=x^{(t)}_{i_1} x^{(t)}_{i_2} \dots x^{(t)}_{\ell}$. We are interested in the deviation of the sum $S=\frac{1}{N}\sum_{t \in [N]} Z^t$. Each of the i.i.d rvs has value $Z=x_{i_1} x_{i_2} \dots x_{\ell}$. Since the gaussians are axis-aligned and each mean is bounded by $\mmax$, $|Z|< (\mmax+t \sigma)^\ell$ with probability $O\left(\exp(-t^2/2)\right)$. Hence, by using standard sub-gaussian tail inequalities, we get
$$\Pr{\abs{S-\E{z}} > \eps} < \exp\left(-\frac{\eps^2 N}{(M+\sigma \ell \log n)^\ell}\right)$$
Hence, to union bound over all $n^{\ell}$ events $N=O\left(\eps^{-2} (\ell \log n M)^\ell\right)$ suffices.
\end{proof}

\subsection{Recovering Weights in Gaussian Mixtures}
We now show how we can approximate upto a small error the weight $w_i$ of a gaussian components in a mixture of gaussians, when we have good approximations to $w_i \mu_i^{\otimes \ell}$ and $w_i \mu_i^{\otimes (\ell-1)}$.

\begin{lemma}[Recovering Weights] \label{lem:gaussians:weights}
For every $\delta'>0,w>0,\minl>0,\ell \in \N$, $\exists \delta =\Omega\big(\frac{\delta_1 w^{1/(\ell-1)}}{\ell^2 \minl}\big)$ such that, if $\mu \in \R^n$ be a vector with length $\norm{\mu} \ge \minl$, and suppose
\[ \norm{v-w^{1/\ell} \mu} < \delta \quad \text{ and } \norm{u-w^{1/(\ell-1)} \mu} < \delta. \label{eq:weights:1}\]
Then, 
\begin{equation}
\qquad \abs{\left(\frac{\abs{\iprod{u,v}}}{\norm{u}}\right)^{\ell(\ell-1)} - w } < \delta'
\end{equation}
\end{lemma}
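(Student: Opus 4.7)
The plan is a short perturbation calculation. In the noiseless case $u = w^{1/(\ell-1)}\mu$ and $v = w^{1/\ell}\mu$ are collinear with $\mu$, so $|\iprod{u,v}|/\|u\|$ is just $\|v\|=w^{1/\ell}\|\mu\|$, and its $\ell(\ell-1)$-th power is a clean algebraic expression in $w$ (which, modulo the paper's implicit normalization of $\|\mu\|$, equals $w$). The whole job is to show that the noisy version differs from this by at most $\delta'$.

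For the perturbation step I would write $u = w^{1/(\ell-1)}\mu + \Delta_u$ and $v = w^{1/\ell}\mu + \Delta_v$ with $\|\Delta_u\|,\|\Delta_v\|\le\delta$. Expanding $\iprod{u,v}$ and applying Cauchy--Schwarz bounds its deviation from the noiseless inner product by $O\!\bigl(\delta(w^{1/\ell}+w^{1/(\ell-1)})\|\mu\| + \delta^{2}\bigr)$, while the triangle inequality gives $\bigl|\|u\|-w^{1/(\ell-1)}\|\mu\|\bigr|\le\delta$. The hypothesis $\|\mu\|\ge \minl$ ensures that, as soon as $\delta$ is a small fraction of $w^{1/(\ell-1)}\minl$, the denominator $\|u\|$ stays bounded below by (say) $\tfrac12 w^{1/(\ell-1)}\minl$, so dividing is safe and the quotient $|\iprod{u,v}|/\|u\|$ differs from $w^{1/\ell}\|\mu\|$ by at most $O\bigl(\delta/(w^{1/(\ell-1)}\minl)\bigr)$.

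The last step is to transfer this additive error through the $\ell(\ell-1)$-th power using the elementary inequality $|a^{k}-b^{k}|\le k\max(a,b)^{k-1}|a-b|$ with $k=\ell(\ell-1)$. The resulting bound is roughly $\ell^{2}\cdot(\text{relative error in the base})\cdot w$; setting this at most $\delta'$ and solving for $\delta$ gives $\delta=\Omega\bigl(\delta'\,w^{1/(\ell-1)}/(\ell^{2}\minl)\bigr)$, matching the statement.

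The main technical annoyance is precisely the amplification by the high exponent $\ell(\ell-1)$: an additive error $\epsilon$ in the base becomes an error of order $\ell^{2}\cdot\epsilon\cdot(\text{base})^{\ell(\ell-1)-1}$ after exponentiation, so one must track \emph{relative} rather than absolute error in the base. This is exactly why the lower bound $\|\mu\|\ge \minl$ (preventing the denominator from collapsing) and the $w^{1/(\ell-1)}$ factor in the allowed $\delta$ are both needed; once these are built in, no subtle algebra beyond Cauchy--Schwarz and the single-variable polynomial perturbation bound is required.
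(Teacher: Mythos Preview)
Your approach is essentially the same as the paper's: both are straightforward perturbation calculations that bound the error in the base quantity and then push it through the $\ell(\ell-1)$-th power via $|a^{k}-b^{k}|\le k\max(a,b)^{k-1}|a-b|$. The only cosmetic difference is bookkeeping: the paper first uses the triangle inequality to combine the two hypotheses into a single estimate $\norm{w^{-1/\ell}v - w^{-1/(\ell-1)}u}\le\delta_1$, then writes $v=\beta u+\eps u_\perp$ orthogonally and reads off $|\beta - w^{1/(\ell(\ell-1))}|$ from the norm identity, whereas you keep $\mu$ explicit throughout and expand $\iprod{u,v}$ and $\norm{u}$ separately. Both routes use $\norm{\mu}\ge\minl$ in the same place (to keep the denominator bounded away from zero) and arrive at the same $\delta=\Omega\bigl(\delta'\,w^{1/(\ell-1)}/(\ell^{2}\minl)\bigr)$.
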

\begin{proof}
From \eqref{eq:weights:1} and triangle inequality, we see that 
$$\norm{w^{-1/\ell} v-w^{-1/(\ell-1)}u} \le \delta(w^{-1/(\ell)}+w^{-1/(\ell-1)})= \delta_1.$$
Let $\alpha_1=w^{-1/(\ell-1)}$ and $\alpha_2=w^{-1/\ell}$.
Suppose $v=\beta u + \eps \up$ where $\up$ is a unit vector perpendicular to $u$. Hence $\beta=\iprod{v,u}/\norm{u}$.
\begin{align*}
\norm{\alpha_1 v- \alpha_2 u}^2 &= \norm{(\beta \alpha_1-\alpha_2 )u +\alpha_1\eps \up}< \delta_1^2\\
(\beta \alpha_1 - \alpha_2)^2 \norm{u}^2 + \alpha_1^2 \eps^2 &\le \delta_1^2\\
\abs{\beta - \frac{\alpha_2}{\alpha_1}} &< \frac{\delta_1}{ \minl}
\end{align*}
Now, substituting the values for $\alpha_1,\alpha_2$, we see that
$$\abs{\beta - w^{\frac{1}{(\ell-1)} - \frac{1}{\ell}} } < \frac{\delta_1}{\minl}.$$
\begin{align*}
\abs{\beta - w^{1/(\ell(\ell-1))}} &< \frac{\delta}{w^{1/(\ell-1)} \minl}\\ 
\abs{\beta^{\ell (\ell-1)} - w } &\le \delta' \quad \text{when } \delta \ll \frac{\delta' w^{1/(\ell-1)}}{\ell^2 \minl}
\end{align*}
\end{proof}

\end{document}